\documentclass[12pt]{extarticle}
\usepackage[T1]{fontenc}
\usepackage{lmodern}
\usepackage[multiple]{footmisc}
\usepackage{mathrsfs}
\usepackage{fullpage}
\usepackage{amssymb}
\usepackage[authoryear]{natbib}

\usepackage{amsmath}
\usepackage{amsthm}
\usepackage{amsfonts}
\usepackage{mathtools}
\usepackage{bm}
\usepackage{array}
\usepackage{xcolor}
\usepackage{latexsym}
\usepackage{tikz}
\usetikzlibrary{positioning,calc}
\usepackage{setspace}
\usepackage{xurl}
\usepackage[title]{appendix}
\usepackage{caption}
\usepackage{mathabx}
\usepackage{enumitem}

\let\emptyset\relax
\let\varnothing\relax
\DeclareMathSymbol{\emptyset}{\mathord}{symbols}{"3B}
\DeclareMathSymbol{\varnothing}{\mathord}{AMSb}{"3F}
\usepackage[hidelinks]{hyperref}
\hypersetup{
  pdftitle={Serial Top Trading Cycles and the Exclusion Core in Mixed-Ownership Object Allocation Problems},
  pdfauthor={Bettina Klaus, Yang Chen, Anh Tri\^eu},
  pdfsubject={Object allocation with mixed ownership and exclusion core stability},
  pdfkeywords={exclusion core stability, serial top trading cycles, complex endowments, mixed-ownership object allocation problems, von Neumann--Morgenstern stability, path to exclusion core stability}
}

\newtheorem{theorem}{Theorem}
\newtheorem{prop}{Proposition}
\newtheorem{lemm}{Lemma}
\newtheorem{corollary}{Corollary}
\theoremstyle{definition}
\newtheorem{claim}{Claim}
\newtheorem{exam}{Example}
\newtheorem{remark}{Remark}
\newtheorem{defi}{Definition}

\makeatletter
\renewcommand{\paragraph}{\@startsection{paragraph}{4}{\z@}%
  {4.5\p@ \@plus1.5\p@ \@minus1\p@}{-1em}{\normalfont\normalsize\bfseries}}
\renewcommand{\subparagraph}{\@startsection{subparagraph}{5}{\z@}%
  {4.5\p@ \@plus1.5\p@ \@minus1\p@}{-1em}{\normalfont\normalsize\itshape}}
\newcommand{\proofcase}[1]{\par\addvspace{4.5\p@ \@plus1.5\p@ \@minus1\p@}\noindent\textup{#1}\ }
\makeatother

\newcommand{\SC}{\mathrm{SC}}
\newcommand{\WC}{\mathrm{WC}}
\newcommand{\EC}{\mathrm{EC}}
\newcommand{\STTC}{\mathrm{STTC}}

\title{Serial Top Trading Cycles and the Exclusion Core in Mixed-Ownership Object Allocation Problems\thanks{Y.\ Chen and B.\ Klaus  gratefully acknowledge support from the Swiss National Science Foundation (SNSF) through Project 100018-231777. We thank Jacob Coreno for helpful comments and discussions, and participants at the 18th Meeting of the Society for Social Choice and Welfare (SSCW 2026), European Meeting on Game Theory 2026 (SING21), 8th International Workshop on Matching Under Preferences (MATCH-UP 2026), and seminars at Maastricht University for helpful comments. During the preparation of this work the authors used Claude (Anthropic) and ChatGPT (OpenAI) in order to check proofs and examples ex post. After using these tools, the authors carefully checked and edited all content themselves and take full responsibility for the content of this paper.}}

\author{Bettina Klaus\thanks{\textit{Corresponding author.} Faculty of Business and Economics, University of Lausanne, Internef 538, 1015 Lausanne, Switzerland; e-mail: \href{mailto:bettina.klaus@unil.ch}{\tt bettina.klaus@unil.ch}.} \and  Yang Chen\thanks{Faculty of Business and Economics, University of Lausanne, CH-1015 Lausanne, Switzerland; e-mail: \href{mailto:yang.chen@unil.ch}{\tt yang.chen@unil.ch}.} \and Anh Tri\^eu\thanks{Department of Quantitative Economics, Maastricht University, P.O. Box 616, 6200 MD Maastricht, The Netherlands; e-mail: \href{mailto:anh.trieu@maastrichtuniversity.nl}{\tt anh.trieu@maastrichtuniversity.nl}.}}

\date{\today}

\begin{document}

\maketitle

\begin{abstract}
\linespread{1.05}\selectfont
\noindent
We study \citeauthor{balbuzanov2019endowments}'s (\citeyear{balbuzanov2019endowments}) exclusion core for object allocation problems with complex endowments. We focus on mixed-ownership object allocation problems in which each object is either individually owned by one agent or publicly owned by all agents. Given a priority order over agents, an endowment structure, an initial allocation, and agents' preferences, we construct serial top trading cycles (STTC) graphs and define the associated STTC algorithm. The algorithm is the central tool for our analysis of exclusion-core-stable allocations.

First, we show that, among domains defined by restrictions on endowment structures, the mixed-ownership domain is the largest on which the STTC algorithms characterize the exclusion core. Second, we provide a simple test for exclusion core stability: an allocation is exclusion core stable if and only if it is a fixed point of the STTC algorithm initialized at that allocation, independently of the priority order. Third, we show that the exclusion core is a von Neumann--Morgenstern (\citeyear{vonneumann1944theory}) stable set with respect to exclusion-domination. Finally, we study a dynamic exclusion-blocking process and show that, from any allocation, there exists a self-contained exclusion-blocking path to the exclusion core.\par\smallskip

\noindent {\it JEL classification:} C71; C78; D61; D47.\par\smallskip

\noindent {\it Keywords:} exclusion core stability; serial top trading cycles; complex endowments; mixed-ownership object allocation problems; von Neumann--Morgenstern stability; path to exclusion core stability.
\end{abstract}
\pagebreak

\section{Introduction}

Object allocation problems with indivisible goods arise in many economic settings, including housing markets, school choice, and the assignment of positions or resources when monetary transfers are infeasible or undesirable. Two benchmark models structure much of this literature. In \textbf{Shapley--Scarf housing problems} \citep{SHAPLEY197423}, each agent initially owns exactly one object, and ownership is purely private. In \textbf{house allocation problems} \citep{hylland1979}, all objects are publicly owned. \textbf{House allocation problems with existing tenants} \citep{abdulkadirouglu1999house} lie between these two models: some agents initially own exactly one object, the remaining objects are publicly owned, and other agents own none. These benchmark models are important, but many economically relevant settings involve more complex endowment structures.\par\smallskip

In such environments, an agent may own several objects or none, and several agents may jointly own an object. Despite this flexibility in endowments, each agent ultimately receives at most one object. \citet{balbuzanov2019endowments} call these problems ``\textbf{simple economies}''; we call them \textbf{object allocation problems with complex endowments}. Ownership may be private, public, or shared among agents.\par\smallskip

Standard core concepts are grounded in the interpretation that owners can consume or trade their endowed objects.
Beyond the benchmark models, the cores built on this interpretation lose their appeal:
the strong core may be empty and the weak core may contain wasteful allocations. \citeauthor{balbuzanov2019endowments} instead interpret ownership as a system of exclusion rights: owners can exclude other agents from an object, either directly or through chains of dependence. Their \textbf{exclusion core} restricts blocking to deviations that are feasible through these rights. It is always nonempty and contains only Pareto efficient allocations.\par\smallskip

Related recent work has further developed the analysis of object allocation with complex endowments. \citet{balbuzanov2024property} extend the exclusion core to economies with production and generalize the top trading cycles algorithm to that setting. \citet{SUN2026106135} introduce two extensions of the strong core, the effective core and a refinement of it called the rectified core, and show that the effective core neither contains nor is contained in the exclusion core. \citet{zhang2026consistent} applies the consistency principle to core concepts in this setting and introduces the refined exclusion core, a consistent refinement that admits unaffected agents into blocking coalitions under restrictions. \citet{ISHIDA2026287} study partitioned ownership and its implications for traditional core concepts and the exclusion core.\par\smallskip

We focus on \textbf{mixed-ownership object allocation problems}, a subclass of object allocation problems with complex endowments. In this subclass, each object is either individually owned by one agent or publicly owned by all agents. Proper co-ownership is ruled out, but an agent may own several objects. This class includes all three benchmark models described above. Importantly, mixed ownership does not eliminate the difficulties identified by \citet{balbuzanov2019endowments}: even without proper co-ownership, the strong core may still be empty and the weak core may still contain wasteful allocations (Example~\ref{exam:weak-strongcore}). The exclusion core therefore remains an appropriate stability concept.\par\smallskip

To analyze exclusion core stability in mixed-ownership object allocation problems, we introduce the \textbf{serial top trading cycles (STTC) algorithm}. Given a priority order, an endowment structure, an initial allocation, and agents' preferences, it constructs a sequence of directed graphs and executes top trading cycles sequentially. Unlike the generalized top trading cycles (GTTC) algorithm of \citet{balbuzanov2019endowments}, it allows for an arbitrary initial allocation and uses a single priority order over agents rather than object-specific priority structures. These features make the STTC algorithm the central tool for both generating exclusion-core-stable allocations and analyzing stability from an arbitrary allocation.\par\smallskip

Two preliminary results support the main analysis. First, every allocation produced by the STTC algorithm is exclusion core stable (Proposition~\ref{prop.sttc-exclusion}). Second, every exclusion-core-stable allocation can be obtained by running the STTC algorithm from the null allocation under a suitable priority order (Proposition~\ref{prop.corecharacterization}). Proposition~\ref{prop.corecharacterization} is closely related to \citet[][Theorem~4]{balbuzanov2019endowments} and \citet[][Proposition~5]{zhang2026consistent}; in contrast to these results, it uses a single common priority order for all objects.\par\smallskip

We then establish four main results. First, when a properly co-owned object is present, the STTC allocations may form a proper subset of the exclusion core (Theorem~\ref{thm.strictsuperset}). It follows that, among domains defined by restrictions on endowment structures, the mixed-ownership domain is the largest on which the STTC algorithms characterize the exclusion core (Corollary~\ref{cor:maxdomain}). This maximality result explains our focus on mixed ownership.
Second, we provide an exclusion-core-stability test (Theorem~\ref{theorem:coretest}). An allocation is exclusion core stable if and only if it is the outcome of the STTC algorithm initialized at that allocation, independently of the priority order.
Third, we show that the exclusion core is a von Neumann--Morgenstern (\citeyear{vonneumann1944theory}) stable set with respect to exclusion-domination (Theorem~\ref{theorem:vNMstability}).
Fourth, we capture the dynamic aspect of exclusion-blocking: from any allocation, there exists a self-contained exclusion-blocking path to the exclusion core (Theorem~\ref{theorem:pathtoexclusion}). As a consequence, a random process whose transition probabilities depend only on the current allocation and assign positive probability to every available post-blocking allocation reaches an exclusion-core-stable allocation with probability one (Corollary~\ref{coro:randompath}).\par\smallskip

We next clarify how our results relate to those of \citet{balbuzanov2019endowments}. Remark~\ref{remark1} explains how Proposition~\ref{prop.sttc-exclusion} extends when the common priority order is replaced by arbitrary edge choices. Remark~\ref{remark2} extends Proposition~\ref{prop.sttc-exclusion} to the full class of object allocation problems with complex endowments: every allocation generated by an owner-respecting flexible TTC algorithm, in which an object points to one of its owners whenever an owner remains, is exclusion core stable. For algorithms in which each object's edge is determined by a fixed priority order over agents, this also follows from Corollary~1(a) and Lemma~3(a) of \citet{balbuzanov2019endowments}; flexible TTC algorithms are more general, since an object may point to different remaining owners at different steps. Proposition~4 of \citet{balbuzanov2019endowments} establishes von Neumann--Morgenstern stability for a subclass of these problems. Our Theorem~\ref{theorem:vNMstability} is closely related, but we state and prove it separately because its constructive proof is intertwined with the path result of Section~\ref{subsec.paths}.\par\smallskip

For mixed-ownership problems, we prove von Neumann--Morgenstern stability jointly with the path-to-exclusion-core-stability result. The constructive proof links the myopic, step-by-step self-contained exclusion-blockings forming the path to the farsighted external-stability blocking by the final STTC allocation.\par\smallskip

Our results differ in how far they extend beyond mixed ownership. Exclusion core stability of STTC allocations is the most robust: it holds for the full class of object allocation problems with complex endowments (Remark~\ref{remark2}). The exclusion core characterization and the fixed-point test are the least robust: a single properly co-owned object is enough for both to fail.
The construction proving Theorem~\ref{thm.strictsuperset} yields an exclusion-core-stable allocation that is a fixed point of no owner-respecting flexible TTC algorithm.\par\smallskip

Internal stability holds by definition for every endowment structure; hence, for von Neumann--Morgenstern stability, only external stability is at issue. Example~\ref{exam:noexstable} exhibits a complex-endowment problem whose exclusion core is not externally stable. Both external stability and the existence of blocking paths, however, reach beyond mixed ownership. Appendix~\ref{appendix:acyclic} shows that an endowment structure is acyclic in the sense of \citet{balbuzanov2019endowments} exactly when either no object is properly co-owned or all objects share a single set of owners (Proposition~\ref{prop.acyclicchar}), and that both conclusions continue to hold on this larger class (Corollaries~\ref{cor.acyclicvNM} and~\ref{cor.acyclicpath}): Theorems~\ref{theorem:vNMstability} and~\ref{theorem:pathtoexclusion} cover the first case and Theorems~\ref{thm.uniformvNM} and~\ref{thm.uniformpath} the second. Acyclicity is not necessary for external stability either: Example~\ref{exam:beyondacyclic} exhibits an endowment structure that is not acyclic and for which the exclusion core is externally stable at every preference profile. Whether a self-contained exclusion-blocking path to the exclusion core always exists under proper co-ownership remains open. Nevertheless, such a path exists in both examples in Appendix~\ref{appendix:acyclic}: a path of length two in Example~\ref{exam:noexstable} and a path of length one in Example~\ref{exam:beyondacyclic}.\par\smallskip

The remainder of the paper is organized as follows. Section~\ref{sec.pre} introduces the model and the relevant core concepts. Section~\ref{sec.sttc} defines the STTC algorithm and establishes its basic relationship with the exclusion core. Section~\ref{sec.mainresult} presents our main results, and Section~\ref{sec.conclude} concludes. Appendices~\ref{appendix:prelim}--\ref{appendix:main} contain all proofs. Appendix~\ref{appendix:STTCexample} illustrates, in a detailed example, the STTC algorithm, Proposition~\ref{prop.corecharacterization}, and Theorems~\ref{theorem:coretest}--\ref{theorem:pathtoexclusion}. Appendix~\ref{appendix:acyclic} studies acyclic endowment structures and the two complex-endowment examples discussed above.

\section{Preliminaries}\label{sec.pre}

\subsection{Model}\label{subsec.model}

We consider a finite \textbf{set of agents} $\bm{N = \{1, \ldots, n\}}$ and a finite \textbf{set of objects} $\bm{O = \{o_1, \ldots, o_m\}}$. The numbers of agents and objects may differ. \textbf{Ownership is mixed}: each object is either individually owned by exactly one agent or publicly owned by all agents. For each agent $i \in N$, we denote the set of objects he individually owns by $\omega_i$. The list $\bm{\omega=(\omega_1,\ldots,\omega_n)}$ then describes the \textbf{endowment structure}. We denote the \textbf{set of individually owned objects} by $\bm{O_I=\bigcup_{i\in N}\omega_i}$ and the set of \textbf{public objects} by $\bm{O_P}=O \setminus O_I$.\par\smallskip

A nonempty subset $S \subseteq N$ is called a \textbf{coalition}. For each coalition $S\subsetneq N$, the \textbf{set of objects collectively owned by $\bm{S}$} is denoted by $\bm{\omega(S)} = \bigcup_{i\in S}\omega_i \subseteq O_I$. For the grand coalition, we define $\bm{\omega(N)}:=O$. Thus, public ownership is represented collectively rather than through the individual sets $\omega_i$: public objects do not belong to any $\omega_i$, but they belong to the endowment of the grand coalition. To illustrate the endowment structure, consider the following mixed-ownership situation with three agents $N =\{1, 2, 3\}$ and four objects $O=\{a, b, c, d\}$. If agent~$1$ owns objects $a$ and $b$, agent~$2$ owns object $c$, and agent~$3$ owns no object, then $\omega_1 = \{a,b\}$, $\omega_2 = \{c\}$, $\omega_3 = \emptyset$, $\omega(\{1,2\}) = \{a,b,c\}$, $\omega(\{1,3\})=\{a,b\}$, $\omega(\{2,3\}) = \{c\}$, and $\omega(\{1,2,3\}) = \{a,b,c,d\}$.\par\smallskip

Each agent has an outside option, modeled as the \textbf{null object} $\bm{\varnothing}$,\footnote{Throughout, $\varnothing$ denotes the null object, whereas $\emptyset$ denotes the empty set.} which may literally represent ``receiving nothing.'' The null object has enough capacity for all agents. When we need to distinguish the elements of $O$ from the null object, we call them \textbf{real objects}.\par\smallskip

\citet{balbuzanov2019endowments} study a more general class of object allocation problems. They call these problems ``simple economies''; we call them \textbf{object allocation problems with complex endowments}. In their model, objects may be co-owned by multiple agents.\footnote{\citet{balbuzanov2019endowments} consider the following more general model. For each object $o \in O$, $S^o \subseteq N$ denotes the set of owners of object $o$. If an object $o\in O$ is individually owned by agent $i\in N$, then $S^{o} = \{i\}$, and if it is publicly owned, then $S^{o} = N$. \citet{balbuzanov2019endowments}, however, also allow proper co-ownership, that is, ownership of an object by a coalition $S^o$ with $1<|S^o|<n$, which is not allowed in our model. Because this broader model permits complex endowment structures, we use the term ``object allocation problems with complex endowments'' for the problems introduced by \citet{balbuzanov2019endowments}.\label{footnote:BKmodel1}} Our mixed-ownership model is the subclass in which each object is either individually owned by exactly one agent or publicly owned by all agents.\footnote{In one result, \citet[][Lemma~3(b)]{balbuzanov2019endowments} consider the same model as ours, referring to it as a ``simple economy where every house is either privately or publicly owned.''}\par\smallskip

We assume that each agent $i\in N$ has a preference relation $R_i$ over the set of objects and the null object $O \cup \{\varnothing\}$, which is \textit{complete}, \textit{transitive}, and \textit{antisymmetric}. For an agent $i \in N$ and two objects $a,b \in O \cup \{\varnothing\}$, we write $a \mathbin{R_i} b$ if agent $i$ weakly prefers object $a$ to object $b$, and $a \mathbin{P_i} b$ if agent $i$ strictly prefers object $a$ to object $b$. Thus, $P_i$ is the strict part of $R_i$, and, by antisymmetry, $a \mathbin{R_i} b$ if and only if $a \mathbin{P_i} b$ or $a=b$. Let $R:=(R_i)_{i\in N}$ denote a preference profile. A \textbf{mixed-ownership object allocation problem} (henceforth, a \textbf{problem}) is specified by a quadruple $(N,O,\omega,R)$. Throughout, the null object $\varnothing$ is implicitly available in every problem. Whenever $(N,O,\omega)$ is fixed, we simply denote a problem by $R$.\par\smallskip

We assume that each agent demands exactly one object in $O \cup \{\varnothing\}$. An \textbf{allocation} $\mu: N\to O\cup\{\varnothing\}$ assigns each agent $i\in N$ an \textbf{allotment} $\mu(i)\in O\cup\{\varnothing\}$, and assigns each object $o\in O$ to at most one agent. We also say that agent $i$ \textbf{occupies} his allotment $\mu(i)$ at $\mu$. We also write an allocation as a list of its allotments, $\mu=(\mu_1,\ldots,\mu_n)$, where the $i$-th entry is agent~$i$'s allotment. We denote the set of allotments of the members of a coalition $S$ at allocation $\mu $ by $\mu(S) = \bigcup_{i\in S}\{\mu(i)\}$.\par\smallskip

Our mixed-ownership model includes several classical object allocation models as special cases:
 \setlist{nolistsep}
\begin{itemize}[noitemsep]
\item when each agent owns exactly one object and the number of agents equals the number of objects, our model reduces to the \textbf{Shapley--Scarf housing market model} \citep{SHAPLEY197423};
\item when all objects are public, our model reduces to the \textbf{house allocation model} \citep{hylland1979};
\item when some agents own exactly one object and all remaining objects are public, our model reduces to the \textbf{house allocation model with existing tenants} \citep{abdulkadirouglu1999house}.
\end{itemize}

\subsection{Core stability notions}\label{subsec.core}

We begin with the classical notions of weak blocking and strong blocking. They define the strong core and the weak core, respectively. This distinction and its consequences for markets with indivisible goods go back to \citet{ROTH1977131}. These definitions are grounded in the interpretation that owners can freely \textbf{consume} or \textbf{trade} their endowed objects. In line with this interpretation, we say that a coalition \textbf{$\bm{S}$ is trade-enforcing at allocation $\bm{\sigma}$} if $\sigma(S) \subseteq \omega(S) \cup \{\varnothing\}$.

\begin{defi}[\textbf{Strong core stability}]
Given a problem $R$, allocations $\mu$ and $\sigma$, and a coalition $S$, we say that \textbf{$\bm{S}$ weakly blocks $\bm{\mu}$ via $\bm{\sigma}$} if
    \begin{enumerate}
        \item coalition $S$ is trade-enforcing at $\sigma$; and
        \item each agent in $S$ is weakly better off at $\sigma$, and at least one agent in $S$ is strictly better off at $\sigma$; i.e., for each $i \in S$, $\sigma(i) \mathbin{R_i} \mu(i)$, and, for some $j \in S$, $\sigma(j) \mathbin{P_j} \mu(j)$.
    \end{enumerate}
The \textbf{strong core} for a problem $R$, denoted by $\SC(R)$, consists of all allocations that cannot be weakly blocked by any coalition. An allocation in the strong core is \textbf{strongly core stable}.
\end{defi}

\begin{defi}[\textbf{Weak core stability}]
   Given a problem $R$, allocations $\mu$ and $\sigma$, and a coalition $S$, we say that \textbf{$\bm{S}$ strongly blocks $\bm{\mu}$ via $\bm{\sigma}$} if
    \begin{enumerate}
        \item coalition $S$ is trade-enforcing at $\sigma$; and
        \item each agent in $S$ is strictly better off at $\sigma$, i.e., for each $i \in S$, $\sigma(i) \mathbin{P_i} \mu(i)$.
    \end{enumerate}
The \textbf{weak core} for a problem $R$, denoted by $\WC(R)$, consists of all allocations that cannot be strongly blocked by any coalition. An allocation in the weak core is \textbf{weakly core stable}.
\end{defi}

Strong core allocations also satisfy the following classical efficiency criterion.

\begin{defi}[\textbf{Pareto efficiency}]
Given a problem $R$ and allocations $\mu$ and $\sigma$, we say that \textbf{$\bm{\sigma}$ Pareto dominates $\bm{\mu}$} if each agent in $N$ is weakly better off at $\sigma$ and at least one agent in $N$ is strictly better off at $\sigma$; i.e., for each $i\in N$, $\sigma(i)\mathbin{R_i}\mu(i)$, and, for some $j\in N$, $\sigma(j)\mathbin{P_j}\mu(j)$.
An allocation $\mu$ is \textbf{Pareto efficient} if no allocation $\sigma$ Pareto dominates it.
\end{defi}

We now introduce exclusion-blocking, following \citet{balbuzanov2019endowments}. The central idea is to interpret ownership as a system of exclusion rights rather than solely as a right to consume or trade endowed objects. Ownership allows a coalition to evict other agents from objects it owns and therefore controls.\par\smallskip

For any coalition $S$, we call $\bm{\omega(S)}$ its \textbf{direct eviction set}.\footnote{In the mixed-ownership model, $\omega(S)$ is defined in Section~\ref{subsec.model}. For object allocation problems with complex endowments, Footnote~\ref{footnote:BKmodel1} defines $S^o$ as the set of owners of object $o$ and $\omega(S)=\{o\in O\mid S^o\subseteq S\}$.} It is the set of objects over which coalition $S$ has direct exclusion rights. By exercising these rights, coalition $S$ can directly evict the current occupants of these objects. The set $\omega(S)$ depends only on the endowment structure; the prevailing allocation determines which agents occupy these objects.\par\smallskip

By exercising its direct exclusion rights, a coalition may become able to exercise further exclusion rights. Suppose that coalition $S$ can evict agent $j$ from his allotment at allocation $\mu$. Agent $j$ then depends on continued accommodation by $S$. In the exclusion-core logic, coalition $S$ can therefore induce agent $j$ to exercise the exclusion rights associated with his endowment. This may allow the coalition to evict further agents and thereby gain access to their exclusion rights. We formalize this chain of dependence recursively.\par\smallskip

Fix an allocation $\mu$ and a coalition $S$. For each set of objects $A\subseteq O$, let $\mu^{-1}(A)$ denote the set of agents whose allotment at $\mu$ belongs to $A$, that is, the agents occupying the objects in $A$: \(\mu^{-1}(A):=\{i\in N\mid \mu(i)\in A\}.\)\par\smallskip

For each $k\geq 0$, the set $S^\mu_k$ will consist of the original coalition and all agents reached through at most $k$ rounds of eviction. Initially, \(S^\mu_0:=S.\)
If the agents in $S^\mu_k$ have been reached, the coalition controls the objects in $\omega(S^\mu_k)$ and can therefore reach the agents currently occupying those objects. Accordingly, define \(S^\mu_{k+1}:=S^\mu_k\cup\mu^{-1}\bigl(\omega(S^\mu_k)\bigr).\)\par\smallskip

The sequence \(S^\mu_0\subseteq S^\mu_1\subseteq S^\mu_2\subseteq\cdots\) is increasing. Since $N$ is finite, it reaches a fixed point. Let $\ell$ be the smallest integer such that \(S^\mu_{\ell+1}=S^\mu_\ell.\)
The \textbf{extended eviction set of coalition $\bm{S}$ at allocation $\bm{\mu}$} is \(\Omega(S\mid\mu):=\omega(S^\mu_\ell).\)
Thus, $\Omega(S\mid\mu)$ contains all objects from which coalition $S$ can evict agents either directly or through a finite chain of evictions.\par\smallskip

Endowments interpreted as exclusion rights give rise to the following counterpart of trade-enforcement. Given allocations $\mu$ and $\sigma$, coalition \textbf{$\bm{S}$ exclusion-enforces $\bm{\sigma}$ at $\bm{\mu}$} if every agent outside $S$ who is worse off at $\sigma$ occupies at $\mu$ an object in the extended eviction set of $S$; formally, for each $j\notin S$, $\mu(j)\mathbin{P_j}\sigma(j)$ implies $\mu(j)\in\Omega(S\mid\mu)$.
Every outsider harmed by the move to $\sigma$ must therefore be removable through exclusion rights available to the coalition, either directly or through the recursive chain of dependence.

\begin{defi}[\textbf{Exclusion core stability}]\label{defi.exclusioncore}
    Given a problem $R$, allocations $\mu$ and $\sigma$, and a coalition $S$, we say that \textbf{$\bm{S}$ exclusion-blocks $\bm{\mu}$ via $\bm{\sigma}$} if
    \begin{enumerate}
        \item coalition $S$ exclusion-enforces $\sigma$ at $\mu$; and
        \item each agent in $S$ is strictly better off at $\sigma$, i.e., for each $i \in S$, $\sigma(i) \mathbin{P_i} \mu(i)$.
    \end{enumerate}
The \textbf{exclusion core} for a problem $R$, denoted by $\EC(R)$, consists of all allocations that cannot be exclusion-blocked by any coalition. An allocation in the exclusion core is \textbf{exclusion core stable}.
\end{defi}

Four properties of extended eviction sets recur throughout the analysis: inclusion of the direct eviction set, coalitional monotonicity, closure under the endowments of agents reached through eviction, and propagation along chains of dependence.

\begin{lemm}[\textbf{Properties of extended eviction sets}]\label{lemm.evictionproperties}
Let $R$ be a problem, let $\mu$ be an allocation, and let $S$ and $S'$ be coalitions.
\begin{enumerate}
\item[\textup{(a)}] $\omega(S)\subseteq\Omega(S\mid\mu)$.
\item[\textup{(b)}] \textbf{\textup{Monotonicity.}} If $S\subseteq S'$, then $\Omega(S\mid\mu)\subseteq\Omega(S'\mid\mu)$.
\item[\textup{(c)}] \textbf{\textup{Closure.}} For an agent $i\in N$, if $\mu(i)\in\Omega(S\mid\mu)$, then $\omega_i\subseteq\Omega(S\mid\mu)$.
\item[\textup{(d)}] \textbf{\textup{Chains of dependence.}} Consider agents $i^1,\ldots,i^q$ and real objects $o^1,\ldots,o^q$ arranged in an ownership--occupancy chain: for each $r\in\{1,\ldots,q\}$, agent $i^r$ owns object $o^r$ (that is, $o^r\in\omega_{i^r}$), and, for each $r\in\{1,\ldots,q-1\}$, agent $i^r$ occupies object $o^{r+1}$ (that is, $\mu(i^r)=o^{r+1}$). We depict such a chain by\vspace{-0.2cm}
\[ o^1\longrightarrow i^1\longrightarrow o^2\longrightarrow i^2 \longrightarrow\cdots\longrightarrow o^q\longrightarrow i^q,\vspace{-0.2cm} \]
where an arrow from an object to an agent records that the agent owns the object, and an arrow from an agent to an object records that the object is his allotment at $\mu$. If coalition $S$ reaches agent $i^q$, in the sense that $i^q\in S$ or $\omega_{i^q}\subseteq\Omega(S\mid\mu)$, then $o^1\in\Omega(S\mid\mu)$.
\end{enumerate}
\end{lemm}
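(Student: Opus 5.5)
All four parts follow from the recursive definition of $S^\mu_k$. Throughout we write $\ell$ for the index at which the sequence stabilizes, so that $\Omega(S\mid\mu)=\omega(S^\mu_\ell)$ and $S^\mu_\ell=S^\mu_{\ell+1}=S^\mu_\ell\cup\mu^{-1}(\omega(S^\mu_\ell))$. Before starting, we check that $\omega(\cdot)$ is monotone under inclusion. For $T\subseteq T'\subsetneq N$ this is immediate from the union definition. If $T'=N$, then $\omega(T')=O$, which contains everything. Note also that $T\subsetneq N$ never has $\omega(T)\supseteq O_P$ unless $O_P=\emptyset$, but monotonicity only needs the inclusion direction, which always holds.

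For (a), $S=S^\mu_0\subseteq S^\mu_\ell$, so monotonicity of $\omega$ gives $\omega(S)\subseteq\omega(S^\mu_\ell)=\Omega(S\mid\mu)$.

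For (b), we show by induction on $k$ that $S^\mu_k\subseteq S'^\mu_k$. The base case $k=0$ is the hypothesis $S\subseteq S'$. For the step, $\omega$ and $\mu^{-1}$ are both monotone, so the union defining $S^\mu_{k+1}$ is contained in the union defining $S'^\mu_{k+1}$. Both sequences stabilize, so we may take $k$ large enough (for example $k=n$) that both have reached their fixed points. Then $S^\mu_{\ell}\subseteq S'^\mu_{\ell'}$, and applying $\omega$ gives the claim.

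For (c), suppose $\mu(i)\in\Omega(S\mid\mu)=\omega(S^\mu_\ell)$. Then $i\in\mu^{-1}(\omega(S^\mu_\ell))\subseteq S^\mu_{\ell+1}=S^\mu_\ell$. If $S^\mu_\ell\subsetneq N$, then $\omega_i\subseteq\bigcup_{j\in S^\mu_\ell}\omega_j=\omega(S^\mu_\ell)$. If $S^\mu_\ell=N$, then $\omega(S^\mu_\ell)=O\supseteq\omega_i$. The only subtle point in the whole proof is this case distinction, which is forced by the special convention $\omega(N)=O$.

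For (d), we argue by backward induction on $r$ from $q$ down to $1$, proving $\omega_{i^r}\subseteq\Omega(S\mid\mu)$ at each stage.
\begin{itemize}
\item Base case $r=q$. Either this is the hypothesis directly, or $i^q\in S$, in which case $\omega_{i^q}\subseteq\omega(S)\subseteq\Omega(S\mid\mu)$. The first inclusion holds by the union definition when $S\subsetneq N$, and trivially when $S=N$; the second is part (a).
\item Inductive step. Suppose $\omega_{i^{r+1}}\subseteq\Omega(S\mid\mu)$. Since $o^{r+1}\in\omega_{i^{r+1}}$, we get $\mu(i^r)=o^{r+1}\in\Omega(S\mid\mu)$. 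Closure (c) then gives $\omega_{i^r}\subseteq\Omega(S\mid\mu)$.
\end{itemize}
At $r=1$ this yields $o^1\in\omega_{i^1}\subseteq\Omega(S\mid\mu)$, which is the claim.

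Nothing here is a real obstacle: the one place requiring care is the $N$-versus-proper-coalition convention for $\omega$, handled in (c).
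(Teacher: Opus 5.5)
Your proof is correct and essentially the paper's: monotonicity of $\omega$ first, then (a) and (b) via the recursion, (c) from the fixed point $S^\mu_{\ell+1}=S^\mu_\ell$, and (d) by downward induction using (c). The only cosmetic differences are that in (d) you carry the invariant $\omega_{i^r}\subseteq\Omega(S\mid\mu)$ rather than $o^r\in\Omega(S\mid\mu)$, and that in (c) you spell out the case split forced by the convention $\omega(N)=O$, which the paper leaves implicit.
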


Appendix~\ref{appendix:prelim} contains the proof of Lemma~\ref{lemm.evictionproperties}.\par\smallskip

The following example illustrates extended eviction sets and exclusion-blocking.\par\smallskip

In the preference tables and displayed allocations in the examples below, boldface marks each agent's allotment at the reference allocation, while underlining marks his allotment at the alternative allocation. An allotment shown in both formats is unchanged.

\begin{exam}[\textbf{Exclusion-blocking}]\label{exam:extendedendowment}
    Consider a Shapley--Scarf housing market with three agents $N =  \{1, 2, 3\}$ and three objects $O = \{a, b, c\}$. Agents $1$, $2$, and $3$ own objects $a$, $b$, and $c$, respectively. Preferences are as follows.\vspace{-0.2cm}
      \[\begin{array}{ccccc}
          R_{1}: & \underline{a} & \bm{c} & b & \varnothing\\
          R_{2}: & \bm{a} & \underline{b} & c & \varnothing\\
          R_{3}: & \bm{b} & a & \underline{c} & \varnothing
      \end{array}\vspace{-0.2cm}\]

Consider allocation \(\bm{\mu = (c,a,b)}\) and the singleton coalition $S=\{1\}$. Initially, $S^\mu_0=S=\{1\}$ and $\omega(S^\mu_0)=\{a\}$.
Since agent~$2$ receives object $a$ at $\mu$, \(S^\mu_1 = S^\mu_0\cup\mu^{-1}\bigl(\omega(S^\mu_0)\bigr) = \{1,2\}.\)
Together, agents~$1$ and~$2$ own objects $a$ and $b$, and therefore \(\omega(S^\mu_1)=\{a,b\}.\)
Since agent~$3$ receives object $b$ at $\mu$, \(S^\mu_2 = S^\mu_1\cup\mu^{-1}\bigl(\omega(S^\mu_1)\bigr) = \{1,2,3\}=N.\)
Hence, $\omega(S^\mu_2)=O=\{a,b,c\}$ and $S^\mu_3=S^\mu_2$. It follows that  \(\Omega(\{1\}\mid\mu)=\{a,b,c\}.\)\par\smallskip

What the recursion traces here is a chain of dependence in the sense of Lemma~\ref{lemm.evictionproperties}(d): agent~$1$ owns object~$a$, which agent~$2$ occupies; agent~$2$ owns object~$b$, which agent~$3$ occupies; and agent~$3$ owns object~$c$. In the notation of that lemma,
\(c\longrightarrow 3\longrightarrow b\longrightarrow 2\longrightarrow a\longrightarrow 1,\)
and coalition $\{1\}$ reaches object~$c$ by way of agents~$2$ and~$3$. As a result, agent~$1$ can exclusion-block $\mu$ via allocation \(\underline{\sigma} = (\underline{a},\underline{b},\underline{c}).\)\hfill~$\diamond$
\end{exam}

\begin{lemm}[\textbf{Pareto efficiency and exclusion-blocking}] Let $R$ be a problem. If $\sigma$ Pareto dominates $\mu$, then the coalition of agents who are strictly better off at $\sigma$ than at $\mu$ exclusion-blocks $\mu$ via $\sigma$. Hence, if an allocation is exclusion core stable, then it is Pareto efficient.\label{lemm.paretoexclusion}
\end{lemm}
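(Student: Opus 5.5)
Let $S:=\{i\in N\mid \sigma(i)\mathbin{P_i}\mu(i)\}$ be the coalition of agents who strictly gain when moving from $\mu$ to $\sigma$. I will verify the two conditions of Definition~\ref{defi.exclusioncore} for $S$, $\mu$, and $\sigma$, and then obtain the efficiency claim as a contrapositive.

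First, $S$ is a coalition, that is, nonempty. This holds because Pareto dominance requires some agent $j$ with $\sigma(j)\mathbin{P_j}\mu(j)$, and such a $j$ belongs to $S$. Condition~2 (each member strictly better off) then holds by the definition of $S$.

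For condition~1 (exclusion-enforcement), take any $j\notin S$. Since $\sigma$ Pareto dominates $\mu$, we have $\sigma(j)\mathbin{R_j}\mu(j)$. Hence $\mu(j)\mathbin{P_j}\sigma(j)$ is impossible, because antisymmetry and completeness make $R_j$ and the reverse strict preference incompatible. So the premise of the exclusion-enforcement implication never holds for outsiders, and the condition is satisfied vacuously. We never need any property of $\Omega(S\mid\mu)$, so Lemma~\ref{lemm.evictionproperties} is not needed here.

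Therefore $S$ exclusion-blocks $\mu$ via $\sigma$. For the second sentence of the statement, suppose $\mu\in\EC(R)$ and, for contradiction, that some $\sigma$ Pareto dominates $\mu$. The first part then produces an exclusion-blocking coalition, which contradicts exclusion core stability. Hence $\mu$ is Pareto efficient.

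This argument has no real obstacle. The only point needing care is the nonemptiness of $S$, which comes from the strict part of Pareto dominance, together with the observation that no outsider is harmed.
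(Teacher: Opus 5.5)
Your proof is correct and follows essentially the same argument as the paper: the coalition of strict improvers is nonempty by the strict part of Pareto dominance, and since no outsider is worse off, exclusion-enforcement holds vacuously without any appeal to $\Omega(S\mid\mu)$. The only minor imprecision is attributing the incompatibility of $\sigma(j)\mathbin{R_j}\mu(j)$ and $\mu(j)\mathbin{P_j}\sigma(j)$ to antisymmetry and completeness. It actually follows directly from $P_j$ being the strict part of $R_j$, and this does not affect the argument.
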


Appendix~\ref{appendix:prelim} contains the proof of Lemma~\ref{lemm.paretoexclusion}.

\subsection{Relationships between core concepts}\label{subsec.relationcores}

We now describe the relationships among the traditional core concepts and the exclusion core for mixed-ownership object allocation problems. The strong core is a refinement of the weak core, but it may be empty; the weak core is always nonempty, but it may contain wasteful allocations (see Example~\ref{exam:weak-strongcore}). The exclusion core is always nonempty and consists only of Pareto efficient allocations. Moreover, the exclusion core is a refinement of the weak core \citep[Lemma~1]{balbuzanov2019endowments}: \(\EC(R)\subseteq \WC(R).\)
However, neither the strong core nor the exclusion core contains the other. Example~\ref{exam:weak-strongcore} shows that the exclusion core need not be contained in the strong core, whereas Example~\ref{exam:logicalindependent} shows that the strong core need not be contained in the exclusion core.\par\smallskip

The two sets may nevertheless coincide: in Example~\ref{exam:extendedendowment}, a Shapley--Scarf housing market, the strong core and the exclusion core both consist of the single allocation assigning object~$a$ to agent~$1$, object~$b$ to agent~$2$, and object~$c$ to agent~$3$.\footnote{In a housing market with strict preferences, the strong core consists of the unique top trading cycles allocation \citep{ROTH1977131}, which is the allocation just described. That the exclusion core is the same singleton follows from Proposition~\ref{prop.corecharacterization}.} These relationships are depicted in Figure~\ref{fig:coresexclusioncore}.\vspace{-0.2cm}

\begin{figure}[!htbp]
    \centering
    \begin{tikzpicture}
        \draw (0,0) ellipse (4.3 and 2.7);
        \node at (0,2.15) {Weak Core};
        \node at (0,1.75) {$\neq\emptyset$};

        \draw (-1.55,-0.35) ellipse (2.25 and 1.45);
        \node at (-2.15,-0.15) {Strong Core};
        \node at (-2.15,-0.55) {can be $\emptyset$};

        \draw (1.55,-0.35) ellipse (2.25 and 1.45);
        \node at (2.15,-0.15) {Exclusion Core};
        \node at (2.15,-0.55) {$\neq\emptyset$};
    \end{tikzpicture}
    \caption{The relationships among the weak core, the strong core, and the exclusion core for mixed-ownership object allocation problems.}
    \label{fig:coresexclusioncore}\vspace{-0.2cm}
\end{figure}

The following example, from \cite{balbuzanov2019endowments}, illustrates that the strong core can be empty and that the exclusion core can be a strict subset of the weak core.

\begin{exam}[\textbf{An empty strong core and an exclusion core strictly contained in the weak core}] \label{exam:weak-strongcore}
    Consider a problem with three agents $N=\{1,2,3\}$ and two objects $O=\{a,b\}$. Assume that agent~$1$ owns both objects $a$ and $b$. Preferences are as follows.\vspace{-0.2cm}
        \[\begin{array}{cccc}
          R_{1}: & \bm{a} & b & \varnothing\\
          R_{2}: & a & b & \bm{\varnothing} \\
          R_{3}: & a & b & \bm{\varnothing}
      \end{array}\vspace{-0.2cm}\]

Consider allocation \(\bm{\mu = (a, \varnothing, \varnothing )}.\)
Any coalition that trade-enforces a different allocation must include agent~$1$. Since agent~$1$ receives his most-preferred object at $\mu$, no coalition can strongly block $\mu$. Thus, $\mu$ is a weak core allocation. However, since object $b$ is not assigned, $\mu$ is wasteful.\par\smallskip

The strong core in this market is empty. At any strong core allocation, agent~$1$ must receive object $a$; otherwise, the singleton coalition $\{1\}$ weakly blocks by assigning object $a$ to agent~$1$. Pareto efficiency then requires object $b$ to be assigned to some agent $i\in\{2,3\}$. Let $j$ be the other agent in $\{2,3\}$. Coalition $\{1,j\}$ can trade-enforce an allocation at which agent~$1$ keeps object $a$ and agent~$j$ receives object $b$. Agent~$1$ is indifferent and agent~$j$ is strictly better off, so this coalition weakly blocks the original allocation.\par\smallskip

The exclusion core in this market consists of two allocations $\mu^1$ and $\mu^2$ where agent~$1$ receives object $a$, and object $b$ is assigned to either $2$ or $3$: \(\mu^1=(a,b,\varnothing)\) and \(\mu^2=(a,\varnothing,b)\).
Both $\mu^1$ and $\mu^2$ are weak core allocations, while the allocation $\mu=(a,\varnothing,\varnothing)$ considered above is a weak core allocation that does not belong to the exclusion core. Hence, the exclusion core is a strict subset of the weak core.\hfill~$\diamond$
\end{exam}

The following example, also from \cite{balbuzanov2019endowments}, shows that the strong core need not be contained in the exclusion core.

\begin{exam}[\textbf{A strong core allocation that is not exclusion core stable}] \label{exam:logicalindependent}
Consider a problem with four agents $N=\{1,2,3,4\}$ and four objects $O=\{a,b,c,d\}$. Agents $1$, $2$, and $3$ own objects $a$, $b$, and $c$, respectively; $d$ is publicly owned. Preferences are as follows.\vspace{-0.2cm}
        \[\begin{array}{cccccc}
          R_{1}: & \underline{b} & \bm{a} & \varnothing & \ldots\\
          R_{2}: & \underline{d} & \bm{c} & b & \varnothing & a\\
          R_{3}: & \bm{b}& \underline{c} & \varnothing & \ldots\\
          R_{4}: & \underline{a} & \bm{d} & c & \varnothing & b.
      \end{array}\vspace{-0.2cm}\]

Consider allocation \(\bm{\mu}=(\bm{a},\bm{c},\bm{b},\bm{d})\).
Define \(\underline{\sigma}:=(\underline{b},\underline{d},\underline{c},\underline{a})\).
The allocation $\mu$ belongs to the strong core and hence to the weak core, but not to the exclusion core: coalition $S=\{1,2,4\}$ exclusion-enforces $\sigma$ at $\mu$ and therefore exclusion-blocks $\mu$ via $\sigma$.\footnote{The only worse-off agent is agent~$3$ with $\mu(3)=b\in \Omega(\{1,2,4\} \mid \mu)=\{a,b,c,d\}$: the recursion first reaches agents $1$ and $3$, who occupy the objects in $\omega(\{1,2,4\})=\{a,b\}$, so its terminal agent set is the grand coalition $N$, and $\omega(N)=O$.}\hfill~$\diamond$
    \end{exam}

\section{Serial top trading cycles}\label{sec.sttc}

\subsection{STTC graphs and the STTC algorithm}\label{subsec.algorithm}

In this subsection, we introduce the STTC graphs and reduced STTC graphs used to define the STTC algorithm. For a given problem, priority order, and arbitrary initial allocation, the STTC algorithm serves as the central tool in our analysis of exclusion core stability.\par\smallskip

Let $\bm{\rhd}$ denote a \textbf{priority order} (i.e., a strict order) over the set of agents $N$, and let $\bm{\Delta}$ denote the set of priority orders over $N$. Write $i \rhd j$ when agent~$i$ has higher priority than agent~$j$.\par\smallskip

Given a problem $(N,O,\omega,R)$, an allocation $\mu$, and a priority order $\rhd$ over the set of agents $N$, we refer to $\bm{(N,O,\omega,R,\mu,\rhd)}$ as an \textbf{instance}.

\begin{defi}[\textbf{Serial top trading cycles graphs}]\label{defi.graph}
    Given an instance $(N,O,\omega,R,\mu,\rhd)$, let $\bm{V}=(N\cup O\cup \{\varnothing\})$ be the \textbf{set of vertices} and $\bm{E}$ the \textbf{set of edges}. We denote a directed edge from vertex $x$ to vertex $y$ by $(x,y)$ and say that $x$ points to $y$. The set of edges $E$ is defined as follows:
\begin{itemize}
    \item Each agent $i \in N$ points to his most-preferred object $o$ in $O \cup \{\varnothing\}$ according to $R_i$, yielding an edge $(i,o)$;
    \item each individually owned object $o \in O_I$ points to its owner $i$, yielding an edge $(o,i)$;
    \item each public object $o\in O_P$ that is assigned at $\mu$ points to its assigned agent $i$ at $\mu$ (i.e., $\mu(i)=o$), yielding an edge $(o,i)$;
    \item each public object $o\in O_P$ that is unassigned at $\mu$ points to the agent $i$ with the highest priority according to $\rhd$, yielding an edge $(o,i)$;
    \item the null object $\varnothing$ points to each agent $i\in N$ who points to $\varnothing$, yielding an edge $(\varnothing,i)$ for each such agent.
\end{itemize}
We call the resulting directed graph the \textbf{serial top trading cycles (STTC) graph} $\bm{\mathcal{G}(N,O,\omega,R,\mu,\rhd)}:=(V,E)$ (or $\mathcal{G}(R,\mu,\rhd)$ for short).
\end{defi}

For an individually owned object, the edge $(o,i)$ records exactly the ownership relation appearing in the chains of dependence of Lemma~\ref{lemm.evictionproperties}(d), which is what allows Lemma~\ref{lemm.cyclechain} in Appendix~\ref{appendix:sttc} to read such a chain directly off a cycle of the algorithm.\par\smallskip

A \textbf{cycle} $\bm{\mathcal{C}}$ in an STTC graph, such as $\mathcal{G}(R,\mu,\rhd)$, is an alternating sequence of agents and objects connected by directed edges: $o^1\rightarrow i^1\rightarrow o^2\rightarrow\cdots\rightarrow o^k\rightarrow i^k\rightarrow o^1$. Denote the set of all agents who participate in such a cycle $\mathcal{C}$ by $S(\mathcal{C})$. A cycle containing a single agent, $i\rightarrow o\rightarrow i$, is a \textbf{self-cycle}; its object $o$ may be a real object or the null object. Independently of this, we call a cycle \textbf{ordinary} if it does not involve the null object, that is, if all of its objects are real.\par\smallskip

Since the null object points to \emph{every} agent who points to it, a sequence through $\varnothing$ could formally close through several agents; we therefore adopt the convention that each agent pointing to $\varnothing$ forms the self-cycle $i\rightarrow\varnothing\rightarrow i$, and that this is the only cycle through the null object containing agent~$i$. Every vertex of an ordinary cycle has out-degree one. Hence, distinct cycles are always disjoint in agents and in real objects; the only vertex that distinct cycles can share is the null object, which has enough capacity for all agents.\par\smallskip

The STTC algorithm proceeds in steps. At each step, all agents belonging to cycles leave with the objects they receive, and the same graph construction is then applied to the remaining agents and objects. The following definition formalizes the resulting reduced instance. Its components restrict the original data to the surviving agents and objects, with one substantive exception: an agent whose allotment has already been taken is recorded at the reduced allocation as holding the null object.

\begin{defi}[\textbf{Reduced instances and STTC graphs}]\label{defi.residualgraph}
Consider an instance $(N,O,\omega,R,\mu,\rhd)$. Suppose that the agents in a set $\widebar{N}$ leave the problem with their allotments under an allocation $\sigma$, which may differ from $\mu$. The remaining sets of agents and objects are $\bm{N'}=N\setminus\widebar{N}$ and $\bm{O'}=O\setminus \sigma(\widebar{N})$; as always, the null object remains available.
\begin{itemize}
  \item If $N'=\{i\}$, then define $\omega'_i=O'$; otherwise, for each $i\in N'$, let $\omega'_i:=\omega_i\cap O'$.\footnote{When only one agent remains, private and public ownership induce the same edges in the reduced STTC graph: every remaining object points to the remaining agent, either as its owner or, if it is public, as its assignee or as the highest-priority remaining agent. Thus, setting $\omega'_i=O'$ is without loss and simplifies the definition of the reduced instance. Note, however, that $\omega'_i\subseteq\omega_i$ holds whenever $|N'|\geq2$, but fails when $N'=\{i\}$.} Then, the \textbf{reduced endowment structure} is $\bm{\omega'}=(\omega'_i)_{i\in N'}$. Note that $\bigcup_{i\in N'} \omega^{'}_{i} \subseteq O'$. In particular, if $|N'|\geq 2$, a remaining object whose owner has left belongs to no $\omega'_i$ and is therefore public in the reduced instance.
  \item For each agent $i\in N'$, let $R'_i$ be the restriction of $R_i$ to $O'\cup\{\varnothing\}$; that is, for all $a,b\in O'\cup\{\varnothing\}$, $a\mathbin{R'_i}b$ if and only if $a\mathbin{R_i}b$. Then, the \textbf{reduced preference profile} is $R':=(R'_i)_{i\in N'}$.
  \item For each $i\in N'$, let\vspace{-0.2cm}
      \[\mu'(i) :=
    \begin{cases}
     \mu(i) & \text{ if } \mu(i)\in O'\cup\{\varnothing\}\\
      \varnothing       & \text{ otherwise}.
    \end{cases}\vspace{-0.2cm}\]
    The resulting \textbf{reduced allocation} is $\bm{\mu'}=(\mu'(i))_{i\in N'}$.
  \item For each pair $i,j\in N'$, let $i\rhd' j$ if and only if $i\rhd j$. Then, $\bm{\rhd'}$ denotes the \textbf{reduced priority order}  over the set of agents $N'$.
\end{itemize}
The resulting \textbf{reduced instance} is $\bm{(N',O',\omega',R',\mu',\rhd')}$, and its \textbf{reduced STTC graph} is $\bm{\mathcal{G}(N',O',\omega',R',\mu',\rhd')}$.
\end{defi}

We now use STTC graphs and their reductions to define the STTC algorithm.

\begin{defi}[\textbf{The serial top trading cycles algorithm}]\hfill\\
\noindent \rule{1\columnwidth}{1pt}

\begin{description}
    \item [Input:] An instance $(N,O,\omega,R,\mu,\rhd)$.

    \item \textbf{Step~1}. Let $N^1 := N$ and $O^1 := O$, and construct the STTC graph $\mathcal{G}^1=\mathcal{G}(N^1,O^1,\omega,R,\mu,\rhd)$ according to Definition~\ref{defi.graph}.

    \item[]Identify all top trading cycles in $\mathcal{G}^1$ and execute each cycle:\footnote{Note that each agent points to exactly one object, his most-preferred object, which is why we refer to the cycles that form as \textbf{top trading cycles}. At least one top trading cycle exists at every step: any agent pointing to $\varnothing$ forms the self-cycle $i\to\varnothing\to i$; if no agent points to $\varnothing$, then $\varnothing$ is isolated and every remaining vertex other than the isolated null object has out-degree one, so following the outgoing edges from any agent within the finite vertex set eventually revisits a vertex and thereby traces a cycle.}
    \begin{itemize}
        \item For each cycle $\mathcal{C}$, let $S(\mathcal{C})$ denote its set of agents. Assign each $i \in S(\mathcal{C})$ the object to which he points in $\mathcal{C}$, and denote this allotment by $\STTC_i(R,\mu,\rhd)$.
        \item Let $\widebar{N}^1$ and $\widebar{O}^1$ denote the sets of agents and objects, respectively, belonging to cycles. Remove these agents and objects, except for the null object $\varnothing$, from the current instance, and set $N^2:=N^1\setminus \widebar{N}^1$ and $O^2:=O^1\setminus \widebar{O}^1$.
        \item If $N^2=\emptyset$, stop. If $N^2\neq \emptyset$ and $O^2=\emptyset$, assign the null object to every agent in $N^2$ and stop. Otherwise, continue to Step~2.
    \end{itemize}

    \item \textbf{Step}~$\bm{t \ge 2}$. Step~$t-1$ ended with the definitions of $N^t:=N^{t-1}\setminus \widebar{N}^{t-1}\neq \emptyset$ and $O^t:=O^{t-1}\setminus \widebar{O}^{t-1}\neq \emptyset$ after the removal of top trading cycles in the (reduced) STTC graph $\mathcal{G}^{t-1}$.

        Use $N'=N^t$ and $O'=O^t$ to construct the  corresponding reduced instance (Definition~\ref{defi.residualgraph}) and STTC graph $\mathcal{G}^t=\mathcal{G}(N',O',\omega',R',\mu',\rhd')$.

    \item[] As in Step~1, identify all top trading cycles in $\mathcal{G}^t$ and execute each cycle.

    \begin{itemize}
        \item \raggedright Let $\widebar{N}^t$ and $\widebar{O}^t$ denote the sets of agents and objects, respectively, belonging to cycles in $\mathcal{G}^t$. Assign each cycle agent his corresponding allotment $\STTC_i(R,\mu,\rhd)$, remove all agents in $\widebar{N}^t$ and all objects in $\widebar{O}^t$, except for the null object, and set $N^{t+1}:=N^t\setminus \widebar{N}^t$ and $O^{t+1}:=O^t\setminus \widebar{O}^t$.
        \item If $N^{t+1}=\emptyset$, stop. If $N^{t+1}\neq \emptyset$ and $O^{t+1}=\emptyset$, assign the null object to every agent in $N^{t+1}$ and stop. Otherwise, continue to Step~$t+1$.
    \end{itemize}

    \item \textbf{Termination.} Since at least one agent is removed at each step, the procedure terminates after at most $n$ steps, when each agent has received an allotment.

    \item [Output:] Denote the resulting allocation by $\STTC(R,\mu,\rhd)$.
\end{description}
\noindent \rule{1\columnwidth}{1pt}

An instance $(N,O,\omega,R,\mu,\rhd)$ consists of a problem $(N,O,\omega,R)$, an initial allocation $\mu$, and a priority order $\rhd$. We refer to the STTC algorithm with priority order $\rhd$ and initial allocation $\mu$ as the \textbf{$\bm{(\rhd,\mu)}$-STTC algorithm}.
\end{defi}

Next, we introduce $\mu$-cycles and non-$\mu$-cycles. Along a $\mu$-cycle, each agent points to the object he occupies at $\mu$ and receives it back, so no object changes hands. Thus, only non-$\mu$-cycles can move objects between agents. The presence or absence of such cycles underlies the exclusion-core-stability test in Theorem~\ref{theorem:coretest}.

\begin{defi}[\textbf{$\bm{\mu}$- and non-$\bm{\mu}$-cycles}]
    Let $\mathcal{C}$ be a cycle in $\mathcal{G}^t$ at Step~$t$ of the $(\rhd,\mu)$-STTC algorithm, and let $S(\mathcal{C})\subseteq N^t$ denote its set of agents.
    The cycle $\mathcal{C}$ is a \textbf{$\bm{\mu}$-cycle} if every $i\in S(\mathcal{C})$ points to $\mu(i)$; otherwise, it is a \textbf{non-$\bm{\mu}$-cycle}.

    If a non-$\mu$-cycle forms, let $t^\star$ be the first step at which one forms. Any non-$\mu$-cycle in $\mathcal{G}^{t^\star}$ is called an \textbf{earliest non-$\bm{\mu}$-cycle}.
\end{defi}

Example~\ref{example:STTC} in Appendix~\ref{appendix:STTCexample} illustrates the above notions.

\subsection{STTC allocations and the exclusion core}\label{subsec.sttccore}

The following proposition establishes that allocations produced by the $(\rhd,\mu)$-STTC algorithm are exclusion core stable.

\begin{prop}[\textbf{STTC allocations are exclusion core stable}]\label{prop.sttc-exclusion}
For each problem $R$, each priority order $\rhd$, and each allocation $\mu$, $\STTC(R,\mu,\rhd)$ is exclusion core stable.
\end{prop}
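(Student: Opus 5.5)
Let $\sigma:=\STTC(R,\mu,\rhd)$. Suppose, toward a contradiction, that some coalition $S$ exclusion-blocks $\sigma$ via some allocation $\tau$. The plan is a standard top-trading-cycles induction on the step at which agents leave.

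Let $t$ be the earliest step at which some agent of $S$ is removed, and pick $i\in S\cap\widebar{N}^t$. At Step~$t$, $\sigma(i)$ is $i$'s most preferred object in $O^t\cup\{\varnothing\}$. Since $\tau(i)\mathbin{P_i}\sigma(i)$, the object $o:=\tau(i)$ is real and lies in $O\setminus O^t$. So $o$ was removed at an earlier step $s<t$ and $\sigma$ assigns it to some agent $j$ with $j\notin S$. Because $\tau(i)=o$, agent $j$ does not receive $o$ at $\tau$. Before going further I would verify the following.

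\textbf{Claim.} Every agent $k$ who left at a step $s<t$ satisfies $\sigma(k)\mathbin{R_k}\tau(k)$, and each such $k$ can be harmed only if $\sigma(k)\in\Omega(S\mid\sigma)$. It is enough to show that no agent who left before $t$ occupies at $\sigma$ an object of $\Omega(S\mid\sigma)$.

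\textbf{Key step: the extended eviction set avoids early objects.} Let $A:=\bigcup_{s<t}\widebar{N}^s$ be the set of agents removed before step $t$. I would prove that $\sigma(A)\cap\Omega(S\mid\sigma)=\emptyset$ by induction on the recursion $S^\sigma_k$, showing that $S^\sigma_k\cap A=\emptyset$ for every $k$.

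The base case holds by the choice of $t$.

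For the inductive step, suppose $S^\sigma_k\cap A=\emptyset$. Take an object $o'\in\omega(S^\sigma_k)$ held at $\sigma$ by some agent $a\in A$; I want to rule this out. The object $o'$ left in some cycle at step $s<t$. In that cycle, $o'$ points to some agent. There are two cases.
\begin{itemize}
\item If $o'$ is individually owned by $b\in S^\sigma_k$, then $b$ is still present at step $s$. This is because $b\notin A$, and reduced endowments keep ownership while $|N^s|\ge2$. Hence $o'$ points to $b$, so $b$ is in the cycle and leaves at step $s$, contradicting $b\notin A$.
\item If $o'$ is public, then $\omega(S^\sigma_k)\ni o'$ forces $S^\sigma_k=N$. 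But $S^\sigma_k\cap A=\emptyset$ would then imply $A=\emptyset$, which contradicts $s<t$.
\end{itemize}
There is one remaining subtlety. An object can become public in a reduced instance after its owner leaves, but that owner is then in $A$, not in $S^\sigma_k$. Also, the single-remaining-agent convention only applies when all other agents have already left. So $\mu^{-1}$-reach never enters $A$, and the induction closes, giving $\Omega(S\mid\sigma)\subseteq\omega(N\setminus A)$ with no $\sigma$-occupant in $A$.

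\textbf{Conclusion.} Since $j\in A$ and $j\notin S$, exclusion-enforcement requires $j$ not to be worse off at $\tau$. Agents in $A$ are then all weakly better off at $\tau$. I would next show inductively over steps $s=1,\dots,t-1$ that each agent leaving at step $s$ must receive exactly $\sigma(k)$ at $\tau$. At step $s$, $\sigma(k)$ is $k$'s top choice among the remaining objects, and anything $k$ strictly prefers was taken by an earlier agent who, by induction, keeps it at $\tau$. Hence $\tau(j)=\sigma(j)=o$, which contradicts $\tau(i)=o$ with $i\neq j$.

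The main obstacle is the inductive claim that the eviction recursion cannot reach agents removed earlier. In particular, the public-object case and the reduced-endowment conventions, where objects become public, must be handled carefully. This is where I expect to lean on Lemma~\ref{lemm.evictionproperties} and on the fact that a coalition's eviction set contains a public object only when it is the grand coalition.
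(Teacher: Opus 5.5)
Your proposal is correct and follows essentially the paper's route: you use the invariant that the extended eviction set of a coalition of still-present agents contains no already-removed object, and combine it with a top-trading-cycles argument, where your forward induction (every earlier-removed agent keeps exactly his STTC allotment at $\tau$) is just the contrapositive of the paper's backward chain to a strictly harmed earlier agent. Two small repairs are needed: your choice of $t$ presumes that some member of $S$ leaves in a cycle, so you must also treat a coalition consisting only of agents who receive $\varnothing$ when the real objects run out (view termination as a final step with no real objects left, with $A$ the set of all cycle agents, and the same argument goes through, as in the paper); and the first clause of your Claim should read $\tau(k)\mathbin{R_k}\sigma(k)$, which is the inequality you actually derive and use.
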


Given a priority order $\rhd$ and an allocation $\mu$, the $(\rhd,\mu)$-STTC algorithm assigns each agent his most-preferred object among those available when he is assigned. Suppose, contrary to the proposition, that its outcome is exclusion-blocked. An induction over the steps shows that no agent assigned at a given step can belong to the blocking coalition and that no object removed at that step can enter the blocking coalition's extended eviction set. Consequently, no coalition member can obtain through exclusion-blocking an object he prefers to his STTC allotment, a contradiction. Appendix~\ref{appendix:sttc} contains the proof of Proposition~\ref{prop.sttc-exclusion}.\par\smallskip

When the initial allocation is the \textbf{null allocation} \(\bm{\mu^{\varnothing} := (\varnothing, \varnothing, \ldots, \varnothing)}\), the $(\rhd,\mu^\varnothing)$-STTC algorithm specializes to several well-known algorithms:

\begin{itemize}
    \item For \textbf{Shapley--Scarf housing market problems}, each $(\rhd,\mu^{\varnothing})$-STTC algorithm coincides with the classical \textbf{top trading cycles (TTC) algorithm} \citep{SHAPLEY197423}.
    \item For \textbf{house allocation problems}, the $(\rhd,\mu^{\varnothing})$-STTC algorithm coincides with the $\rhd$-\textbf{serial dictatorship} algorithm \citep{svensson1999}.
    \item For \textbf{house allocation problems with existing tenants}, the $(\rhd,\mu^{\varnothing})$-STTC algorithm coincides with the $\rhd$-\textbf{you-request-my-house--I-get-your-turn ($\rhd$-YRMH-IGYT) algorithm} \citep{abdulkadirouglu1999house}.
  \end{itemize}\par\smallskip

Next, we show that the exclusion core can be recovered from the outcomes of the $(\rhd,\mu^{\varnothing})$-STTC algorithms. This result is closely related to Theorem~4 of \citet{balbuzanov2019endowments}; their Lemma~3(b) adapts that theorem to our model.\footnote{\citet{balbuzanov2019endowments} allow object-specific priority orders, whereas our characterization uses a common priority order for all objects. This restriction strengthens the characterization but leaves less flexibility in the proof.}

\begin{prop}[\textbf{An exclusion core characterization}]\label{prop.corecharacterization}
For each problem $R$, the exclusion core is exactly the set of STTC outcomes obtained from the null allocation as the priority order varies over $\Delta$:
\( \EC(R) = \bigl\{\STTC(R,\mu^{\varnothing},\rhd)\mid \rhd\in\Delta\bigr\}.\)
\end{prop}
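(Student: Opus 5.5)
The inclusion $\supseteq$ is immediate from Proposition~\ref{prop.sttc-exclusion} applied with $\mu=\mu^\varnothing$. For $\subseteq$, I would fix $\mu\in\EC(R)$ and build a priority order $\rhd$, together with the run of the $(\rhd,\mu^\varnothing)$-STTC algorithm, by induction over the steps. The order is fixed one agent at a time, placing agents in the order in which they are needed as top priority among the remaining agents.

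The invariant is that at the start of each step $t$, every agent who has left received exactly $\mu(i)$. As a consequence, $\mu(i)\in O^t\cup\{\varnothing\}$ for every remaining agent $i\in N^t$. Hence each remaining agent's top object in the reduced instance, $\tau_t(i)$, satisfies $\tau_t(i)\mathbin{R_i}\mu(i)$. An agent with $\tau_t(i)=\varnothing$ must then have $\mu(i)=\varnothing$, so such self-cycles are harmless.

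For the next priority slot I would choose a remaining agent $h$ according to $\mu$. If some remaining public object $p$ (public in the reduced instance, which includes objects whose owners have left) is occupied at $\mu$ by a remaining agent, I take $h$ to be that occupant. Otherwise I take $h$ arbitrarily among the remaining agents. Then every remaining public object points to $h$, and the key claim is that every cycle of $\mathcal{G}^t$ is consistent with $\mu$, meaning each of its agents points to his $\mu$-allotment. Once this holds, the cycles are executed, the invariant persists, and $h$ is either removed or stays top-ranked for the next step. Either way the construction continues until the algorithm outputs $\mu$. If $O^t$ becomes empty, all remaining agents must have $\mu(i)=\varnothing$, so the output is still $\mu$.

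To prove the key claim, I argue by contradiction. Suppose a cycle $\mathcal{C}$ of $\mathcal{G}^t$ contains an agent with $\tau_t(i)\neq\mu(i)$. Let $S$ be the set of agents of $\mathcal{C}$ with $\tau_t(i)\,P_i\,\mu(i)$. Let $\sigma$ give each agent of $\mathcal{C}$ his top object $\tau_t(i)$ and leave everyone else at $\mu$, evicting to $\varnothing$ any outsider whose $\mu$-object is taken. The goal is to show that $S$ exclusion-blocks $\mu$ via $\sigma$, contradicting $\mu\in\EC(R)$.

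The agents who can be harmed are the $\mu$-occupants of objects of $\mathcal{C}$. Agents removed earlier hold objects not in $O^t$, so they are unaffected. For a private object $o$ in $\mathcal{C}$, I follow $\mathcal{C}$ forward from $o$ to the next member of $S$. Each cycle agent $j\notin S$ along the way has $\mu(j)$ equal to the object after him, which gives exactly an ownership--occupancy chain. Lemma~\ref{lemm.evictionproperties}(d) and (c) then put $o$ in $\Omega(S\mid\mu)$.

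The main obstacle is a cycle that passes through a public object $p$, which necessarily points to $h$. If $p$ is unoccupied at $\mu$ it harms nobody. If it is occupied, the choice of $h$ should make the chain through $p$ either end in $S$ or consist of $\mu$-consistent edges, so that no outsider occupant of $p$ needs evicting. This is the delicate point. If the chosen $h$ does not make this work, my fallback is to choose $h$ via a maximal chain in the graph of $\mu$-occupancy and ownership. The aim would be that any offending cycle yields a Pareto improvement over $\mu$ on the cycle agents, giving a contradiction via Lemma~\ref{lemm.paretoexclusion}.
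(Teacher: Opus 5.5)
Your argument has a genuine gap exactly where you flag the delicate point. Under your selection rule, the key claim that every cycle of $\mathcal{G}^t$ is consistent with $\mu$ is false.

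Here is a counterexample. Take $N=\{1,2,3\}$ and $O=\{o,p,q\}$, with $\omega_2=\{o,q\}$, $\omega_1=\omega_3=\emptyset$, and $p$ public. Let $R_1: o,q,p,\varnothing$, let agent~$2$ rank $o$ first and agent~$3$ rank $q$ first, and let $\mu=(p,o,q)$.
\begin{itemize}
\item Agents~$2$ and~$3$ hold their top objects, so only $\{1\}$ could block. It would have to evict agent~$2$ or agent~$3$, but $\Omega(\{1\}\mid\mu)=\emptyset$. Hence $\mu\in\EC(R)$.
\item Your rule forces $h=1$, the only occupant of a public object. At Step~1 only the $\mu$-consistent cycle $2\to o\to 2$ forms.
\item At Step~2, object $q$ has lost its owner, so it is public in the reduced instance and points to agent~$1$, who is still top-ranked. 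Agent~$1$ points to $q$, the cycle $1\to q\to 1$ forms, and the output is not $\mu$.
\end{itemize}
No contradiction with $\mu\in\EC(R)$ can be extracted from this cycle. The harmed occupant of $q$ is agent~$3$, and $q\notin\Omega(\{1\}\mid\mu)$. There is also no Pareto improvement, so your fallback via Lemma~\ref{lemm.paretoexclusion} cannot work either.

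The rule is also underdetermined at a single step. With two agents, two public objects $p_1,p_2$, $\mu=(p_1,p_2)$, and both agents ranking $p_2$ first, choosing the occupant of $p_1$ already fails.

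The structural reason is twofold. In the $\mu^\varnothing$-initialized graph, public objects point to $h$ regardless of who holds them at $\mu$, so the ownership--occupancy chain breaks at a public object. Moreover, since $h$ stays top-ranked, every object that becomes public later also points to him, and no step-local choice can anticipate this.

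The missing idea is to fix the priority order globally, using an auxiliary run in which the chain argument does work. The paper takes an arbitrary $\rhd^*$ and runs the $(\rhd^*,\mu)$-STTC algorithm. In that run, assigned public objects, including those whose owners have left, point to their $\mu$-holders. By Lemma~\ref{lemm.fixedpoint}, this run returns $\mu$ and executes only $\mu$-cycles; this is where exclusion core stability enters, through Lemma~\ref{lemm.cyclechain}.

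Ranking agents by the step $\tau(i)$ at which they leave the auxiliary run, the paper shows by induction over the $(\rhd,\mu^\varnothing)$-run that every executed cycle is a $\mu$-cycle:
\begin{itemize}
\item The top-priority remaining agent $h$, who has minimal $\tau$, points to $\mu(h)$. Any object $h$ prefers to $\mu(h)$ was removed in the auxiliary run by its $\mu$-holder, who has smaller $\tau$ and has therefore already left with it.
\item Following a cycle from $h$, each owner reached has the same $\tau$ as $h$.
\item Cycles without public objects are handled through their minimal-$\tau$ agent.
\end{itemize}
In the example this gives $\tau(2)=1$, $\tau(3)=2$, and $\tau(1)=3$, so agent~$1$ is ranked last rather than first. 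Your invariant and your treatment of an empty $O^t$ carry over unchanged.
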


By Proposition~\ref{prop.sttc-exclusion}, every STTC outcome obtained from the null allocation belongs to the exclusion core. It remains to prove the reverse inclusion. Starting from an exclusion-core-stable allocation $\mu$ and an arbitrary priority order $\rhd^*$, we use the successive steps of the $(\rhd^*,\mu)$-STTC algorithm to construct a priority order $\rhd$ for which \(\STTC(R,\mu^\varnothing,\rhd)=\mu\). Appendix~\ref{appendix:sttc} contains the proof of Proposition~\ref{prop.corecharacterization}.\footnote{\citet[][Proposition~5]{zhang2026consistent} also characterizes the exclusion core on this domain, using a generalization of the YRMH-IGYT algorithm. Our maximal domain result (Corollary~\ref{cor:maxdomain}) and the results of Section~\ref{sec.mainresult}, which rely on the STTC algorithm being initialized at an arbitrary allocation, have no counterpart there.}\par\smallskip

Propositions~\ref{prop.sttc-exclusion} and~\ref{prop.corecharacterization} establish basic properties of the $(\rhd,\mu)$-STTC algorithm that are used to derive our main results in Section~\ref{sec.mainresult}.\par\smallskip

The next example illustrates Proposition~\ref{prop.corecharacterization}. A more complex illustration appears in Example~\ref{example:STTC} in Appendix~\ref{appendix:STTCexample}.

\begin{exam}[\textbf{Example~\ref{exam:logicalindependent} continued}]

Consider the problem from Example~\ref{exam:logicalindependent}: $N=\{1,2,3,4\}$, $O=\{a,b,c,d\}$; agents $1$, $2$, and $3$ own objects $a$, $b$, and $c$, respectively, and $d$ is publicly owned. Preferences are as follows.\vspace{-0.2cm}
        \[\begin{array}{cccccc}
          R_{1}: & b & a & \varnothing & \ldots\\
          R_{2}: & d & c & b & \varnothing & a\\
          R_{3}: & b& c & \varnothing & \ldots\\
          R_{4}: & a & d & c & \varnothing & b.
      \end{array}\vspace{-0.2cm}\]

As $\rhd$ varies over $\Delta$, the outcomes of the $(\rhd,\mu^\varnothing)$-STTC algorithm are exactly the two allocations \(\mu^1=(a,d,b,c)\) and \(\mu^2=(b,d,c,a)\).
$\STTC(R,\mu^\varnothing,\rhd)=\mu^1$ if and only if agent~$3$ has the highest priority under $\rhd$, or agent~$2$ has the highest priority and agent~$3$ the second highest; otherwise $\STTC(R,\mu^\varnothing,\rhd)=\mu^2$.
Hence, by Proposition~\ref{prop.corecharacterization}, the exclusion core is $\{\mu^1,\mu^2\}$.\hfill~$\diamond$
\end{exam}

Before turning to our main results, we relate our STTC algorithms to the GTTC algorithms of \citet{balbuzanov2019endowments} and record two generalizations of Proposition~\ref{prop.sttc-exclusion}.\par\smallskip

The \textbf{generalized top trading cycles (GTTC) algorithm} of \citet[][Algorithm~2]{balbuzanov2019endowments} includes our $(\rhd,\mu^{\varnothing})$-STTC algorithm as a special case. Their algorithm uses an object-specific priority structure, so different objects may have different priority orders. When $\mu\neq\mu^{\varnothing}$, our $(\rhd,\mu)$-STTC algorithms differ in form from their $\rhd$-GTTC algorithm. In the proof of their Proposition~4, however, \citet{balbuzanov2019endowments} use a modification of their GTTC algorithm that corresponds to our STTC algorithm. The correspondence between the two algorithms yields the following common generalization.

\begin{remark}[\textbf{Flexible top trading cycles graphs, algorithms, and allocations}]
In an STTC graph, the priority order $\rhd$ is used only to determine the outgoing edge of each unassigned public object. These edges may be chosen arbitrarily rather than through a common priority order, for instance depending on the cycles executed at earlier steps.

We refer to the resulting constructions as \textbf{flexible top trading cycles (flexible TTC) graphs, algorithms}, and \textbf{allocations}. For mixed-ownership problems, the class of flexible TTC algorithms includes both our STTC algorithms and the corresponding GTTC algorithms of \citet{balbuzanov2019endowments}. The proof of Proposition~\ref{prop.sttc-exclusion} uses no property of the edge choices for unassigned public objects beyond the fact that each such object points to a remaining agent. Hence, every allocation generated by a flexible TTC algorithm is exclusion core stable.

This observation concerns only Proposition~\ref{prop.sttc-exclusion}. Proposition~\ref{prop.corecharacterization} and Theorem~\ref{theorem:coretest}, by contrast, draw their strength from the single priority order of the STTC family: the former asserts the existence of a priority order under which a given allocation is generated, the latter characterizes exclusion core stability through the fixed points of the $(\rhd,\mu)$-STTC algorithm.\label{remark1}\hfill~$\diamond$
\end{remark}

\begin{remark}[\textbf{Flexible TTC allocations for object allocation problems with complex endowments}]
Recall that in object allocation problems with complex endowments, an object may be owned by any coalition of agents, not only by a singleton or by the grand coalition. We can extend the flexible TTC construction to such problems by requiring an object to point to one of its remaining owners whenever at least one owner remains; once no owner remains, its outgoing edge can be chosen arbitrarily among the remaining agents. A flexible TTC algorithm satisfying this requirement is \textbf{owner-respecting}: an object points to an owner whenever one remains, so owners' claims are not bypassed while they are present.

The argument of Proposition~\ref{prop.sttc-exclusion} extends to this class. Indeed, if a set $T\subsetneq N$ consists only of agents who remain at some step and an object belongs to the direct eviction set of $T$, then all owners of that object belong to $T$ and hence remain; in particular, the object is not public, so owner-respectingness applies to it. Since all of its owners remain, they were present at every earlier step. If the object had been removed in an earlier cycle, owner-respectingness would have made it point to one of its owners, who would then have been removed with it, a contradiction. Thus, a coalition of agents who all remain at a given step has no exclusion rights over objects that have already been assigned, exactly as in the proof of Proposition~\ref{prop.sttc-exclusion}, and the rest of that proof is unchanged. Hence, every allocation generated by an \textbf{owner-respecting flexible TTC algorithm} is exclusion core stable. For the special case in which each object's edge is determined by a fixed priority order over agents, this also follows from Corollary~1(a) and Lemma~3(a) of \citet{balbuzanov2019endowments}. Owner-respecting flexible TTC algorithms are more general: an object may point to different owners at different steps, as long as each is a remaining owner, and such edge choices need not be generated by any fixed order.\label{remark2}\hfill~$\diamond$
\end{remark}

\section{Main results}\label{sec.mainresult}

\subsection{A maximal domain result}\label{subsec.maximaldomain}

Proposition~\ref{prop.corecharacterization} raises the question of whether the STTC characterization of the exclusion core extends beyond mixed ownership. The next result shows that it does not when the characterization is required to hold for every preference profile. Every complex endowment structure containing a properly co-owned object admits a preference profile for which some exclusion-core-stable allocation cannot be obtained by the extension of the STTC algorithm defined below. In fact, that allocation cannot be obtained by any owner-respecting flexible TTC algorithm described in Remark~\ref{remark2}.\par\smallskip

Throughout this subsection, we allow proper co-ownership and write $(\omega,R)$ for a problem $(N,O,\omega,R)$. Accordingly, $\omega_i$ now denotes the set of objects that agent~$i$ owns, alone or as a proper co-owner. As before, public objects do not belong to any individual set $\omega_i$. An object $a\in O$ is \textbf{properly co-owned} if at least two but not all agents own it; that is, there exist distinct agents $i,j\in N$ with $a\in\omega_i\cap\omega_j$ and an agent $k\in N$ with $a\notin\omega_k$. Under mixed ownership, by contrast, every object is owned either by exactly one agent or by all agents. For each object $o\in O$, let $S^o\subseteq N$ denote its \textbf{set of owners}, defined by\vspace{-0.2cm}
\[S^o :=
\begin{cases}
\{i\in N \mid o\in\omega_i\}, & \text{if } \{i\in N \mid o\in\omega_i\}\neq\emptyset,\\
N, & \text{otherwise}.
\end{cases}\vspace{-0.2cm}\]
The direct eviction set of a coalition $S$ is \(\omega(S):=\{o\in O\mid S^o\subseteq S\}.\)
We call the members of $S^o$ the owners of $o$, and use \emph{co-owner} only when $o$ is properly co-owned and the presence of several owners matters. Thus, a coalition can directly evict others from an object exactly when it contains all of that object's owners.\footnote{ Whenever no object is properly co-owned, this coincides with the direct eviction set of Section~\ref{subsec.model}: for $S\subsetneq N$, an individually owned object requires its single owner and a public object requires $N$, so $\omega(S)=\bigcup_{i\in S}\omega_i$; and $\omega(N)=O$.} The extended eviction set is obtained from this direct eviction set by the recursion introduced in Section~\ref{subsec.core}.\par\smallskip

In an extended $(\rhd,\mu)$-STTC algorithm, let $\mu'$ denote the current reduced allocation, with $\mu'=\mu$ at Step~1. A properly co-owned object points as follows. If it is assigned at $\mu'$ to a remaining co-owner, it points to that agent. Otherwise, if any co-owner remains, it points to the highest-priority remaining co-owner; if none remains, it points to its assigned agent at $\mu'$ if it is assigned, and otherwise to the highest-priority remaining agent. The outgoing-edge rules for individually and publicly owned objects remain unchanged.\par\smallskip

For a problem $(\omega,R)$, let $\STTC(\omega,R,\mu,\rhd)$ denote the outcome of the extended $(\rhd,\mu)$-STTC algorithm. The set of \textbf{STTC allocations generated from the null allocation} is\vspace{-0.2cm}
\[ \STTC(\omega,R):=\{\STTC(\omega,R,\mu^\varnothing,\rhd)\mid\rhd\in\Delta\}.\vspace{-0.2cm}\]
We denote the exclusion core of the problem $(\omega,R)$ by $\EC(\omega,R)$.

\begin{theorem}[\textbf{Strict inclusion of the STTC allocations in the exclusion core under proper co-ownership}]\label{thm.strictsuperset}
Let $\omega$ be an endowment structure with a properly co-owned object. Then there exists a preference profile $R$ such that \( \STTC(\omega,R) \subsetneq \EC(\omega,R).\)
\end{theorem}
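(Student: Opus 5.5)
The plan is to build one simple profile in which the properly co-owned object is the only acceptable object for everyone. Then exhibit an exclusion-core-stable allocation that gives this object to a non-owner, and show that no extended $(\rhd,\mu^\varnothing)$-STTC algorithm can do that.

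\textbf{Setup.} Fix a properly co-owned object $a$, so $2\le |S^a|<n$. Pick an agent $k\notin S^a$. Define $R$ so that every agent ranks $a$ first, then $\varnothing$, and every other real object below $\varnothing$; that is, $a\mathbin{P_l}\varnothing\mathbin{P_l}o$ for all $l\in N$ and all $o\in O\setminus\{a\}$.

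\textbf{Step 1: the inclusion.} First I show $\STTC(\omega,R)\subseteq\EC(\omega,R)$. The extended STTC algorithm is an owner-respecting flexible TTC algorithm:
\begin{itemize}
\item A properly co-owned object points to a remaining co-owner whenever one exists.
\item An individually owned object points to its owner while the owner remains; once the owner leaves, the object becomes public in the reduced instance.
\item A public object has all agents as owners, so any remaining agent it points to is a remaining owner.
\end{itemize}
Hence Remark~\ref{remark2} gives exclusion core stability of every extended STTC outcome, so only strictness is at issue.

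\textbf{Step 2: STTC never gives $a$ to $k$.} Run the extended $(\rhd,\mu^\varnothing)$-STTC algorithm for an arbitrary $\rhd$. At Step~1 every agent points to $a$. The object $a$ is unassigned at $\mu^\varnothing$ and all its co-owners remain, so $a$ points to the $\rhd$-highest co-owner $i^*\in S^a$. The unique cycle is $a\to i^*\to a$, so $i^*$ receives $a$. At every later step each remaining agent points to $\varnothing$ and receives it. Thus each element of $\STTC(\omega,R)$ assigns $a$ to some member of $S^a$, and never to $k$.

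\textbf{Step 3: the allocation $\mu$ is exclusion core stable.} Let $\mu$ give $a$ to $k$ and $\varnothing$ to everyone else. Suppose a coalition $S$ exclusion-blocks $\mu$ via $\sigma$.
\begin{itemize}
\item Every member of $S$ must be strictly better off. Only $a$ beats $\varnothing$, so each member of $S$ must receive $a$ at $\sigma$. Hence $S=\{l\}$ for some $l\neq k$ (agent $k$ already holds his top object).
\item Agent $k$ is then worse off at $\sigma$, so exclusion-enforcement requires $a\in\Omega(\{l\}\mid\mu)$.
\item Compute the recursion. We have $\omega(\{l\})=\{o\mid S^o\subseteq\{l\}\}$, and $a\notin\omega(\{l\})$ because $|S^a|\ge 2$. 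The only occupied real object at $\mu$ is $a$, so $\mu^{-1}(\omega(\{l\}))=\emptyset$ and the recursion stops at once.
\item Therefore $\Omega(\{l\}\mid\mu)=\omega(\{l\})\not\ni a$, a contradiction.
\end{itemize}
So $\mu\in\EC(\omega,R)\setminus\STTC(\omega,R)$, which proves the strict inclusion.

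\textbf{Main obstacle.} The only delicate point is the rigor of Step~1: checking that the extended algorithm's rule for objects whose owner has left fits the owner-respecting framework of Remark~\ref{remark2}. The argument above handles this by noting that such objects become public in the reduced instance. The same construction also shows the stronger claim in the text. Any owner-respecting flexible TTC algorithm started at $\mu^\varnothing$ sends $a$ to a co-owner at Step~1, so $\mu$ is not generated by any such algorithm.
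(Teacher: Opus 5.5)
Your proof is correct and follows the paper's argument essentially step for step. You use the same profile, with $a$ as the only acceptable object. You use the same allocation, which gives $a$ to a non-co-owner. You get the inclusion from Remark~\ref{remark2}, and you rule out generation by the first-step self-cycle of $a$ with its highest-priority co-owner.

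The only difference is the order of the stability argument. You first use indivisibility to reduce the blocking coalition to a singleton $\{l\}$, and then observe that the eviction recursion stops immediately. The paper instead first shows that $a\in\Omega(S\mid\mu^a)$ forces every co-owner of $a$ into $S$, and only then invokes indivisibility.
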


The proof fixes a properly co-owned object $a$ and chooses preferences under which every agent ranks $a$ above $\varnothing$ and every other object below $\varnothing$. Let $\mu^a$ denote the allocation that assigns $a$ to a non-co-owner and $\varnothing$ to every other agent. By Remark~\ref{remark2}, for every priority order $\rhd$, $\STTC(\omega,R,\mu^\varnothing,\rhd)\in\EC(\omega,R)$. Hence, \(\STTC(\omega,R)\subseteq\EC(\omega,R)\). The allocation $\mu^a$ is exclusion core stable but is not an STTC outcome. Therefore, \( \STTC(\omega,R)\subsetneq\EC(\omega,R).\) Appendix~\ref{appendix:main} contains the proof of Theorem~\ref{thm.strictsuperset}.\par\smallskip

In fact, $\mu^a$ also lies outside the range of every owner-respecting flexible TTC algorithm described in Remark~\ref{remark2}. The underlying tension is that these trading-cycles algorithms respect the co-owners' claims, whereas exclusion core stability may sustain an outsider's consumption of a properly co-owned object because a blocking coalition must make all its members strictly better off.\par\smallskip

Combining Proposition~\ref{prop.corecharacterization} and Theorem~\ref{thm.strictsuperset} yields a maximal domain result.

\begin{corollary}[\textbf{A maximal domain result}]\label{cor:maxdomain}
For any endowment structure $\omega$, the following statements are equivalent:
\begin{enumerate}
    \item[\textup{(i)}] for every preference profile $R$, $\STTC(\omega,R)=\EC(\omega,R)$;
    \item[\textup{(ii)}] $\omega$ is a mixed-ownership structure, i.e., each object is either individually or publicly owned.
\end{enumerate}
\end{corollary}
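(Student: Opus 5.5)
The equivalence follows by combining the two earlier results, one for each direction; no new construction is needed.

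\emph{Direction (ii)$\Rightarrow$(i).} Suppose $\omega$ is a mixed-ownership structure. Then no object is properly co-owned. The extended $(\rhd,\mu)$-STTC algorithm therefore uses only the outgoing-edge rules for individually and publicly owned objects, so it coincides with the STTC algorithm of Section~\ref{subsec.algorithm}. Likewise, the generalized direct eviction set $\omega(S)=\{o\in O\mid S^o\subseteq S\}$ coincides with the one from Section~\ref{subsec.model}, as noted in the footnote preceding Theorem~\ref{thm.strictsuperset}. The extended eviction sets, exclusion-blocking, and the exclusion core are therefore the same under both formalisms. Hence $\STTC(\omega,R)=\{\STTC(R,\mu^\varnothing,\rhd)\mid\rhd\in\Delta\}$, and by Proposition~\ref{prop.corecharacterization} this set equals $\EC(R)=\EC(\omega,R)$ for every preference profile $R$.

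\emph{Direction (i)$\Rightarrow$(ii).} I argue by contraposition. Suppose $\omega$ is not a mixed-ownership structure. Every object has a nonempty owner set $S^o$, which is either a singleton, all of $N$, or a set with $1<|S^o|<n$. So some object is properly co-owned. Theorem~\ref{thm.strictsuperset} then yields a profile $R$ with $\STTC(\omega,R)\subsetneq\EC(\omega,R)$, and (i) fails for that profile.

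\emph{Main obstacle.} The only point needing care is the bookkeeping in the first direction: checking that the extended algorithm and the generalized eviction notion reduce exactly to the original ones under mixed ownership. In particular, the rule for properly co-owned objects is never triggered. The reduction step also needs a check, since it can turn an individually owned object into a public one once its owner leaves, but it never creates a properly co-owned object. With that verified, both directions are immediate.
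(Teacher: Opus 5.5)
Your proposal is correct and matches the paper's proof: (ii)$\Rightarrow$(i) follows from Proposition~\ref{prop.corecharacterization}, and (i)$\Rightarrow$(ii) follows by contraposition from Theorem~\ref{thm.strictsuperset}. Your extra bookkeeping checks that, under mixed ownership, the extended algorithm and the generalized eviction sets reduce to the originals, and that reduction never creates a properly co-owned object. This only makes explicit what the paper leaves implicit in the phrase ``each associated problem $R$ is a mixed-ownership problem.''
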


Appendix~\ref{appendix:main} contains the proof of Corollary~\ref{cor:maxdomain}.
Corollary~\ref{cor:maxdomain} shows that, among domains defined by restrictions on endowment structures, the mixed-ownership domain is the largest on which the STTC algorithms characterize the exclusion core.

\subsection{An exclusion-core-stability test}

The STTC algorithm also provides a simple test for exclusion core stability. An allocation is exclusion core stable exactly when it is a fixed point of the $(\rhd,\mu)$-STTC algorithm---for one, and hence every, priority order $\rhd$. Otherwise, an earliest non-$\mu$-cycle identifies an exclusion-blocking coalition.

\begin{theorem}[\textbf{An exclusion-core-stability test}]\label{theorem:coretest}
Let $R$ be a problem and $\mu$ an allocation. Then the following statements are equivalent:
\begin{enumerate}
\item[\textup{(i)}] $\mu$ is exclusion core stable;
\item[\textup{(ii)}] for some priority order $\rhd$, $\mu=\STTC(R,\mu,\rhd)$;
\item[\textup{(iii)}] for every priority order $\rhd$, $\mu=\STTC(R,\mu,\rhd)$.
\end{enumerate}
\end{theorem}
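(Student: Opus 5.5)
The implications (iii)$\Rightarrow$(ii) and (ii)$\Rightarrow$(i) are immediate. The first is trivial, since $\Delta$ is nonempty. For the second, if $\mu=\STTC(R,\mu,\rhd)$ for some $\rhd$, then Proposition~\ref{prop.sttc-exclusion} gives that $\mu$ is exclusion core stable. The substance is therefore (i)$\Rightarrow$(iii), which I would prove by contraposition: fix any priority order $\rhd$ and suppose $\STTC(R,\mu,\rhd)\neq\mu$. The aim is to exhibit a coalition that exclusion-blocks $\mu$, namely the agent set of an earliest non-$\mu$-cycle.

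First I would show that if only $\mu$-cycles form at every step, then the output equals $\mu$. Each agent in a $\mu$-cycle receives $\mu(i)$ by definition. The point to check is that the reduced allocation at each step agrees with $\mu$ on remaining agents. This holds because a $\mu$-cycle only removes objects held at $\mu$ by its own members, so no remaining agent's allotment is taken. Agents left when objects run out hold $\varnothing$ in the reduced allocation; for them I must verify $\mu(i)=\varnothing$, which follows since their $\mu$-allotment was never removed by someone else. Hence $\STTC(R,\mu,\rhd)\neq\mu$ forces a non-$\mu$-cycle to exist. Let $t^\star$ be the first step at which one forms, and let $\mathcal{C}$ be an earliest non-$\mu$-cycle with agent set $S=S(\mathcal{C})$.

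Next, define $\sigma$ by giving each $i\in S$ the object he points to in $\mathcal{C}$, and $\sigma(j)=\mu(j)$ for $j\notin S$, except that an outsider whose $\mu$-object is taken by $S$ gets $\varnothing$. This is an allocation, because objects in $\mathcal{C}$ are removed only once and all steps before $t^\star$ consist of $\mu$-cycles, so the reduced allocation at $t^\star$ coincides with $\mu$ on $N^{t^\star}$. Strict improvement of every $i\in S$ follows from two facts. Each $i$ points to his top choice among remaining objects, and $\mu(i)$ is still remaining (or is $\varnothing$). The cycle is non-$\mu$, so no member is in a self-cycle at $\mu(i)$. I need this to rule out any member pointing to his own $\mu(i)$. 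If some member pointed to $\mu(i)$, that object would have to point back to him, either as public object held at $\mu$ or as object he owns, which would close a cycle of length one and contradict non-$\mu$-ness. This "own object points back" step requires care: an individually owned $\mu(i)$ owned by another agent does not point back to $i$. In that case $i$ pointing to $\mu(i)$ is fine, and the cycle still contains some agent receiving something else. So the correct claim is weaker. Every agent gets something weakly better, and I would restrict the blocking coalition to the agents of $\mathcal{C}$ who do not receive $\mu(i)$, or else keep $S$ whole and check strictness more carefully.

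The main obstacle is exclusion-enforcement: every harmed outsider $j$ (one with $\mu(j)\in\sigma(S)$) must satisfy $\mu(j)\in\Omega(S'\mid\mu)$ for the chosen coalition $S'$. The plan is to trace $\mathcal{C}$ backward from such an object $o$ along its edges. The object $o$ points either to its owner or, if public and assigned at $\mu$, to $j$ itself, in which case $j\in\mathcal{C}$. A contradiction arises unless $o$ is individually owned by the cycle agent following it. Following the ownership--occupancy links along the cycle then gives a chain of dependence ending at a coalition member, so Lemma~\ref{lemm.evictionproperties}(d) yields $o\in\Omega$. Segments of agents who keep $\mu(i)$ are exactly links $i\to\mu(i)\to$ owner, which is what the chain in that lemma requires, so the backward chain reaches an improving member. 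Public objects in $\mathcal{C}$ that are unassigned at $\mu$ harm nobody. If every object of $\mathcal{C}$ is public and held by its pointer, $\mathcal{C}$ is a $\mu$-cycle, which cannot happen. Objects held earlier by agents already removed in $\mu$-cycles cannot appear in $\mathcal{C}$. Assembling these cases gives the blocking and proves (i)$\Rightarrow$(iii).
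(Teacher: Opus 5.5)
Your proposal is correct and follows essentially the paper's route: (ii)$\Rightarrow$(i) is Proposition~\ref{prop.sttc-exclusion}, and (i)$\Rightarrow$(iii) is the paper's Lemma~\ref{lemm.fixedpoint}. That lemma likewise takes an earliest non-$\mu$-cycle and lets its strict improvers block $\mu$ while displaced outsiders receive $\varnothing$. It then places each displaced object in the extended eviction set by walking the cycle from that object to the first improver (Lemma~\ref{lemm.cyclechain}) and invoking Lemma~\ref{lemm.evictionproperties}(d).

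Drop the alternative of keeping all of $S(\mathcal{C})$: an agent who points to his own $\mu$-allotment, owned by the next agent, is not strictly better off. Also state why each object on the walk is owned by the next cycle agent in the original endowment. Were it public in the reduced instance, it would point to its current holder (the displaced outsider or the preceding agent). And since the displaced outsider is still present, at least two agents remain, so reduced ownership implies original ownership.
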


That \textup{(ii)} implies \textup{(i)} is Proposition~\ref{prop.sttc-exclusion}, and that \textup{(iii)} implies \textup{(ii)} is immediate. For the remaining implication, suppose that $\mu$ is exclusion core stable and that $\STTC(R,\mu,\rhd)\neq\mu$ for some priority order~$\rhd$. Then an earliest non-$\mu$-cycle exists, and the agents in that cycle who strictly improve upon their $\mu$-allotments form an exclusion-blocking coalition, contradicting the exclusion core stability of $\mu$. Appendix~\ref{appendix:main} contains the proof of Theorem~\ref{theorem:coretest}.
Example~\ref{example:STTC} in Appendix~\ref{appendix:STTCexample} illustrates the earliest-non-$\mu$-cycle argument used in the proof of Theorem~\ref{theorem:coretest}.\par\smallskip

The construction behind Theorem~\ref{thm.strictsuperset} also shows that this fixed-point characterization does not extend to object allocation problems with proper co-ownership. Recall the allocation $\mu^a$ defined after Theorem~\ref{thm.strictsuperset}, which is exclusion core stable. However, when an owner-respecting flexible TTC algorithm is initialized at $\mu^a$, the properly co-owned object is reassigned from a non-co-owner to a co-owner at the first step. Hence, the implication from exclusion core stability to the fixed-point condition fails.

\subsection{Von Neumann--Morgenstern stability}\label{subsec.vNM}

We first introduce the standard definition of a von Neumann--Morgenstern stable set (\citeyear{vonneumann1944theory}) with respect to an arbitrary domination relation between allocations.

\begin{defi}[\textbf{Von Neumann--Morgenstern stable sets}]
A set of allocations $\mathcal{V}$ is \textbf{von Neumann--Morgenstern stable} with respect to a domination relation if it is
\begin{itemize}
    \item \textbf{internally stable}: for all distinct $\mu,\nu\in\mathcal{V}$, $\nu$ does not dominate $\mu$; and
    \item \textbf{externally stable}: for each $\mu \notin \mathcal{V}$, there exists $\nu \in \mathcal{V}$ that dominates $\mu$.
\end{itemize}
\end{defi}

The domination relation we use is the one induced by exclusion-blocking.

\begin{defi}[\textbf{Exclusion-domination}]\label{defi.exclusiondomination}
Given a problem $R$ and allocations $\mu$ and $\nu$, we say that $\nu$ \textbf{exclusion-dominates} $\mu$ if some coalition $S$ exclusion-blocks $\mu$ via $\nu$.
\end{defi}

We next show that the exclusion core is always a von Neumann--Morgenstern stable set with respect to exclusion-domination. In addition, we show that, for every allocation $\mu\notin\EC(R)$ and every priority order $\rhd$, the outcome of the $(\rhd,\mu)$-STTC algorithm belongs to the exclusion core and exclusion-dominates $\mu$.

\begin{theorem}[\textbf{Von Neumann--Morgenstern stability of the exclusion core}]\label{theorem:vNMstability} Let $R$ be a problem. Then,  $\EC(R)$ is a von Neumann--Morgenstern stable set with respect to exclusion-domination, i.e., it is
\begin{itemize}
    \item \textbf{internally stable}: for all distinct $\mu,\nu\in\EC(R)$, $\nu$ does not exclusion-dominate $\mu$; and
    \item \textbf{externally stable}: for each $\mu \notin \EC(R)$ and each priority order $\rhd$, $\STTC(R,\mu,\rhd)$ belongs to $\EC(R)$ and exclusion-dominates $\mu$.
\end{itemize}\par
\end{theorem}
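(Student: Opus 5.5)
Internal stability is immediate from Definition~\ref{defi.exclusioncore}: if $\mu\in\EC(R)$, then no coalition exclusion-blocks $\mu$ via any allocation, so no $\nu$ exclusion-dominates $\mu$. The whole proof therefore concerns external stability. Fix $\mu\notin\EC(R)$ and a priority order $\rhd$, and let $\sigma:=\STTC(R,\mu,\rhd)$. By Proposition~\ref{prop.sttc-exclusion}, $\sigma\in\EC(R)$. It remains to show that $\sigma$ exclusion-dominates $\mu$.

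The blocking coalition will be $S:=\{i\in N\mid \sigma(i)\mathbin{P_i}\mu(i)\}$. Condition~2 of Definition~\ref{defi.exclusioncore} then holds by construction. Two facts remain to be established: $S\neq\emptyset$, and every harmed agent $j\notin S$ (that is, $\mu(j)\mathbin{P_j}\sigma(j)$) satisfies $\mu(j)\in\Omega(S\mid\mu)$.

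The first observation concerns an agent $i$ removed at Step~$t$. He points to his most-preferred object among $O^t\cup\{\varnothing\}$. Hence, if $\mu(i)$ is still present at Step~$t$, then $\sigma(i)\mathbin{R_i}\mu(i)$, and either $\sigma(i)=\mu(i)$ or $i\in S$. Consequently, an agent can be harmed only if $\mu(i)$ is a real object that was removed, at Step~$t$ or earlier, in a cycle assigning it to another agent.

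I will prove, by induction on the step $t$ at which $\mu(j)$ is removed, the following claim: for every agent $j\notin S$ whose $\mu$-allotment $o:=\mu(j)$ is a real object taken by some other agent $k$, we have $o\in\Omega(S\mid\mu)$. This covers all harmed agents (and also indifferent ones, which is harmless).

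\begin{itemize}
\item \emph{The object $o$ is individually owned.} Suppose $o$ were public. Since $o$ is assigned to $j$ at $\mu$ and is still present when it is removed, it points to $j$ in the reduced graph (the reduced allocation keeps $\mu'(j)=o$). Then $j$ would be the agent receiving $o$, contradicting $k\neq j$.
\item \emph{Tracing the cycle.} So $o$ points to its owner $i^1$ within the cycle, and $i^1$ points to $o^2=\sigma(i^1)$. If $i^1\in S$, Lemma~\ref{lemm.evictionproperties}(a) gives $o\in\Omega(S\mid\mu)$.
\item \emph{Case $\mu(i^1)$ removed before Step~$t$.} If $i^1\notin S$ and $\mu(i^1)$ is a real object removed at an earlier step by another agent, the induction hypothesis gives $\mu(i^1)\in\Omega(S\mid\mu)$. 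Closure, Lemma~\ref{lemm.evictionproperties}(c), then yields $\omega_{i^1}\subseteq\Omega(S\mid\mu)$, so $o\in\Omega(S\mid\mu)$.
\item \emph{Case $\sigma(i^1)=\mu(i^1)=o^2$.} Otherwise $o^2$ is a real object occupied by $i^1$ at $\mu$. We continue along the cycle: $o^2$ is owned by $i^2$, and so on, building an ownership--occupancy chain as in Lemma~\ref{lemm.evictionproperties}(d).
\item \emph{Termination.} The chain cannot run all the way around the cycle. The cycle contains $k$, who receives $o\neq\mu(k)$, so some agent on the cycle is not a ``$\sigma=\mu$'' agent. Hence the chain ends at an agent who lies in $S$ or whose endowment lies in $\Omega(S\mid\mu)$. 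Lemma~\ref{lemm.evictionproperties}(d) then gives $o\in\Omega(S\mid\mu)$.
\end{itemize}

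Within a single step the argument is not circular, because the induction hypothesis is invoked only for objects removed at strictly earlier steps. Each agent's $\mu$-allotment in the current cycle is either present or earlier-removed, and the present case feeds the chain.

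Nonemptiness of $S$ follows from the same argument. If $S=\emptyset$, then $\Omega(S\mid\mu)$ would have to be built from an empty chain endpoint, and the chain argument would produce a contradiction for any agent whose $\mu$-allotment is taken by another. So every agent would keep $\mu(i)$ or receive $\varnothing=\mu(i)$, giving $\sigma=\mu$. By Theorem~\ref{theorem:coretest}, this contradicts $\mu\notin\EC(R)$.

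The main obstacle I expect is the careful bookkeeping in the chain step under the reduced instances of Definition~\ref{defi.residualgraph}. Three points need checking. First, when only one agent remains, $\omega'_i=O'$, which breaks $\omega'_i\subseteq\omega_i$. In that case all other agents have left, and I will handle it separately by noting that a public object then points to this agent as its assignee or by default. Second, objects whose owner has left become public in the reduced instance; I must verify that such an object, if occupied at $\mu$ by a remaining agent, still points to that occupant. Third, self-cycles through $\varnothing$ must be dealt with. I would first try to absorb all three into the single observation that every edge $(o,i)$ used in the chain is either a genuine ownership edge or an edge to the $\mu$-occupant.
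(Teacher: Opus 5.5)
Your proof is correct, and it reaches external stability by a shorter route than the paper's.

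\textbf{Comparison with the paper.} The paper proves Theorem~\ref{theorem:vNMstability} jointly with Theorem~\ref{theorem:pathtoexclusion}. It follows the steps at which non-$\mu$-cycles form and builds intermediate allocations $\mu^1,\ldots,\mu^K$: cycle agents take their final allotments and displaced outsiders receive $\varnothing$. It maintains a history-and-location invariant and shows, by induction over these steps, that the cumulative coalition $\widehat S^\ell$ exclusion-blocks $\mu$ via $\mu^\ell$. This coalition consists of the non-$\mu$-cycle agents who prefer their final allotment to their original one. The induction hypothesis places in $\Omega(\widehat S^\ell\mid\mu)$ the endowment of a local improver who was displaced earlier.

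You instead fix the target $\sigma$ and the maximal coalition $S^+(\mu,\sigma)$ from the outset, and induct on the step at which a harmed agent's original object is removed. Your induction hypothesis plays exactly the role of the paper's. Your cycle walk is Lemma~\ref{lemm.cyclechain} with $\nu:=\mu$, stopped at the first agent with $\sigma(i)\neq\mu(i)$ rather than at the first local improver relative to $\mu^{\ell-1}$; along the walk these are the same agent.

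What each route buys:
\begin{itemize}
\item Your version needs neither the intermediate allocations nor the history-and-location property.
\item The paper's version gets the self-contained blocking path from the same cycles. It also yields a possibly smaller coalition $\widehat S^K\subseteq S^+(\mu,\sigma)$, which omits improvers who never join a cycle. Lemma~\ref{lemm.selfcontained} shows that its blocking implies yours.
\end{itemize}

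\textbf{A justification that needs repair.} In your first bullet, an object pointing to $j$ does not make $j$ its recipient. It makes $j$ the next agent on the cycle, while $o$ goes to whoever points to it. Use your first observation instead.
\begin{itemize}
\item If $j$ had left at some Step~$s\leq t$, then $o\in O^s$. He received his most-preferred object in $O^s\cup\{\varnothing\}$, which is not $o$, so $\sigma(j)\mathbin{P_j}o$ and $j\in S$.
\item Hence $j\notin S$ forces $j\in N^t$, off the cycle, with $\mu'(j)=o$.
\item If $o$ were public in the reduced instance (originally public, or with a departed owner), it would point to $j$ and put $j$ on the cycle. That is the same contradiction.
\end{itemize}

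This also settles your bookkeeping concerns:
\begin{itemize}
\item Since $j$ and $k$ are both present, $|N^t|\geq2$, so reduced ownership implies original ownership.
\item For $r\geq2$, the object $o^r=\mu(i^{r-1})$ is held in $\mu'$ by $i^{r-1}\neq i^r$, so it cannot be public in the reduced instance.
\item A cycle containing the real object $o$ is ordinary.
\end{itemize}

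Your nonemptiness argument is sound once it is placed at the earliest step at which some real $\mu$-object is taken by another agent. If $S=\emptyset$, the walk from that object can stop neither at a member of $S$ nor at an agent whose $\mu$-object was taken earlier. Yet it must stop before closing the cycle, which is a contradiction.
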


We verify the two stability conditions separately.

\paragraph{Internal stability.} Internal stability follows directly from the definition of the exclusion core: no allocation exclusion-dominates an exclusion-core-stable allocation.

\paragraph{External stability.} External stability is the substantive condition, and Theorem~\ref{theorem:vNMstability} asserts more than the existence of some dominating exclusion-core-stable allocation: for each priority order $\rhd$, the allocation $\STTC(R,\mu,\rhd)$ both belongs to the exclusion core and exclusion-dominates $\mu$. The coalition that carries out this exclusion-blocking is assembled from the cycles of the $(\rhd,\mu)$-STTC algorithm, and the same cycles generate the blocking path of Theorem~\ref{theorem:pathtoexclusion}. We therefore prove the two theorems together: the joint argument shows how the same STTC cycles generate both local blocking transitions and a global exclusion-dominating allocation. Section~\ref{subsec.paths} outlines the common construction, Appendix~\ref{appendix:main} contains the joint proof, and Example~\ref{example:STTC} in Appendix~\ref{appendix:STTCexample} illustrates both conclusions on a concrete problem.\par\smallskip

\citet[][Proposition~4]{balbuzanov2019endowments} establish a related stability result for what they call relational economies with acyclic priority structures. Their proposition concerns the strong and weak exclusion cores of such an economy, which are derived from conditional endowments that vary with the prevailing allocation; by their Lemma~3(b), these solutions coincide with the exclusion core of an object allocation problem exactly when every object is individually or publicly owned. How far Theorem~\ref{theorem:vNMstability} itself extends is taken up in Appendix~\ref{appendix:acyclic}. An endowment structure is acyclic in the sense of \citet{balbuzanov2019endowments} exactly when either no object is properly co-owned or all objects share a single set of owners (Proposition~\ref{prop.acyclicchar} in Appendix~\ref{appendix:acyclic}). Theorem~\ref{thm.uniformvNM} (Appendix~\ref{appendix:acyclic}) establishes von Neumann--Morgenstern stability on the second class, so the conclusion of Theorem~\ref{theorem:vNMstability} holds throughout the acyclic class (Corollary~\ref{cor.acyclicvNM} in Appendix~\ref{appendix:acyclic}). Beyond the acyclic class, external stability may fail: Example~\ref{exam:noexstable} in Appendix~\ref{appendix:acyclic} exhibits a complex-endowment problem whose exclusion core is not externally stable,\footnote{We are grateful to Jacob Coreno for this example.} whereas Example~\ref{exam:beyondacyclic} in Appendix~\ref{appendix:acyclic} exhibits a non-acyclic endowment structure whose exclusion core is externally stable at every preference profile.\par\medskip

\subsection{Paths to exclusion core stability}\label{subsec.paths}

The core concepts introduced in Section~\ref{subsec.core} are static: they are defined by the absence of blocking. Blocking, however, describes a coalitional deviation and is inherently dynamic. This raises a natural question: does an explicit process of blocking lead to core allocations? \citet{serrano2008mistakes} answer this question affirmatively for classical core concepts.\par\smallskip

Paths to stability have also been established for marriage markets by \citet{roth1990random}, for roommate markets admitting stable matchings by \citet{DIAMANTOUDI200418}, and for matching markets with couples under weak responsiveness by \citet{KLAUS2007154}. We now tackle this question for exclusion-blocking and the exclusion core.\par\medskip

When a coalition $S$ exclusion-blocks $\mu$ via $\sigma$, every member of $S$ is strictly better off at $\sigma$. Implementing $\sigma$, however, may require agents outside $S$ who also benefit from it. Thus, $S$ need not have sufficient exclusion rights to implement the blocking on its own. The following simple example illustrates this point.\par\smallskip

Suppose that $N=\{1,2\}$, agent~1 owns object~$a$, agent~2 owns object~$b$, and preferences satisfy
\( b\mathbin{P_1}a\mathbin{P_1}\varnothing\) and \(a\mathbin{P_2}b\mathbin{P_2}\varnothing.\)
Consider \(\mu=(a,b)\) and \(\sigma=(b,a)\).
Allocation $\sigma$ Pareto dominates $\mu$, and coalition $\{1\}$ exclusion-blocks $\mu$ via $\sigma$: agent~1 is strictly better off, while agent~2 is also better off and therefore need not be excluded. However, if only agent~1 participates and receives object~$b$, the resulting allocation is $(b,\varnothing)$. Since \( b\notin\Omega(\{1\}\mid\mu)=\{a\},\) coalition $\{1\}$ cannot exclusion-block $\mu$ via this allocation. Thus, although $\{1\}$ exclusion-blocks $\mu$ via $\sigma$, it cannot implement this blocking on its own.\par\medskip

To model exclusion-blocking dynamically, we distinguish the blocking allocation $\sigma$ from the allocation obtained when coalition members receive their $\sigma$-allotments and outsiders' allotments are disrupted as little as possible. This distinction leads to the notions of post-blocking allocations, self-contained exclusion-blockings, and the paths they generate.

\begin{defi}[\textbf{Post-blocking allocations and self-contained exclusion-blockings}]\label{defi.postblocking}
Given a problem $R$, allocations $\mu$ and $\sigma$, and a coalition $S$ that exclusion-blocks $\mu$ via $\sigma$, the \textbf{post-blocking allocation induced by $\bm{(S,\sigma)}$ at $\bm{\mu}$}, denoted by $\widehat\sigma^{S,\mu}$, is defined by\vspace{-0.2cm}
\begin{equation}
\widehat\sigma^{S,\mu}(i)=
\begin{cases}
\sigma(i) &\text{ if } i\in S,\\
\varnothing &\text{ if } i\notin S \text{ and } \mu(i)\in\sigma(S)\setminus\{\varnothing\},\\
\mu(i) &\text{ otherwise.}\footnotemark
\end{cases}\vspace{-0.2cm}
\end{equation}
\footnotetext{Since the null object has enough capacity for all agents, an outsider whose $\mu$-allotment is $\varnothing$ is never displaced; hence, the eviction clause refers to $\sigma(S)\setminus\{\varnothing\}$. The resulting map is an allocation: no real object is assigned to more than one coalition member; an outsider retains a real $\mu$-allotment only if no coalition member receives it; and all displaced outsiders receive the null object, which has enough capacity for all agents.} We also say that $\widehat\sigma^{S,\mu}$ is obtained by satisfying the exclusion-blocking of $\mu$ by $S$ via $\sigma$. This exclusion-blocking is \textbf{self-contained} if $S$ also exclusion-blocks $\mu$ via $\widehat\sigma^{S,\mu}$.
\end{defi}

We use self-contained exclusion-blockings to define the corresponding dynamic paths.

\begin{defi}[\textbf{Self-contained exclusion-blocking paths}]\label{defi.path}
For $K\geq0$, a sequence of allocations $\mu^0\mapsto\mu^1\mapsto\ldots\mapsto\mu^K$ is \textbf{a self-contained exclusion-blocking path of length $\bm{K}$ from $\bm{\mu^0}$ to $\bm{\mu^K}$} if, for every $\ell\in\{0,1,\ldots,K-1\}$, $\mu^{\ell+1}$ is the post-blocking allocation induced by a self-contained exclusion-blocking of $\mu^\ell$.
\end{defi}

The next lemma shows that restricting attention to self-contained exclusion-blockings does not eliminate any blocking opportunity.

\begin{lemm}[\textbf{Self-contained completion of an exclusion-blocking coalition}]\label{lemm.selfcontained}
Suppose that $S$ exclusion-blocks $\mu$ via $\sigma$, and let
\( S^+(\mu,\sigma):=\{i\in N\mid \sigma(i)\mathbin{P_i}\mu(i)\}\)
be the set of agents who are strictly better off at $\sigma$ than at $\mu$. Then $S\subseteq S^+(\mu,\sigma)$, $S^+(\mu,\sigma)$ exclusion-blocks $\mu$ via $\sigma$, and this exclusion-blocking is self-contained.
\end{lemm}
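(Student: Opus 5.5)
The plan is to verify the three claims of Lemma~\ref{lemm.selfcontained} in turn, using only the definitions and the monotonicity property in Lemma~\ref{lemm.evictionproperties}(b). Write $S^+:=S^+(\mu,\sigma)$ throughout.

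\textbf{Step 1: $S\subseteq S^+$.} This is immediate. Condition~2 of Definition~\ref{defi.exclusioncore} says that every $i\in S$ satisfies $\sigma(i)\mathbin{P_i}\mu(i)$, which is exactly membership in $S^+$.

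\textbf{Step 2: $S^+$ exclusion-blocks $\mu$ via $\sigma$.} Condition~2 holds by the definition of $S^+$. For condition~1, take any $j\notin S^+$ with $\mu(j)\mathbin{P_j}\sigma(j)$. Then $j\notin S$, because $S\subseteq S^+$. Since $S$ exclusion-enforces $\sigma$ at $\mu$, we get $\mu(j)\in\Omega(S\mid\mu)$. By Lemma~\ref{lemm.evictionproperties}(b), $\Omega(S\mid\mu)\subseteq\Omega(S^+\mid\mu)$, so $\mu(j)\in\Omega(S^+\mid\mu)$.

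\textbf{Step 3: the blocking is self-contained.} Let $\widehat\sigma:=\widehat\sigma^{S^+,\mu}$. Members of $S^+$ receive $\sigma(i)$ under $\widehat\sigma$, so they are strictly better off and condition~2 holds. For condition~1, take $j\notin S^+$ with $\mu(j)\mathbin{P_j}\widehat\sigma(j)$. If $j$ is not displaced, then $\widehat\sigma(j)=\mu(j)$, which contradicts strict preference. So $j$ is displaced: $\mu(j)\in\sigma(S^+)\setminus\{\varnothing\}$ and $\widehat\sigma(j)=\varnothing$. Since objects are assigned to at most one agent at $\sigma$ and $j\notin S^+$, we have $\sigma(j)\neq\mu(j)$. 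As $j\notin S^+$ and preferences are strict, it follows that $\mu(j)\mathbin{P_j}\sigma(j)$. Step~2 then gives $\mu(j)\in\Omega(S^+\mid\mu)$, as required.

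The main obstacle is Step~3. The key observation there is that a displaced outsider outside $S^+$ cannot keep his $\mu$-object at $\sigma$, because that object goes to a member of $S^+$. By antisymmetry he must then be strictly worse off at $\sigma$, so the exclusion-enforcement established in Step~2 covers him.
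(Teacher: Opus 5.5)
Your proposal is correct and follows essentially the same route as the paper's proof. You use monotonicity (Lemma~\ref{lemm.evictionproperties}(b)) to transfer exclusion-enforcement from $S$ to $S^+(\mu,\sigma)$, and you show that any outsider harmed at the post-blocking allocation is displaced and hence, by feasibility and strict preferences, already harmed at $\sigma$.
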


Appendix~\ref{appendix:main} contains the proof of Lemma~\ref{lemm.selfcontained}.\par\smallskip

In the preceding illustration, $S^+(\mu,\sigma)=\{1,2\}$, and its induced post-blocking allocation is $\sigma$. The completion therefore adds the improving outsider whose participation is needed to implement the blocking. The set $S^+(\mu,\sigma)$ contains every agent who improves, so it is the largest coalition that can carry out this blocking; in general, it need not be inclusion-minimal among the self-contained blocking coalitions.\par\medskip

Because self-contained exclusion-blockings capture every blocking opportunity and each of them induces a well-defined successor allocation, they can be applied repeatedly. Whether doing so leads to the exclusion core is the central question of the dynamic approach: does every allocation admit a self-contained exclusion-blocking path (Definition~\ref{defi.path}) to the exclusion core? We show that the answer is affirmative.\par\medskip

Before presenting our convergence result, we show that arbitrary sequences of self-contained exclusion-blockings need not converge. Small blocking coalitions may be particularly likely to form.
An agent can identify a profitable unilateral deviation without coordinating with others, and two agents may discover a joint deviation through bilateral communication. By contrast, forming and coordinating a larger coalition may be more difficult. Motivated by this possibility, in the following example we repeatedly execute the smallest self-contained exclusion-blocking coalitions, which have size one or two.

\begin{exam}[\textbf{The self-contained exclusion-blocking dynamics can cycle}]
Let $N=\{1,2,3\}$, $O=\{a,b,c\}$, and endowments $\omega_1=\{a\}$, $\omega_2=\{b\}$, and $\omega_3=\{c\}$. Preferences are given by\vspace{-0.2cm}
\[\begin{array}{ccccc}
R_1: & b & c & a & \varnothing\\
R_2: & c & a & b & \varnothing\\
R_3: & a & b & c & \varnothing.
\end{array}\vspace{-0.2cm}\]

Consider the following sequence of allocations, obtained by successively satisfying self-contained exclusion-blockings of minimum size. Starting from $\mu^1=(b,a,c)$, the successive blockings are summarized in the following table.

\begin{center}
\begingroup
\small
\setlength{\tabcolsep}{3pt}
\renewcommand{\arraystretch}{1.25}
\begin{tabular}{>{\centering\arraybackslash}p{0.9cm} >{\raggedright\arraybackslash}p{6.0cm} >{\raggedright\arraybackslash}p{4.0cm} >{\raggedright\arraybackslash}p{4.6cm}}
\hline
\textbf{Step} & \textbf{Blocking coalition and gains} & \textbf{Resulting allocation} & \textbf{Displaced outsider}\\
\hline
1 & $\{2,3\}$: $c\mathbin{P_2}a$, $b\mathbin{P_3}c$ & $\mu^2=(\varnothing,c,b)$ & Agent~$1$, from $b\in\omega_2$\\
2 & $\{1\}$: $a\mathbin{P_1}\varnothing$ & $\mu^3=(a,c,b)$ & None\\
3 & $\{1,3\}$: $c\mathbin{P_1}a$, $a\mathbin{P_3}b$ & $\mu^4=(c,\varnothing,a)$ & Agent~$2$, from $c\in\omega_3$\\
4 & $\{2\}$: $b\mathbin{P_2}\varnothing$ & $\mu^5=(c,b,a)$ & None\\
5 & $\{1,2\}$: $b\mathbin{P_1}c$, $a\mathbin{P_2}b$ & $\mu^6=(b,a,\varnothing)$ & Agent~$3$, from $a\in\omega_1$\\
6 & $\{3\}$: $c\mathbin{P_3}\varnothing$ & $\mu^1=(b,a,c)$ & None\\
\hline
\end{tabular}
\endgroup
\end{center}

At each step, every member of the listed coalition is strictly better off. At Steps~1, 3, and~5, the only displaced outsider occupies an object owned by a member of the blocking coalition; at Steps~2, 4, and~6, no outsider is displaced. Hence, every transition is generated by a self-contained exclusion-blocking, and Step~6 returns to $\mu^1$, so the sequence cycles.\par\smallskip

At every step, the listed coalition has minimum size among the available self-contained exclusion-blocking coalitions. At Steps~2, 4, and~6, they are singletons.

At Step~1, agent~1 already receives his most-preferred object, while
\( c\notin\Omega(\{2\}\mid\mu^1)=\{a,b\}\) and \(\{a,b\}\cap\Omega(\{3\}\mid\mu^1)=\emptyset,\)
so neither agent~2 nor agent~3 can block alone.
Similarly, at Step~3,
\( \{b,c\}\cap\Omega(\{1\}\mid\mu^3)=\emptyset\) and \( a\notin\Omega(\{3\}\mid\mu^3)=\{b,c\},\)
and at Step~5,
\( b\notin\Omega(\{1\}\mid\mu^5)=\{a,c\}\) and \(\{a,c\}\cap\Omega(\{2\}\mid\mu^5)=\emptyset.\)
Thus, no singleton exclusion-blocks at Steps~1, 3, or~5.\par\smallskip

Allocation \(\mu=(b,c,a)\) assigns every agent his most-preferred object. It therefore Pareto dominates every other allocation and is the unique Pareto efficient allocation. By Lemma~\ref{lemm.paretoexclusion}, for each $k\in\{1,\ldots,6\}$, the agents who are strictly better off at $\mu$ than at $\mu^k$ exclusion-block $\mu^k$ via~$\mu$.\hfill~$\diamond$
\end{exam}

Although the exclusion-blocking dynamics can cycle, the next theorem shows that every allocation admits some self-contained exclusion-blocking path to the exclusion core.

\begin{theorem}[\textbf{Paths to exclusion core stability}]\label{theorem:pathtoexclusion}
Let $R$ be a problem and $\mu$ an allocation. Then there exist $K\geq0$ and allocations $\mu^0,\mu^1,\ldots,\mu^K$ such that $\mu^0=\mu$, $\mu^K$ is exclusion core stable, and
\( \mu^0\mapsto\mu^1\mapsto\cdots\mapsto\mu^K\)
is a self-contained exclusion-blocking path.
\end{theorem}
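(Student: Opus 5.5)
We will build the path from the $(\rhd,\mu)$-STTC algorithm, for an arbitrary fixed priority order $\rhd$. If $\mu\in\EC(R)$, take $K=0$. Otherwise, by Theorem~\ref{theorem:coretest}, $\STTC(R,\mu,\rhd)\neq\mu$, so an earliest non-$\mu$-cycle exists. The plan has two parts. First, show that the agents of an earliest non-$\mu$-cycle who strictly improve form a coalition that can execute that cycle as a \emph{self-contained} exclusion-blocking of $\mu$. Second, show that the STTC outcome strictly advances after the resulting post-blocking allocation, so that the process terminates.

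\textbf{Step 1: the one-step move.} Let $t^\star$ be the first step at which a non-$\mu$-cycle forms, and let $\mathcal{C}$ be such a cycle in $\mathcal{G}^{t^\star}$. Define $\sigma$ by giving each agent of $S(\mathcal{C})$ the object he points to in $\mathcal{C}$, and leaving every other agent at $\mu$. Let $T$ be the agents of $\mathcal{C}$ with $\sigma(i)\neq\mu(i)$. Since each agent points to his top remaining object and all earlier cycles were $\mu$-cycles, agents in $T$ strictly improve. Set $\mu^1:=\widehat\sigma^{T,\mu}$.

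We must verify that $T$ exclusion-blocks $\mu$ via $\mu^1$, which is the self-containment condition. The outsiders harmed in $\mu^1$ are exactly the occupants at $\mu$ of objects in $\sigma(T)$. We show each such object lies in $\Omega(T\mid\mu)$ by tracing the cycle backward with Lemma~\ref{lemm.evictionproperties}(d). Each object $o$ in $\mathcal{C}$ points to an agent $i$ that is one of the following:
\begin{itemize}
\item its owner;
\item its $\mu$-occupant, if $o$ is public and assigned; in this case $i$ is a $\mu$-agent of the cycle who gets $o$ back, or $i$ is in $T$;
\item the highest-priority remaining agent, if $o$ is public and unassigned; an unassigned object displaces no one.
\end{itemize}
The delicate points are these. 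Objects removed at steps before $t^\star$ sit in $\mu$-cycles untouched by $T$. Reduced ownership can make an individually owned object public in the reduced instance, but only when its owner has already left in a $\mu$-cycle, and then $\mu$ assigns that object to a departed agent. Finally, when the reduced endowment is $\omega'_i=O'$ for a single remaining agent, the grand coalition argument applies. We expect to invoke Lemma~\ref{lemm.cyclechain} (the cycle--chain lemma referenced in the appendix) for exactly this, together with closure (c), to obtain $\sigma(T)\setminus\{\varnothing\}\subseteq\Omega(T\mid\mu)$.

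\textbf{Step 2: progress measure.} After moving to $\mu^1$, rerun the $(\rhd,\mu^1)$-STTC algorithm. We claim that every cycle executed by the $(\rhd,\mu)$-algorithm at steps $<t^\star$, together with $\mathcal{C}$, is now a $\mu^1$-cycle at the same or an earlier step. This holds because the relevant graphs agree: the agents involved hold the same objects, and public objects taken by $T$ now point to their new occupants. Hence the set of agents removed through $\mu^1$-cycles before the first non-$\mu^1$-cycle strictly grows, or the first non-$\mu^1$-cycle occurs at a later point in a lexicographic sense. We will use as potential the number of agents assigned in $\mu^k$-cycles before the earliest non-$\mu^k$-cycle. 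This potential is bounded by $n$ and strictly increases, since at least the agents of $\mathcal{C}$ are added. So after at most $n$ moves no non-$\mu^K$-cycle exists, meaning $\STTC(R,\mu^K,\rhd)=\mu^K$, and Theorem~\ref{theorem:coretest} gives $\mu^K\in\EC(R)$.

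\textbf{Main obstacle.} The main obstacle is Step 2. Displaced outsiders receive $\varnothing$ in $\mu^1$, which changes the edges of public objects they previously occupied and could alter earlier cycles. We will argue that any such displaced agent occupied an object in $\sigma(T)$. That object was not in an earlier $\mu$-cycle, so the displaced agent was not removed before $t^\star$ either. Hence the earlier steps are reproduced verbatim, and the cycle $\mathcal{C}$ reappears as a $\mu^1$-cycle at step $t^\star$ or earlier.
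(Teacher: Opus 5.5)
Your Step~1 is fine. It is the argument of Lemma~\ref{lemm.fixedpoint} applied to an allocation outside the core, and Lemma~\ref{lemm.cyclechain} together with Lemma~\ref{lemm.evictionproperties}(d) gives exactly the inclusion you need. Two small corrections there:
\begin{itemize}
\item Take the blocking allocation to be $\mu^1$ itself, because your $\sigma$ need not be an allocation when outsiders are displaced.
\item An object that becomes public because its owner left need not be held by a departed agent at $\mu$. Lemma~\ref{lemm.cyclechain} covers this case anyway.
\end{itemize}

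The genuine gap is Step~2. Rerunning the algorithm from $\mu^1$ does \emph{not} reproduce the earlier steps. The STTC graph depends on the allocation through the edges of public objects. A public object that a member of $T$ vacates, and that nobody in $\mathcal{C}$ takes, stops pointing to its former holder and instead points to the highest-priority remaining agent. This can close a new non-$\mu^1$-cycle before Step~$t^\star$. Your discussion of the main obstacle looks only at displaced outsiders and misses this effect.

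Here is a concrete case. Let $N=\{1,2,3\}$, $O=\{p,x,y\}$, with $p$ public, $\omega_1=\emptyset$, $\omega_2=\{y\}$, $\omega_3=\{x\}$, and $1\rhd 2\rhd 3$. Let $R_1: p,\varnothing,\ldots$, $R_2: x,y,p,\varnothing$, $R_3: x,\varnothing,\ldots$, and $\mu=(\varnothing,p,x)$.
\begin{itemize}
\item In the $(\rhd,\mu)$-run, Step~1 has only the $\mu$-cycle $3\to x\to 3$, because $p\to 2$. Step~2 has the non-$\mu$-cycle $2\to y\to 2$. So $t^\star=2$, $T=\{2\}$, $\mu^1=(\varnothing,y,x)$, and your potential equals $1$.
\item In the $(\rhd,\mu^1)$-run, $p$ is unassigned and points to agent~1. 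So $1\to p\to 1$ is a non-$\mu^1$-cycle already at Step~1, while $\mathcal{C}$ reappears only at Step~2.
\end{itemize}
The earliest non-$\mu^1$-cycle has moved \emph{earlier}. Your potential falls to $0$, or stays at $1$ if you count $\mu^1$-cycles formed at the same step. Neither it nor your lexicographic variant strictly increases. The bound of at most $n$ moves is therefore unsupported, and nothing in the proposal prevents the earliest non-$\mu^k$-cycle from moving back repeatedly.

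The fix is not to rerun the algorithm at all. Fix one $(\rhd,\mu)$-run and process its non-$\mu$-cycle steps $t^1<\cdots<t^K$ in order. At Step~$t^\ell$, the strict improvers relative to the current allocation $\mu^{\ell-1}$ execute all non-$\mu$-cycles of that step, and displaced outsiders receive $\varnothing$; skip the step if no cycle agent improves. The invariant you need is this: on $N^{t^\ell}$, $\mu^{\ell-1}$ coincides with the reduced allocation of the \emph{original} run at Step~$t^\ell$. In other words, each present agent still holds $\mu(i)$, or holds $\varnothing$ if he was displaced earlier. With this invariant, Lemma~\ref{lemm.cyclechain} applies with $\nu:=\mu^{\ell-1}$, and your Step~1 argument works at every stage relative to $\mu^{\ell-1}$. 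Termination is then automatic after $K$ moves. The last allocation is $\STTC(R,\mu,\rhd)$, which is exclusion core stable by Proposition~\ref{prop.sttc-exclusion}. This is the paper's construction.
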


Appendix~\ref{appendix:main} contains the joint proof of Theorems~\ref{theorem:vNMstability} and~\ref{theorem:pathtoexclusion}. We outline the common construction here.\par\smallskip

Let $\mu$ be an allocation that is not exclusion core stable, fix a priority order $\rhd$, and set
\( \mu^\rhd:=\STTC(R,\mu,\rhd).\)
By Proposition~\ref{prop.sttc-exclusion}, $\mu^\rhd$ is exclusion core stable, and by Theorem~\ref{theorem:coretest}, $\mu^\rhd\neq\mu$.\par\smallskip

Follow the steps of the $(\rhd,\mu)$-STTC algorithm at which non-$\mu$-cycles form. At each such step, assign the cycle agents their final STTC allotments and assign the null object to every outsider whose current allotment is taken. Let the cycle agents who strictly improve relative to the current allocation form the local coalition. Whenever this coalition is nonempty, a backward ownership-chain argument shows that it exclusion-blocks the current allocation and that the resulting allocation is precisely its post-blocking allocation. If the coalition is empty, the constructed allocation is unchanged. Omitting these repetitions yields a self-contained exclusion-blocking path from $\mu$ to $\mu^\rhd$.\par\smallskip

The same induction establishes external stability. The agents encountered along the construction who prefer their final STTC allotments to their original allotments form a cumulative coalition that exclusion-blocks $\mu$ via $\mu^\rhd$.\par\smallskip

Example~\ref{example:STTC} in Appendix~\ref{appendix:STTCexample} illustrates the construction.\par\medskip
How far Theorem~\ref{theorem:pathtoexclusion} extends beyond mixed ownership is taken up in Appendix~\ref{appendix:acyclic}. There, Proposition~\ref{prop.acyclicchar} shows that an endowment structure is acyclic exactly when either no object is properly co-owned or all objects share a single set of owners. The first case is covered by Theorem~\ref{theorem:pathtoexclusion} itself and the second by Theorem~\ref{thm.uniformpath} (Appendix~\ref{appendix:acyclic}), so the conclusion of Theorem~\ref{theorem:pathtoexclusion} holds throughout the acyclic class (Corollary~\ref{cor.acyclicpath} in Appendix~\ref{appendix:acyclic}).\par\medskip

Theorem~\ref{theorem:pathtoexclusion} exhibits one path to the exclusion core, constructed from the cycles of a single STTC algorithm. A decentralized process need not follow that path: at an allocation outside the exclusion core, several self-contained exclusion-blockings may be available, and which of them is carried out is not determined by the model. We therefore ask what happens when the blocking that occurs at each step is left to chance. The following process makes this precise.\par\smallskip

Fix a problem $R$ and an allocation $\mu$, and set $\mu^1:=\mu$. Given $\mu^\ell$, if $\mu^\ell$ is exclusion core stable, set $\mu^{\ell+1}:=\mu^\ell$. Otherwise, choose at random a post-blocking allocation induced by a self-contained exclusion-blocking of $\mu^\ell$ and denote it by $\mu^{\ell+1}$.\par\smallskip

The set of available post-blocking allocations is nonempty whenever $\mu^\ell$ is not exclusion core stable. Indeed, some coalition then exclusion-blocks $\mu^\ell$, and Lemma~\ref{lemm.selfcontained} yields a self-contained completion and hence an available post-blocking allocation. Thus, the process is well defined.\par\smallskip

Theorem~\ref{theorem:pathtoexclusion} yields the following corollary.

\begin{corollary}[\textbf{Random paths to exclusion core stability}]\label{coro:randompath}
Let $R$ be a problem and $\mu$ an allocation. Suppose that, in the random process defined above, every available post-blocking allocation is selected with positive probability and the selection probabilities depend only on the current allocation. Then, with probability one, the sequence $(\mu^\ell)_{\ell\geq 1}$ reaches an exclusion-core-stable allocation after finitely many steps and remains there.
\end{corollary}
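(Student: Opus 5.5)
The plan is to reduce Corollary~\ref{coro:randompath} to a standard absorption argument for a finite Markov chain, with Theorem~\ref{theorem:pathtoexclusion} supplying reachability. Because the selection probabilities depend only on the current allocation, the sequence $(\mu^\ell)_{\ell\ge1}$ is a time-homogeneous Markov chain. Its state space is the finite set $\mathcal{A}$ of allocations, since $N$ and $O$ are finite. The definition of the process makes every exclusion-core-stable allocation absorbing: once the chain hits $\EC(R)$, it remains there forever. This already gives the ``remains there'' part. It remains to show that $\EC(R)$ is hit in finitely many steps with probability one.

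For each $\nu\notin\EC(R)$, invoke Theorem~\ref{theorem:pathtoexclusion} to obtain a self-contained exclusion-blocking path $\nu=\nu^0\mapsto\nu^1\mapsto\cdots\mapsto\nu^{K_\nu}\in\EC(R)$. Truncate the path at its first entry into $\EC(R)$; this keeps it a valid path ending in the core. Along the truncated path, every transition $\nu^{k}\mapsto\nu^{k+1}$ starts from a non-core allocation, and $\nu^{k+1}$ is an available post-blocking allocation at $\nu^k$. By hypothesis, each such transition therefore has positive probability $p(\nu^k,\nu^{k+1})>0$. Hence the probability of following this path from $\nu$ is $q_\nu:=\prod_k p(\nu^k,\nu^{k+1})>0$, and it is realized within $K_\nu$ steps.

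Next, set $L:=\max_{\nu\notin\EC(R)}K_\nu$ and $q:=\min_{\nu\notin\EC(R)}q_\nu$. Both are well defined, with $q>0$, because $\mathcal{A}$ is finite. If $\EC(R)=\mathcal{A}$, the claim is trivial. From any non-core state, then, the chain enters $\EC(R)$ within $L$ steps with probability at least $q$. The Markov property, applied at times $1, L+1, 2L+1,\dots$, then gives
\[
\Pr\bigl(\mu^\ell\notin\EC(R)\text{ for all }\ell\le mL+1\bigr)\le(1-q)^m .
\]
Letting $m\to\infty$ shows that the hitting time is finite almost surely.

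The only genuinely delicate point is the one handled in the second step. The hypothesis must guarantee positive probability for exactly the transitions used by the deterministic path, which is why the path must consist of post-blocking allocations of self-contained blockings. The other ingredient is the uniformity of the bound $q$, and finiteness of the state space delivers it. Well-definedness of the process at non-core states follows from Lemma~\ref{lemm.selfcontained}, as noted before the statement.
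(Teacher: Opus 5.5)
Your proposal is correct and follows essentially the paper's argument. You choose one path per non-core allocation via Theorem~\ref{theorem:pathtoexclusion}, take the maximal length and the minimal path probability over the finitely many non-core allocations, and combine absorption with the geometric bound $(1-p)^r$. Your truncation step is harmless but unnecessary: every non-terminal allocation on a self-contained exclusion-blocking path is exclusion-blocked, so it already lies outside $\EC(R)$.
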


Appendix~\ref{appendix:main} contains the proof of Corollary~\ref{coro:randompath}.\par\smallskip

The next example applies the path result to Example~\ref{exam:extendedendowment} and clarifies the distinction between external stability and dynamic reachability. External stability is a static domination property, whereas a self-contained exclusion-blocking path describes how an exclusion-core-stable allocation can be reached through successive deviations.

\begin{exam}[\textbf{Example~\ref{exam:extendedendowment} continued}]
Consider again the problem from Example~\ref{exam:extendedendowment}, with $N=\{1,2,3\}$ and $O=\{a,b,c\}$. Agents $1$, $2$, and $3$ own objects $a$, $b$, and $c$, respectively. Preferences are as follows.\vspace{-0.2cm}
      \[ \begin{array}{ccccc}
          R_{1}: & \underline{a} & \bm{c} & b & \varnothing\\
          R_{2}: & \bm{a} & \underline{b} & c & \varnothing\\
          R_{3}: & \bm{b} & a & \underline{c} & \varnothing
      \end{array}\vspace{-0.2cm}      \]
Recall that coalition $S=\{1\}$ exclusion-blocks allocation \(\bm{\mu}=(\bm{c},\bm{a},\bm{b})\) via the exclusion-core-stable allocation \(\underline{\sigma}=(\underline{a},\underline{b},\underline{c})\).
At $\mu$, $\{1\}$ is the only coalition that exclusion-blocks $\mu$. We construct a self-contained exclusion-blocking path of length three from $\mu$ to $\underline{\sigma}$ and then show that no shorter path reaches the exclusion core. The three transitions are summarized in the following table.

\begin{center}
\begingroup
\small
\setlength{\tabcolsep}{3pt}
\renewcommand{\arraystretch}{1.25}
\begin{tabular}{>{\centering\arraybackslash}p{0.9cm} >{\raggedright\arraybackslash}p{6.0cm} >{\raggedright\arraybackslash}p{4.0cm} >{\raggedright\arraybackslash}p{4.6cm}}
\hline
\textbf{Step} & \textbf{Blocking coalition and gains} & \textbf{Resulting allocation} & \textbf{Displaced outsider}\\
\hline
1 & $\{1\}$: $a\mathbin{P_1}c$ & $\mu^1=(a,\varnothing,b)$ & Agent~$2$, from $a\in\omega_1$\\
2 & $\{2\}$: $b\mathbin{P_2}\varnothing$ & $\mu^2=(a,b,\varnothing)$ & Agent~$3$, from $b\in\omega_2$\\
3 & $\{3\}$: $c\mathbin{P_3}\varnothing$ & $\underline{\sigma}=(a,b,c)$ & None\\
\hline
\end{tabular}
\endgroup
\end{center}

At each step, the listed coalition exclusion-blocks the preceding allocation via the resulting post-blocking allocation. Hence,
\( \mu\mapsto\mu^1\mapsto\mu^2\mapsto\underline{\sigma}\)
is a self-contained exclusion-blocking path to the exclusion core.

\paragraph{Minimality.} No self-contained exclusion-blocking path from $\mu$ to the exclusion core has length less than three. At $\mu$, agents~$2$ and~$3$ receive their most-preferred objects, so only agent~$1$ can be made strictly better off; hence $S=\{1\}$, and agent~$1$ can improve only by receiving object~$a$. The induced post-blocking allocation is therefore $\mu^1$, the unique successor of $\mu$. At $\mu^1$, agents~$1$ and~$3$ receive their most-preferred objects, so any exclusion-blocking coalition must be $\{2\}$. Since $\Omega(\{2\}\mid\mu^1)=\{b,c\}$, agent~$2$ can obtain only $b$ or $c$. These choices yield, respectively, $\mu^2$ and $(a,c,b)$. Neither equals $\underline{\sigma}$, and $\EC(R)=\{\underline{\sigma}\}$ by Proposition~\ref{prop.corecharacterization}. Hence, no self-contained exclusion-blocking path of length at most two reaches the exclusion core.\par\smallskip

Although $\{1\}$ exclusion-blocks $\mu$ via $\sigma$, satisfying this blocking under the minimally disruptive rule yields $\mu^1$, not $\sigma$. The blocking is self-contained because $\{1\}$ also exclusion-blocks $\mu$ via $\mu^1$, but reaching $\sigma$ requires the three-step path above. Thus, external stability does not itself provide a dynamic route to the exclusion core.\hfill~$\diamond$
\end{exam}

\section{Concluding remarks}\label{sec.conclude}

We study the exclusion core in mixed-ownership object allocation problems and introduce the serial top trading cycles algorithm as a tool for analyzing it. For every initial allocation and every priority order, the STTC algorithm produces an exclusion-core-stable allocation. When the STTC algorithm is initialized at the null allocation, the set of its outcomes across priority orders coincides with the exclusion core. A single properly co-owned object is enough for this coincidence to fail. Among domains defined by restrictions on endowment structures, mixed ownership is therefore the largest domain on which the STTC algorithms generate exactly the exclusion core. The algorithm also provides a fixed-point test for exclusion core stability.\par\smallskip

The sequence of cycles generated by the STTC algorithm provides a common constructive basis for two further stability results. Locally, the agents who improve at successive non-$\mu$-cycle steps generate a self-contained exclusion-blocking path to the exclusion core. Globally, the agents who participate in non-$\mu$-cycles and prefer their final STTC allotments to their initial allotments form a cumulative coalition that exclusion-blocks the initial allocation. The path result also has a dynamic consequence. When the blocking carried out at each step is left to chance---with selection probabilities that depend only on the current allocation and give every available post-blocking allocation positive probability---the process reaches the exclusion core in finitely many steps with probability one.\par\smallskip

Proper co-ownership marks a sharp boundary for the characterization and the fixed-point test. In the presence of a single properly co-owned object, an exclusion-core-stable allocation may lie outside the range of every owner-respecting flexible TTC algorithm, and the fixed-point test may fail. The two stability results reach further. Appendix~\ref{appendix:acyclic} characterizes the acyclic endowment structures, those in which either no object is properly co-owned or all objects share a single set of owners, and shows that both results continue to hold on this larger class. Acyclicity is not necessary for external stability either: beyond it the exclusion core may fail to be externally stable, but it may also be externally stable at every preference profile. Whether every allocation admits a self-contained exclusion-blocking path to the exclusion core in the presence of proper co-ownership remains an open question; in both complex-endowment examples of Appendix~\ref{appendix:acyclic} such a path does exist.

\appendix

\section*{Appendices}

\section{Proofs of the results in Section~\ref{sec.pre}}\label{appendix:prelim}

\begin{proof}[\textbf{Proof of Lemma~\ref{lemm.evictionproperties}}]
Fix a problem $R$, an allocation $\mu$, and coalitions $S$ and $S'$. Recall from Section~\ref{subsec.core} that $S^\mu_0:=S$ and that, for each $k\geq0$, $S^\mu_{k+1}:=S^\mu_k\cup\mu^{-1}\bigl(\omega(S^\mu_k)\bigr)$. This sequence is increasing, and since $N$ is finite it is constant from some index on. Let $\ell$ be the smallest such index, so that $\Omega(S\mid\mu)=\omega(S^\mu_\ell)$. The four parts assert, respectively, that $\Omega(S\mid\mu)$ contains the direct eviction set, is monotone in the coalition, is closed under the endowments of agents whose allotments it contains, and propagates along chains of dependence.

We first record that the endowment function $\omega$ is \textbf{monotone}: for all coalitions $T\subseteq T'$, $\omega(T)\subseteq\omega(T')$. (This is distinct from the monotonicity of $\Omega(\cdot\mid\mu)$ asserted in part~(b).) If $T'=N$, then $\omega(T')=O\supseteq\omega(T)$. If $T'\subsetneq N$, then $\omega(T)=\bigcup_{i\in T}\omega_i \subseteq \bigcup_{i\in T'}\omega_i=\omega(T')$.

\proofcase{(a)}Since $S=S^\mu_0\subseteq S^\mu_\ell$, by monotonicity of $\omega$, $\omega(S)\subseteq\omega(S^\mu_\ell)=\Omega(S\mid\mu)$.

\proofcase{(b)}Suppose that $S\subseteq S'$, and let $\ell'$ be the smallest index from which $(S'^\mu_k)_{k\geq0}$ is constant. We show by induction that, for every $k\geq0$, $S^\mu_k\subseteq S'^\mu_k$. For $k=0$, $S^\mu_0=S\subseteq S'=S'^\mu_0$. For the induction step, let $k\geq0$ and suppose that $S^\mu_k\subseteq S'^\mu_k$. Then, by monotonicity of $\omega$, $\omega(S^\mu_k)\subseteq\omega(S'^\mu_k)$. Hence, $\mu^{-1}(\omega(S^\mu_k))\subseteq\mu^{-1}(\omega(S'^\mu_k))$. Thus, $S^\mu_{k+1}\subseteq S'^\mu_{k+1}$. Setting $\bar k:=\max\{\ell,\ell'\}$, we obtain $\Omega(S\mid\mu)=\omega(S^\mu_{\bar k}) \subseteq\omega(S'^\mu_{\bar k}) =\Omega(S'\mid\mu)$.

\proofcase{(c)}Let $i\in N$ and suppose that $\mu(i)\in\Omega(S\mid\mu)=\omega(S^\mu_\ell)$. Then $i\in\mu^{-1}(\omega(S^\mu_\ell))\subseteq S^\mu_{\ell+1}=S^\mu_\ell$, and therefore $\omega_i\subseteq\omega(S^\mu_\ell)=\Omega(S\mid\mu)$.

\proofcase{(d)}Let $q\geq1$, let agents $i^1,\ldots,i^q$ and real objects $o^1,\ldots,o^q$ form a chain of dependence, and suppose that $S$ reaches $i^q$; that is, $i^q\in S$ or $\omega_{i^q}\subseteq\Omega(S\mid\mu)$. We show by downward induction on $r$ that, for each $r\in\{1,\ldots,q\}$, $o^r\in\Omega(S\mid\mu)$. Let $r=q$. If $i^q\in S$, then by~(a), $o^q\in\omega_{i^q}\subseteq\omega(S)\subseteq\Omega(S\mid\mu)$. If $\omega_{i^q}\subseteq\Omega(S\mid\mu)$, then $o^q\in\omega_{i^q}\subseteq\Omega(S\mid\mu)$. Now let $r<q$ and suppose that $o^{r+1}\in\Omega(S\mid\mu)$. Since $\mu(i^r)=o^{r+1}$, by part~(c), $\omega_{i^r}\subseteq\Omega(S\mid\mu)$. Since $o^r\in\omega_{i^r}$, it follows that $o^r\in\Omega(S\mid\mu)$. In particular, $o^1\in\Omega(S\mid\mu)$.
\end{proof}

\begin{proof}[\textbf{Proof of Lemma~\ref{lemm.paretoexclusion}}]
Let $R$ be a problem and let $\mu$ and $\sigma$ be allocations such that $\sigma$ Pareto dominates $\mu$. Define $\widehat S:=\{i\in N\mid \sigma(i)\mathbin{P_i}\mu(i)\}$.
Since $\sigma$ Pareto dominates $\mu$, $\widehat S\neq\emptyset$. Moreover, every agent in $\widehat S$ is strictly better off at $\sigma$. Since no agent outside $\widehat S$ is worse off at $\sigma$, the exclusion-enforcement condition is vacuously satisfied. Therefore, $\widehat S$ exclusion-blocks $\mu$ via $\sigma$.
\end{proof}

\section{Proofs of the results in Section~\ref{sec.sttc}}\label{appendix:sttc}

\begin{proof}[\textbf{Proof of Proposition~\ref{prop.sttc-exclusion}}]
Fix a problem $R$, a priority order $\rhd$, and an initial allocation $\mu$, and set $\mu^{\rhd}:=\STTC(R,\mu,\rhd)$.
Suppose, contrary to the proposition, that a coalition $S$ exclusion-blocks $\mu^{\rhd}$ via an allocation $\sigma$. Then:
\begin{enumerate}
\item each agent in $S$ is strictly better off at $\sigma$, i.e., for each $i \in S$, $\sigma(i) \mathbin{P_i} \mu^{\rhd}(i)$; and
\item if, for some $j \not\in S$, $\mu^{\rhd}(j) \mathbin{P_j} \sigma(j)$, then $\mu^{\rhd}(j) \in \Omega(S \mid \mu^{\rhd})$.
\end{enumerate}

We first show that a coalition consisting only of agents who are still present at Step~$t$ has no exclusion rights over any object that has already been assigned. A present agent's endowment cannot have been removed while he remains, so the recursion defining the extended eviction set never leaves the set of agents present at Step~$t$. Formally, for every Step~$t$ at which the remaining sets $N^t$ and $O^t$ are defined, let $O^{<t}$ denote the set of real objects removed before Step~$t$; that is, $O^{<t}:=O\setminus O^t$. We show that, for every set of agents $T\subseteq N^t$,\vspace{-0.2cm}
\begin{equation}
\label{eq:remaining-eviction-invariant}
\Omega(T\mid\mu^\rhd)\cap O^{<t}=\emptyset.\vspace{-0.2cm}
\end{equation}

Since $O^{<1}=\emptyset$, the claim is immediate for $t=1$. Consider $t\geq2$. Write $T^{\mu^\rhd}_0:=T$ and $T^{\mu^\rhd}_{k+1}:=T^{\mu^\rhd}_k\cup(\mu^\rhd)^{-1}\bigl(\omega(T^{\mu^\rhd}_k)\bigr)$ for the recursion of Section~\ref{subsec.core} applied to $T$ at $\mu^\rhd$. We show by induction on $k$ that every agent in this recursion remains present at Step~$t$: for every $k\geq0$, $T^{\mu^\rhd}_k\subseteq N^t$.
Since $T^{\mu^\rhd}_0=T\subseteq N^t$, this is true for $k=0$.

Suppose that $T^{\mu^\rhd}_k\subseteq N^t$. Since $N^t$ is a proper subset of $N$, the direct eviction set $\omega(T^{\mu^\rhd}_k)$ consists only of objects individually owned by agents in $T^{\mu^\rhd}_k$. Every such object remains available at Step~$t$. Indeed, while its owner remains, the object points to that owner, so it could not have been removed in an earlier cycle without its owner also leaving in that cycle.

Consequently, any agent who receives an object in $\omega(T^{\mu^\rhd}_k)$ at the final allocation $\mu^\rhd$ also leaves no earlier than Step~$t$. Hence, $(\mu^\rhd)^{-1}\bigl(\omega(T^{\mu^\rhd}_k)\bigr)\subseteq N^t$, and therefore $T^{\mu^\rhd}_{k+1}\subseteq N^t$. This completes the induction.

The sequence $(T^{\mu^\rhd}_k)_{k\geq0}$ is increasing, and since $N$ is finite it is constant from some index on. Let $\ell_T$ be the smallest such index. All agents in the terminal set $T^{\mu^\rhd}_{\ell_T}$ are therefore still present at Step~$t$. Every object in $\Omega(T\mid\mu^\rhd)=\omega(T^{\mu^\rhd}_{\ell_T})$ is individually owned by one of these agents and hence also remains available at Step~$t$. This proves \eqref{eq:remaining-eviction-invariant}.

By going through the steps of the $(\rhd,\mu)$-STTC algorithm, we show that agents receiving their STTC allotments are not members of the exclusion-blocking coalition $S$.

\paragraph{$\bm{(\rhd,\mu)}$-STTC algorithm:}

\paragraph{Step~1.} At Step~1, every assigned agent receives his most-preferred object in $O \cup \{\varnothing\}$. Hence, no such agent is strictly better off at $\sigma$ than at $\mu^{\rhd}$, and none belongs to $S$. Hence, $S\subseteq N^2$.
Applying \eqref{eq:remaining-eviction-invariant} with $T=S$ and $t=2$ shows that no object removed at Step~1 belongs to $\Omega(S\mid\mu^\rhd)$. In particular, if agent $j$ is assigned his $\mu^\rhd$-allotment at Step~1, then $\mu^\rhd(j)\notin\Omega(S\mid\mu^\rhd)$.
We proceed inductively.

\paragraph{Step~$\bm{t \ge 2}$.}
Recall that no agent receiving his $\mu^{\rhd}$-allotment at Steps~$1,\ldots,t-1$ can belong to the exclusion-blocking coalition $S$. Hence, coalition $S$ is a subset of agents remaining after Step~$t-1$, i.e., $S\subseteq N^t$. Applying \eqref{eq:remaining-eviction-invariant} with $T=S$ shows that no object removed before Step~$t$ belongs to $\Omega(S\mid\mu^\rhd)$. Thus, if $j$ is assigned his $\mu^\rhd$-allotment at one of Steps~$1,\ldots,t-1$, then
$\mu^\rhd(j)\notin\Omega(S\mid\mu^\rhd)$.

At Step~$t$, every assigned agent receives his most-preferred object in $O^t \cup \{\varnothing\}$. We show that no such agent is strictly better off at $\sigma$.
Suppose, to the contrary, that there exists an agent $i\in N^t$ who is assigned object $\mu^\rhd(i)$ at Step~$t$ and satisfies $\sigma(i)\mathbin{P_i}\mu^\rhd(i)$.

Set $i^0:=i$ and $\tau_0:=t$. We construct recursively agents $i^0,i^1,\ldots$ and steps $\tau_0>\tau_1>\cdots$ at which they are assigned. Suppose that $i^k$ and $\tau_k$ have been constructed. If $\mu^\rhd(i^k)\mathbin{R_{i^k}}\sigma(i^k)$, the construction stops. Otherwise $\sigma(i^k)\mathbin{P_{i^k}}\mu^\rhd(i^k)$; since $i^k$ receives his most-preferred object in $O^{\tau_k}\cup\{\varnothing\}$, the object $\sigma(i^k)$ was not available at Step~$\tau_k$, so it is a real object assigned at some earlier Step~$\tau_{k+1}<\tau_k$, say to agent $i^{k+1}$, and $\mu^\rhd(i^{k+1})=\sigma(i^k)$. Since the steps are positive integers and strictly decreasing, the construction stops after finitely many rounds, say at index $m$, with $\mu^\rhd(i^m)\mathbin{R_{i^m}}\sigma(i^m)$. By assumption $\sigma(i^0)\mathbin{P_{i^0}}\mu^\rhd(i^0)$, so the construction does not stop at $k=0$; hence $m\geq1$.

By construction, $\mu^\rhd(i^m)=\sigma(i^{m-1})$, which is a real object.
Since $\sigma$ is an allocation, this object cannot also be assigned to $i^m$ at $\sigma$. Hence,
$\sigma(i^m)\neq\mu^\rhd(i^m)$. Since $P_{i^m}$ is the strict part of $R_{i^m}$, it follows that $\mu^\rhd(i^m)\mathbin{P_{i^m}}\sigma(i^m)$.

Agent $i^m$ received his STTC allotment at a step strictly before Step~$t$. By the induction hypothesis, $i^m\notin S$ and $\mu^\rhd(i^m)\notin\Omega(S\mid\mu^\rhd)$.
However, since coalition $S$ exclusion-blocks $\mu^\rhd$ via $\sigma$ and $\mu^\rhd(i^m)\mathbin{P_{i^m}}\sigma(i^m)$, we must have $\mu^\rhd(i^m)\in\Omega(S\mid\mu^\rhd)$, a contradiction.

We conclude that no Step~$t$ agent can be strictly better off at allocation $\sigma$, and therefore no agent receiving his $\mu^{\rhd}$-allotment at Step~$t$ can belong to the exclusion-blocking coalition $S$. Hence, $S\subseteq N^{t+1}$.
Applying \eqref{eq:remaining-eviction-invariant} with $T=S$ and the remaining sets $N^{t+1}$ and $O^{t+1}$ shows that no object removed during Steps~$1,\ldots,t$ belongs to $\Omega(S\mid\mu^\rhd)$. In particular, if agent $j$ is assigned his $\mu^\rhd$-allotment during one of these steps, then $\mu^\rhd(j)\notin\Omega(S\mid\mu^\rhd)$.

\paragraph{Agents remaining at termination.} Suppose the algorithm stops after Step~$t$. If $N^{t+1}=\emptyset$, then $S\subseteq N^{t+1}=\emptyset$, contradicting that $S$ is a coalition.

Suppose instead that $N^{t+1}\neq\emptyset$. Then $O^{t+1}=\emptyset$, and each remaining agent $i\in N^{t+1}$ is assigned $\mu^\rhd(i)=\varnothing$, his most-preferred object in $O^{t+1}\cup\{\varnothing\}=\{\varnothing\}$. If some remaining agent $i$ belonged to $S$, then $\sigma(i)\mathbin{P_i}\varnothing$, so $\sigma(i)$ would be a real object. Since $O^{t+1}=\emptyset$, this object was removed at one of Steps~$1,\ldots,t$ and assigned to an agent at that step. The chain argument from Step~$t$, applied with $i^0:=i$ and $\tau_0:=t+1$, again yields a contradiction.

Every agent is either assigned at one of the steps of the $(\rhd,\mu)$-STTC algorithm or remains when the algorithm terminates. The preceding arguments show that neither type of agent can belong to $S$. Hence, $S=\emptyset$, contradicting that $S$ is a coalition.
\end{proof}

The following auxiliary lemma isolates an argument used in the proof of Lemma~\ref{lemm.fixedpoint} and twice in the joint proof of Theorems~\ref{theorem:vNMstability} and~\ref{theorem:pathtoexclusion}. It converts an STTC cycle into a dependence chain of the type required by Lemma~\ref{lemm.evictionproperties}(d).

The idea is to walk around the cycle in the direction of its arrows, starting from the object that a displaced outsider loses. Writing that object as $o^1$, the cycle reads $o^1\rightarrow i^1\rightarrow o^2\rightarrow\cdots\rightarrow o^m\rightarrow i^m\rightarrow o^1$, and the walk visits its vertices in this order. Agent $i^m$, who takes $o^1$, receives something he did not hold before and is therefore strictly better off, so at least one agent of the cycle strictly improves; consider the first such agent on the walk, that is, the one with the smallest index. Every agent met before him receives exactly what he already held, so each object $o^{r+1}$ passed on the way is the current allotment of the preceding agent $i^r$. Each of those objects must be individually owned by the agent it points to: it is assigned in the reduced allocation to the preceding agent on the walk, and an assigned public object points to its holder, so an object pointing elsewhere cannot be public. This is the one place where the edge rule for assigned public objects in Definition~\ref{defi.graph} is essential. The walk therefore terminates at the first strictly improving agent and delivers the required ownership chain.

\begin{lemm}[\textbf{Cycle chains}]\label{lemm.cyclechain}
Consider Step~$t$ of the $(\rhd,\mu)$-STTC algorithm, and let $\mathcal{C}$ be an ordinary cycle forming at that step. Write $\mu^\rhd:=\STTC(R,\mu,\rhd)$. Let $\mu'$ be the reduced allocation at Step~$t$, which assigns each agent $i\in N^t$ his allotment $\mu(i)$ if $\mu(i)\in O^t\cup\{\varnothing\}$, and $\varnothing$ otherwise. Let $\nu$ be an allocation that agrees with $\mu'$ on $N^t$. Define the \textbf{improvers of $\bm{\mathcal{C}}$ relative to $\bm{\nu}$} by $T(\mathcal{C},\nu):=\{h\in S(\mathcal{C})\mid \mu^\rhd(h)\mathbin{P_h}\nu(h)\}$.
Suppose that there is an agent $j\in N^t\setminus S(\mathcal{C})$ such that $\nu(j)$ is an object of $\mathcal{C}$. Let $m$ be the number of agents of $\mathcal{C}$ and set $o^1:=\nu(j)$. Enumerate $\mathcal{C}$ from that object as $o^1\rightarrow i^1\rightarrow o^2\rightarrow i^2\rightarrow\cdots\rightarrow o^m\rightarrow i^m\rightarrow o^1$.
Then $T(\mathcal{C},\nu)\neq\emptyset$ and, letting $q$ be the smallest index with $i^q\in T(\mathcal{C},\nu)$,
\begin{enumerate}
\item[\textup{(a)}] for each $r\in\{1,\ldots,q\}$, $o^r\in\omega_{i^r}$; and
\item[\textup{(b)}] for each $r\in\{1,\ldots,q-1\}$, $\nu(i^r)=\mu^\rhd(i^r)=o^{r+1}$.
\end{enumerate}
\end{lemm}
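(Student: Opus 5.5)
The proof walks around $\mathcal{C}$ in the direction of its arrows, starting from $o^1=\nu(j)$, exactly as the preceding informal description suggests.

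\emph{Nonemptiness of $T(\mathcal{C},\nu)$.} Agent $i^m$ points to $o^1$ and receives it, so $\mu^\rhd(i^m)=o^1$. Since $o^1=\nu(j)$ is a real object and $j\neq i^m$, and $\nu$ is an allocation, we have $\nu(i^m)\neq o^1$. Agent $i^m$ points to his most-preferred object in $O^t\cup\{\varnothing\}$. Moreover, $\nu(i^m)=\mu'(i^m)\in O^t\cup\{\varnothing\}$ because $\nu$ agrees with $\mu'$ on $N^t$. Antisymmetry then gives $o^1\mathbin{P_{i^m}}\nu(i^m)$, so $i^m\in T(\mathcal{C},\nu)$. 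Hence $q$ is well defined and $q\le m$.

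\emph{Part (b).} Fix $r<q$. Agent $i^r$ points to $o^{r+1}$, so $\mu^\rhd(i^r)=o^{r+1}$. Because $r<q$, agent $i^r$ is not an improver, so $\nu(i^r)\mathbin{R_{i^r}}o^{r+1}$. As above, $\nu(i^r)\in O^t\cup\{\varnothing\}$, while $o^{r+1}$ is $i^r$'s top choice in that set. Antisymmetry therefore forces $\nu(i^r)=o^{r+1}$.

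\emph{Part (a).} This is the only delicate step, because it uses the edge rules of Definition~\ref{defi.graph}. Each real object of the cycle points either to its owner (if it is individually owned in the reduced instance) or, if it is public, to its holder at $\mu'$, or to the highest-priority remaining agent when it is unassigned. We treat the objects in two groups.

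For $r=1$: object $o^1$ is held at $\nu$, hence at $\mu'$, by $j$. If $o^1$ were public in the reduced instance, it would point to $j$. But it points to $i^1\neq j$, since $j\notin S(\mathcal{C})$. Therefore $o^1$ is individually owned by $i^1$.

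For $2\le r\le q$: by (b), $o^r=\nu(i^{r-1})=\mu'(i^{r-1})$, so $o^r$ is held by $i^{r-1}$. If $o^r$ were public, it would point back to $i^{r-1}$, and the cycle would be $i^{r-1}\to o^r\to i^{r-1}$. This would force $m=1$, which is impossible here: when $m=1$ there is no index $r$ with $2\le r\le q\le m$. Hence $o^r$ points to its owner $i^r$, that is, $o^r\in\omega'_{i^r}$.

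Finally, we pass from reduced to original endowments, using $\omega'_i=\omega_i\cap O^t\subseteq\omega_i$ when $|N^t|\ge2$. Since $j$ and $i^1$ are distinct agents in $N^t$, we have $|N^t|\ge2$, so the special single-agent convention never applies. The main obstacle, then, is just this careful bookkeeping about public versus owned objects in the reduced instance.
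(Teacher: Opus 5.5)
Your proof is correct and follows essentially the same route as the paper. Nonemptiness comes from $i^m$ taking the real object $o^1$ from $j$. Part (b) comes from the top-choice property together with $\nu(h)\in O^t\cup\{\varnothing\}$ and antisymmetry. Part (a) rests on the observation that an assigned public object would point to its holder ($j$, or $i^{r-1}$ when $2\le r\le q$) rather than to $i^r$, combined with $\omega'_h\subseteq\omega_h$ when $|N^t|\geq 2$. Your "forces $m=1$" phrasing is just the paper's observation that $i^{r-1}\neq i^r$.
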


\begin{proof}[\textbf{Proof of Lemma~\ref{lemm.cyclechain}}]
Fix Step~$t$ of the $(\rhd,\mu)$-STTC algorithm, an ordinary cycle $\mathcal{C}$ forming at that step, an allocation $\nu$ that agrees with the reduced allocation $\mu'$ on $N^t$, and an agent $j\in N^t\setminus S(\mathcal{C})$ such that $\nu(j)$ is an object of $\mathcal{C}$. We adopt the enumeration $o^1\rightarrow i^1\rightarrow\cdots\rightarrow o^m\rightarrow i^m\rightarrow o^1$ of the statement.

Each $h\in S(\mathcal{C})$ points to his most-preferred object in $O^t\cup\{\varnothing\}$ and receives it. Thus, $\mu^\rhd(h)$ is this object. Because $\nu(h)$ is his reduced allotment, it also belongs to $O^t\cup\{\varnothing\}$. Hence, for each $h\in S(\mathcal{C})$,\vspace{-0.2cm}
\begin{equation}\label{eq:cyclechain-weak}
\mu^\rhd(h)\mathbin{R_h}\nu(h).\vspace{-0.2cm}
\end{equation}
By antisymmetry, every $h\in S(\mathcal{C})\setminus T(\mathcal{C},\nu)$ therefore satisfies $\mu^\rhd(h)=\nu(h)$.

Agent $i^m$ points to $o^1$ in $\mathcal{C}$, so $\mu^\rhd(i^m)=o^1=\nu(j)$. Since $\nu$ is an allocation, $j\neq i^m$, and $o^1$ is a real object, we have $\nu(i^m)\neq o^1=\mu^\rhd(i^m)$. Together with \eqref{eq:cyclechain-weak} and antisymmetry, $\mu^\rhd(i^m)\mathbin{P_{i^m}}\nu(i^m)$, so $i^m\in T(\mathcal{C},\nu)$. Hence $T(\mathcal{C},\nu)\neq\emptyset$ and $q$ is well defined with $q\leq m$.

We prove part (b) first because it is used in the proof of part (a).

\proofcase{(b)}Let $r<q$. Since $q$ is the smallest index with $i^q\in T(\mathcal{C},\nu)$, we have $i^r\notin T(\mathcal{C},\nu)$, and therefore $\mu^\rhd(i^r)=\nu(i^r)$. Since $i^r$ points to $o^{r+1}$ in $\mathcal{C}$, we also have $\mu^\rhd(i^r)=o^{r+1}$.

\proofcase{(a)}Agent $j$ and the agents of $\mathcal{C}$ are all present at Step~$t$, so $|N^t|\geq2$ and, for each $h\in N^t$, by Definition~\ref{defi.residualgraph}, $\omega'_h=\omega_h\cap O^t\subseteq\omega_h$. Fix $r\in\{1,\ldots,q\}$. If $r=1$, the reduced allocation assigns $o^1=\nu(j)$ to agent $j$; if $r\geq2$, then by~(b), it assigns $o^r=\nu(i^{r-1})$ to agent $i^{r-1}$. In both cases $o^r$ is assigned in the reduced allocation to an agent different from $i^r$. Were $o^r$ public in the reduced instance, it would point to that agent rather than to $i^r$. Hence $o^r$ is individually owned in the reduced instance by the agent it points to, i.e., $o^r\in\omega'_{i^r}\subseteq\omega_{i^r}$.
\end{proof}

The next lemma establishes a fixed-point property of exclusion-core-stable
allocations. It is used in the proof of Proposition~\ref{prop.corecharacterization} to recover each exclusion-core allocation as an STTC outcome from the null allocation, and in the proof of
Theorem~\ref{theorem:coretest} to obtain the fixed-point characterization.

\begin{lemm}[\textbf{Fixed-point property of exclusion-core-stable allocations}]\label{lemm.fixedpoint}
Let $R$ be a problem and let $\mu$ be an exclusion-core-stable allocation. Then, for every priority order $\rhd$, \(\STTC(R,\mu,\rhd)=\mu\).
\end{lemm}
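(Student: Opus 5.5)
I will argue by contradiction. Suppose $\mu$ is exclusion core stable but $\mu^\rhd:=\STTC(R,\mu,\rhd)\neq\mu$ for some $\rhd$. The plan has three steps. First, show that some non-$\mu$-cycle must form. Second, take an earliest one. Third, extract from it a coalition that exclusion-blocks $\mu$, contradicting exclusion core stability.

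\textbf{Step 1: a non-$\mu$-cycle exists.} Suppose every cycle at every step were a $\mu$-cycle. Then each executed agent receives $\mu(i)$. If the algorithm stops because $O^{t+1}=\emptyset$, the remaining agents receive $\varnothing$. Such an agent holds $\mu(i)=\varnothing$ unless $\mu(i)$ was removed earlier; in that case it went to another agent in a $\mu$-cycle, which is impossible since $\mu$ is an allocation. So $\mu^\rhd=\mu$, a contradiction. Hence we may let $t^\star$ be the first step with a non-$\mu$-cycle and fix such a cycle $\mathcal{C}$.

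Before $t^\star$ only $\mu$-cycles are executed. Each removed agent takes exactly his own $\mu$-allotment, so the reduced allocation at $t^\star$ coincides with $\mu$ on $N^{t^\star}$. Thus $\nu:=\mu$ agrees with $\mu'$ on $N^{t^\star}$, and Lemma~\ref{lemm.cyclechain} applies with $\nu=\mu$.

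\textbf{Step 2: the candidate blocking.} Let $T:=T(\mathcal{C},\mu)$ be the agents of $\mathcal{C}$ who strictly prefer $\mu^\rhd$ to $\mu$. Every agent of $\mathcal{C}$ weakly improves, since he gets his top object among remaining objects and his $\mu$-allotment remains. Because $\mathcal{C}$ is a non-$\mu$-cycle, $T\neq\emptyset$.

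If $\mathcal{C}$ is the null self-cycle, the agent strictly prefers $\varnothing$ to $\mu(i)$. Then $\{i\}$ exclusion-blocks $\mu$ via the allocation that gives $i$ the null object and leaves everyone else unchanged. So assume $\mathcal{C}$ is ordinary.

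Define $\sigma$ by $\sigma(h)=\mu^\rhd(h)$ for $h\in S(\mathcal{C})$, $\sigma(j)=\varnothing$ for outsiders $j$ with $\mu(j)$ an object of $\mathcal{C}$, and $\sigma(j)=\mu(j)$ otherwise. This is an allocation. Each $h\in T$ strictly improves. The harmed agents are exactly the displaced outsiders $j$: they lie in $N^{t^\star}$, since $\mu(j)$ is still present at $t^\star$, and they lie outside $S(\mathcal{C})$.

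\textbf{Step 3: exclusion-enforcement, the main obstacle.} I must show that for each displaced outsider $j$, $\mu(j)\in\Omega(T\mid\mu)$. Apply Lemma~\ref{lemm.cyclechain} with this $j$ and $o^1=\mu(j)$. It gives an index $q$ with $i^q\in T$, $o^r\in\omega_{i^r}$ for $r\le q$, and $\mu(i^r)=o^{r+1}$ for $r<q$. This is exactly an ownership–occupancy chain $o^1\to i^1\to\cdots\to o^q\to i^q$ at $\mu$ with $i^q\in T$. Lemma~\ref{lemm.evictionproperties}(d) then yields $o^1=\mu(j)\in\Omega(T\mid\mu)$.

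One subtlety is that the agents $i^r$ with $r<q$ are not in $T$. They are unharmed, since $\sigma(i^r)=\mu(i^r)$, so exclusion-enforcement imposes nothing on them.

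Therefore $T$ exclusion-blocks $\mu$ via $\sigma$, contradicting $\mu\in\EC(R)$. Hence $\STTC(R,\mu,\rhd)=\mu$ for every $\rhd$.
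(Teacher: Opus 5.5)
Your proposal is correct and follows essentially the same route as the paper's proof. Both take an earliest non-$\mu$-cycle, form the coalition of its strict improvers, send displaced outsiders to $\varnothing$, and verify exclusion-enforcement via Lemma~\ref{lemm.cyclechain} combined with Lemma~\ref{lemm.evictionproperties}(d). Your departures are cosmetic: you treat the null self-cycle separately, and you give $\mu^\rhd$ to all cycle agents rather than only to improvers, which changes nothing since non-improving cycle agents satisfy $\mu^\rhd(h)=\mu(h)$. One small correction is that the harmed agents form a subset of the displaced outsiders, not exactly that set.
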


\begin{proof}[\textbf{Proof of Lemma~\ref{lemm.fixedpoint}}]
Let $R$ be a problem and let $\mu$ be an exclusion-core-stable allocation. Fix a priority order $\rhd$ and set $\mu^\rhd:=\STTC(R,\mu,\rhd)$. Suppose, to the contrary, that $\mu^\rhd\neq\mu$.

We first show that the algorithm executes at least one non-$\mu$-cycle. Suppose that every executed cycle is a $\mu$-cycle. Then each agent who leaves in a cycle receives his $\mu$-allotment, and a real object is removed only by its $\mu$-holder. An agent $i$ who never leaves in a cycle therefore has $\mu(i)=\varnothing$ and receives $\varnothing$. So $\mu^\rhd=\mu$, contrary to assumption.

Let $\mathcal{C}$ be an earliest non-$\mu$-cycle formed at Step~$t^\star$ in the STTC graph $\mathcal{G}^{t^\star}$. Every cycle executed before Step~$t^\star$ is a $\mu$-cycle, so as above a real object removed before Step~$t^\star$ was taken by its $\mu$-holder. Hence, for every $i\in N^{t^\star}$, $\mu(i)\in O^{t^\star}\cup\{\varnothing\}$.

We now construct a coalition $\widehat{S}$ and an allocation $\widehat{\sigma}$ and show that $\widehat{S}$ exclusion-blocks $\mu$ via $\widehat{\sigma}$. First, define the coalition $\widehat{S} := \{i \in S(\mathcal{C}) \mid \mu^{\rhd}(i) \mathbin{P_i} \mu(i)\}$ and show that $\widehat{S} \neq \emptyset$.

By the construction of $\mu^{\rhd}$ via the STTC algorithm, we know that at Step~$t^\star$, each agent in $S(\mathcal{C})$ points to, and receives, his most-preferred object in $O^{t^\star}\cup\{\varnothing\}$. Since $\mathcal{C}$ is a non-$\mu$-cycle, there exists a cycle-agent $i\in S(\mathcal{C})$ with $\mu^{\rhd}(i) \neq \mu(i)$. By the preceding observation, $\mu(i)\in O^{t^\star}\cup\{\varnothing\}$. Hence, agent $i$ pointing at $\mu^{\rhd}(i) \neq \mu(i)$ in the $(\rhd,\mu)$-STTC algorithm implies that $\mu^{\rhd}(i) \mathbin{P_i} \mu(i)$. Thus, $\widehat{S} \neq \emptyset$.

Second, define the allocation $\widehat{\sigma}$ by\vspace{-0.2cm}
    \[\widehat{\sigma}(i) :=
    \begin{cases}
        \mu^{\rhd}(i) &\text{ if } i \in \widehat{S}, \\
        \varnothing &\text{ if } i \not\in \widehat{S} \text{ and there exists a $k \in \widehat{S}$ such that } \mu(i) = \mu^{\rhd}(k),\\
        \mu(i) &\text{ otherwise.}
    \end{cases} \vspace{-0.2cm}\]

We now show that coalition $\widehat{S}$ exclusion-blocks $\mu$ via $\widehat{\sigma}$. By the definition of $\widehat{S}$, for each $i \in \widehat{S}$, $\widehat{\sigma}(i) \mathbin{P_i} \mu(i)$.  By the preceding observation, for each cycle-agent $i\in S(\mathcal{C})$, $\mu(i)\in O^{t^\star}\cup\{\varnothing\}$. Hence, for each cycle-agent $i\in S(\mathcal{C})$, $\mu^{\rhd}(i) \mathbin{R_i} \mu(i)$.

If $\widehat{\sigma}$ Pareto dominates $\mu$, no agent outside $\widehat{S}$ is worse off at $\widehat{\sigma}$. Therefore, the exclusion-enforcement condition is vacuously satisfied. Since every member of $\widehat{S}$ is strictly better off, $\widehat{S}$ exclusion-blocks $\mu$ via $\widehat{\sigma}$, establishing the desired contradiction.

If $\widehat{\sigma}$ does not Pareto dominate $\mu$, let $j\notin\widehat{S}$ be any agent such that $\mu(j)\mathbin{P_j}\widehat{\sigma}(j)$. Since every member of $\widehat{S}$ is strictly better off at $\widehat{\sigma}$, such an agent exists.
By the definition of $\widehat{\sigma}$, $\widehat{\sigma}(j)=\varnothing$, and there exists an agent $i\in\widehat{S}$ such that $\mu(j)=\mu^\rhd(i)$.
In particular, $\mu(j)$ is a real object.

We first show that $j\notin S(\mathcal{C})$. Recall that every cycle agent $h\in S(\mathcal{C})$ satisfies $\mu^\rhd(h)\mathbin{R_h}\mu(h)$.
If $j\in S(\mathcal{C})\setminus\widehat{S}$, then strictness of preferences implies $\mu^\rhd(j)=\mu(j)$.
But $\mu^\rhd(i)=\mu(j)$ for the distinct agent $i\in\widehat{S}$, contradicting feasibility of $\mu^\rhd$.
Hence, $j\notin S(\mathcal{C})$.
\pagebreak

We next show that agent $j$ is still present at Step~$t^\star$.
Suppose that $j$ left at an earlier step. Since $\mathcal{C}$ is an earliest non-$\mu$-cycle, agent $j$ then received his $\mu$-allotment, so $\mu^\rhd(j)=\mu(j)$.
Consequently, object $\mu(j)$ was removed before Step~$t^\star$.
But there exists a cycle agent $i\in\widehat{S}$ such that $\mu(j)=\mu^\rhd(i)$. Since agent $i$ receives this object in $\mathcal{C}$ at Step~$t^\star$, the object must still be available at that step, a contradiction.
Therefore, $j\in N^{t^\star}$.

It remains to show that $\mu(j)\in\Omega(\widehat{S}\mid \mu)$.

Let $\mu'$ denote the reduced allocation at Step~$t^\star$, which assigns each agent $h\in N^{t^\star}$ his allotment $\mu(h)$ if $\mu(h)\in O^{t^\star}\cup\{\varnothing\}$, and $\varnothing$ otherwise. Since $\mu(h)\in O^{t^\star}\cup\{\varnothing\}$ for every $h\in N^{t^\star}$, we have $\mu'(h)=\mu(h)$ for every such $h$. Hence, $\mu$ agrees with $\mu'$ on $N^{t^\star}$.

Moreover, there exists a cycle agent $i\in\widehat{S}$ such that $\mu(j)=\mu^\rhd(i)$. Hence, $\mu(j)$ is a real object in $\mathcal{C}$; by the convention of Section~\ref{subsec.algorithm}, a cycle through the null object contains no real object, so $\mathcal{C}$ is ordinary. We have also shown that $j\in N^{t^\star}\setminus S(\mathcal{C})$ and that $T(\mathcal{C},\mu)=\{h\in S(\mathcal{C})\mid\mu^\rhd(h)\mathbin{P_h}\mu(h)\}=\widehat{S}$.
Therefore, Lemma~\ref{lemm.cyclechain} applies with $\nu:=\mu$ and yields an enumeration $o^1\rightarrow i^1\rightarrow\cdots\rightarrow o^m\rightarrow i^m\rightarrow o^1$, where $o^1=\mu(j)$, and an index $q\in\{1,\ldots,m\}$ such that $i^q\in\widehat{S}$ and, for each $r\leq q$, $o^r\in\omega_{i^r}$, and, for each $r<q$, $\mu(i^r)=o^{r+1}$. Therefore, by Lemma~\ref{lemm.evictionproperties}(d), applied with $S:=\widehat{S}$, $\mu(j)=o^1\in\Omega(\widehat{S}\mid\mu)$.

Since $j$ was arbitrary among the agents outside $\widehat{S}$ who are worse off at $\widehat{\sigma}$, coalition $\widehat{S}$ exclusion-enforces $\widehat{\sigma}$ at $\mu$. Since every member of $\widehat{S}$ is strictly better off, $\widehat{S}$ exclusion-blocks $\mu$ via $\widehat{\sigma}$, contradicting exclusion core stability of $\mu$. Therefore, $\mu^\rhd=\mu$.
\end{proof}

\begin{proof}[\textbf{Proof of Proposition~\ref{prop.corecharacterization}}]
Fix a problem $R$.
By Proposition~\ref{prop.sttc-exclusion}, the set of allocations $\{\STTC(R,\mu^{\varnothing},\rhd)\mid \rhd\in\Delta\}$ is a subset of the exclusion core.

We now prove that the exclusion core is a subset of the set of allocations $\{\STTC(R,\mu^{\varnothing},\rhd)\mid \rhd\in\Delta\}$. That is, we show that if an allocation $\mu$ belongs to the exclusion core, then there exists a priority order $\rhd$ such that $\mu$ is the outcome of the $(\rhd,\mu^\varnothing)$-STTC algorithm.

Let $\mu$ be an allocation in the exclusion core. Consider an arbitrary priority order $\rhd^*$. Without loss of generality, let $\rhd^*: 1, 2, \ldots, n$. By Lemma~\ref{lemm.fixedpoint}, $\STTC(R,\mu,\rhd^*)=\mu$.
Consequently, every cycle executed by the $(\rhd^*,\mu)$-STTC algorithm is a $\mu$-cycle. We use the steps of this algorithm to construct a priority order $\rhd$ such that the $(\rhd,\mu^\varnothing)$-STTC algorithm also produces $\mu$.

\paragraph{$\bm{(\rhd^{*},\mu)}$-STTC algorithm:}

\paragraph{Step~1.} Starting with allocation $\mu$, consider Step~1 of the $(\rhd^*,\mu)$-STTC algorithm. Let $\mathscr{C}^1$ denote the set of all cycles that form, and define $S(\mathscr{C}^1):=\bigcup_{\mathcal C\in\mathscr{C}^1}S(\mathcal C)$. Adjust the given priority order  $\rhd^*$ by moving the agents in $S(\mathscr{C}^1)$ above the remaining agents in $N^2=N\setminus S(\mathscr{C}^1)$, preserving the relative order of agents in $S(\mathscr{C}^1)$ and in $N\setminus S(\mathscr{C}^1)$ according to $\rhd^{*}$. Denote the adjusted priority order by $\rhd^1$. We proceed inductively.

\paragraph{Step~$\bm{t \ge 2}$.} Given the adjusted order $\rhd^{t-1}$ at the end of Step~$t-1$ of the $(\rhd^*,\mu)$-STTC algorithm, consider Step~$t$ of the $(\rhd^*,\mu)$-STTC algorithm. Let $\mathscr{C}^t$ denote the set of all cycles that form, and define $S(\mathscr{C}^t):=\bigcup_{\mathcal C\in\mathscr{C}^t}S(\mathcal C)$.
Adjust the current priority order $\rhd^{t-1}$ by moving the agents in $S(\mathscr{C}^t)$ right below the agents in $S(\mathscr{C}^{t-1})$, and above the remaining agents $N^{t+1}=N\setminus \bigcup_{i=1}^{t}S(\mathscr{C}^{i})$, preserving the relative order of agents  in $S(\mathscr{C}^{t})$ and in $N\setminus S(\mathscr{C}^{t})$  according to $\rhd^{t-1}$. Denote the adjusted priority order by $\rhd^t$.

Finally, at the end of the $(\rhd^*,\mu)$-STTC algorithm, the resulting allocation is $\mu$, and we obtain a final adjusted priority order $\rhd$. By construction of $\rhd$, if $t<s$, then every agent in $S(\mathscr{C}^t)$ has higher priority than every agent in $S(\mathscr{C}^s)$. Any agents who remain when the $(\rhd^*,\mu)$-STTC algorithm exhausts the set of real objects are ranked below all the sets $S(\mathscr{C}^t)$. Within each set, the relative order induced by $\rhd^*$ is preserved.\par\smallskip

To complete the proof, starting with the null allocation $\mu^\varnothing$, we apply the $(\rhd,\mu^\varnothing)$-STTC algorithm and show that $\STTC(R,\mu^\varnothing,\rhd)=\mu$.
Let $T$ be the last step at which the $(\rhd^*,\mu)$-STTC algorithm executes a cycle. For each agent $i$, define $\tau(i)$ as follows. If there exists $t\in\{1,\ldots,T\}$ such that $i\in S(\mathscr{C}^t)$, let $\tau(i):=t$. If $i$ receives the null object when the algorithm stops because no real object remains, let $\tau(i):=T+1$. Thus, $\tau(i)<\tau(j)$ implies $i\rhd j$.

We prove by induction over the steps of the $(\rhd,\mu^\varnothing)$-STTC algorithm that every departing agent receives his $\mu$-allotment. Consider Step~$t$, and suppose that every agent who departed at an earlier step received his $\mu$-allotment. We first show that the $\mu$-allotment of every agent remaining at Step~$t$ is still available. Indeed, by the induction hypothesis a real object removed earlier was taken by its $\mu$-holder, who has departed; so the $\mu$-allotment of a remaining agent cannot have been removed. The null object is always available. Therefore, for every agent $i$ remaining at Step~$t$, $\mu(i)\in O^t\cup\{\varnothing\}$. Consider a cycle $\mathcal{C}$ in the current STTC graph.

Suppose first that exactly one agent $i$ remains. Then, by Definition~\ref{defi.residualgraph}, $\omega'_i=O'$ and every remaining object is individually owned by agent~$i$ in the reduced instance, and no remaining object is public there. Reduced ownership therefore carries no information about original ownership, and we argue directly. Suppose that agent~$i$ points to an object $o\neq\mu(i)$. Since $\mu(i)$ is available and agent~$i$ points to his most-preferred object in $O^t\cup\{\varnothing\}$, $o\mathbin{P_i}\mu(i)$.
In the $(\rhd^*,\mu)$-STTC algorithm, agent~$i$ received $\mu(i)$ while the null object was available, so $\mu(i)\mathbin{R_i}\varnothing$ and hence $o$ is a real object. Since $o\mathbin{P_i}\mu(i)$, object $o$ cannot have been available when agent~$i$ received $\mu(i)$; it was therefore removed at an earlier step of that algorithm. As every cycle executed in that algorithm is a $\mu$-cycle, there exists an agent $j$ with $\tau(j)<\tau(i)$ such that $o=\mu(j)$. Since agent~$i$ is the only remaining agent, agent $j$ has already departed, and by the induction hypothesis he departed with $\mu(j)=o$. Hence $o$ is no longer available, a contradiction. Therefore agent~$i$ points to $\mu(i)$ and receives it.

Assume henceforth that at least two agents remain, so that, for each remaining agent $k$, $\omega'_k=\omega_k\cap O'\subseteq\omega_k$, and reduced private ownership implies original ownership.

First, suppose that $\mathcal{C}$ is not ordinary. By the convention of Section~\ref{subsec.algorithm}, $\mathcal{C}$ is then the self-cycle $i\rightarrow\varnothing\rightarrow i$ of a single agent~$i$, who points to $\varnothing$. Since $\mu(i)$ is available and the null object was also available when agent $i$ was assigned in the $(\rhd^*,\mu)$-STTC algorithm, strict preferences imply $\mu(i)=\varnothing$.
Hence, agent~$i$ receives his $\mu$-allotment.

Second, suppose that $\mathcal{C}$ is ordinary and contains an object that is public in the current reduced instance. Since the initial allocation is $\mu^\varnothing$, every reduced allocation assigns the null object to every remaining agent; hence no public object is assigned, and every public object points to the highest-priority remaining agent. Denote this agent by $h$. Then $\tau(h)$ is minimal among the $\tau$-values of the remaining agents. We claim that $h$ points to $\mu(h)$. Otherwise, $h$ points to some object $o\neq\mu(h)$ with $o\mathbin{P_h}\mu(h)$; since $\mathcal{C}$ is ordinary, $o$ is a real object.
Since $h$ received $\mu(h)$ in the $(\rhd^*,\mu)$-STTC algorithm, object $o$ was removed before Step~$\tau(h)$ of that algorithm. Hence, there exists an agent $j$ with $\tau(j)<\tau(h)$ such that $o=\mu(j)$. By the induction hypothesis and the minimality of $\tau(h)$ among the remaining agents, agent $j$ has already departed with object $o$, contradicting that $o$ is still available. Therefore, $h$ points to $\mu(h)$.

We now follow the cycle from $h$. If $\mu(h)$ is public in the current reduced problem, it points back to $h$. If $\mu(h)$ is individually owned by a remaining agent $j$, then $\mu(h)$ points to $j$. At Step~$\tau(h)$ of the $(\rhd^*,\mu)$-STTC algorithm, either $j$ had already departed or $j$ belonged to the same $\mu$-cycle as $h$. The first possibility would imply $\tau(j)<\tau(h)$, contradicting the minimality of $\tau(h)$ among the remaining agents. Hence, $\tau(j)=\tau(h)$.
The preceding argument then shows that $j$ points to $\mu(j)$. Continuing around the cycle proves that every agent in $\mathcal{C}$ receives his $\mu$-allotment.

Finally, suppose that $\mathcal{C}$ is ordinary and contains no object that is public in the current reduced instance. Thus, every object in $\mathcal{C}$ is individually owned by the agent to whom it points. Choose an agent $h\in S(\mathcal{C})$ with minimal $\tau(h)$ among the cycle agents. We again claim that $h$ points to $\mu(h)$. Suppose instead that $h$ points to an object $o\neq\mu(h)$. Since $\mu(h)$ is available, $o\mathbin{P_h}\mu(h)$.
Thus, object $o$ was removed before Step~$\tau(h)$ of the $(\rhd^*,\mu)$-STTC algorithm. Let $j$ be the owner to whom $o$ points in the current cycle. When $o$ was removed in the $(\rhd^*,\mu)$-STTC algorithm, agent $j$ had either departed at an earlier step or belonged to the cycle in which $o$ was removed. In the first case, $\tau(j)$ is smaller than the step at which $o$ was removed; in the second case, it equals that step. Since $o$ was removed before Step~$\tau(h)$, both cases imply $\tau(j)<\tau(h)$, contradicting the choice of $h$. Hence, $h$ points to $\mu(h)$.

Since $\mu(h)$ belongs to $\mathcal{C}$ and no object in $\mathcal{C}$ is public in the current reduced instance, there exists a remaining agent $j$ who individually owns $\mu(h)$. Hence, $\mu(h)$ points to $j$, and $j$ belongs to $\mathcal{C}$. At Step~$\tau(h)$ of the $(\rhd^*,\mu)$-STTC algorithm, agent $j$ had either already departed or belonged to the same $\mu$-cycle as $h$. If $j$ had already departed, then $\tau(j)<\tau(h)$, and since $j\in S(\mathcal{C})$, this contradicts the choice of $h$. Therefore, $j$ belonged to the same $\mu$-cycle as $h$, and $\tau(j)=\tau(h)$.
Repeating the argument around the cycle shows that every cycle agent points to, and receives, his $\mu$-allotment.

We have shown that at every step of the $(\rhd,\mu^\varnothing)$-STTC algorithm, all cycle agents receive their $\mu$-allotments. The induction therefore continues until the algorithm terminates. Suppose that no real object remains while some agents are still present. Suppose that a remaining agent $i$ had a real $\mu$-allotment. Since every removed object was taken by its $\mu$-holder, object $\mu(i)$ could only have been taken by $i$ himself, contradicting that $i$ remains. Hence, every remaining agent has $\mu(i)=\varnothing$ and is assigned the null object. Consequently, $\STTC(R,\mu^\varnothing,\rhd)=\mu$.
\end{proof}

\section{Proofs of the results in Section~\ref{sec.mainresult}}\label{appendix:main}

\begin{proof}[\textbf{Proof of Theorem~\ref{thm.strictsuperset}}]
Let $\omega$ be an endowment structure with a properly co-owned object $a\in O$. Thus, $a$ has at least two co-owners, and there exists a non-co-owner $k\in N$. Let $R$ be a preference profile in which each agent $i\in N$ satisfies $a\mathbin{P_i}\varnothing\mathbin{P_i}o$ for every object $o\in O\setminus\{a\}$; the relative ranking of the objects below $\varnothing$ is arbitrary. Let $\mu^a$ be the allocation at which object $a$ is assigned to a non-co-owner $k$, while all other agents receive $\varnothing$; that is, $\mu^a(k)=a$ and, for each $i\in N\setminus\{k\}$, $\mu^a(i)=\varnothing$.
The following two claims show that $\mu^a \in \EC(\omega,R)$ and $\mu^a \notin \STTC(\omega,R)$.

\begin{claim}\label{claim.muinEC}
    $\mu^a \in \EC(\omega,R)$.
\end{claim}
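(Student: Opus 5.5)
Suppose, for contradiction, that some coalition $S$ exclusion-blocks $\mu^a$ via an allocation $\sigma$. The plan is to show that every member of $S$ would have to receive $a$ at $\sigma$, so that $S$ is a single agent, and then show that this single agent lacks the eviction rights needed to take $a$ from $k$.

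First, pin down $S$. Every $i\in S$ must satisfy $\sigma(i)\mathbin{P_i}\mu^a(i)$. Agent~$k$ holds $a$, his top choice, so $k\notin S$. Every other agent holds $\varnothing$, and under $R$ the only object he ranks above $\varnothing$ is $a$. Hence $\sigma(i)=a$ for each $i\in S$. Since $a$ is assigned to at most one agent, $S=\{i\}$ for a single agent $i\neq k$, and $\sigma(i)=a$. Consequently $\sigma(k)\neq a$, and agent~$k$ is strictly worse off at $\sigma$. Exclusion-enforcement therefore requires
\[a=\mu^a(k)\in\Omega(\{i\}\mid\mu^a).\]

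Second, compute $\Omega(\{i\}\mid\mu^a)$ with the general definition $\omega(T)=\{o\in O\mid S^o\subseteq T\}$. We have $\omega(\{i\})\subseteq\{o\mid S^o=\{i\}\}$, and $a\notin\omega(\{i\})$ because $|S^a|\geq2$.
\begin{itemize}
\item If $a$ is not reachable directly, the only real object occupied at $\mu^a$ is $a$. So $(\mu^a)^{-1}(\omega(\{i\}))=\emptyset$ and the recursion stops at $\{i\}$.
\item Hence $\Omega(\{i\}\mid\mu^a)=\omega(\{i\})\not\ni a$, which contradicts the enforcement requirement.
\end{itemize}
This also covers the case $i\in S^a$: a single co-owner cannot evict $k$, because $S^a\not\subseteq\{i\}$.

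Thus no coalition exclusion-blocks $\mu^a$, and $\mu^a\in\EC(\omega,R)$. The main point needing care is the observation that the recursion cannot grow. It stalls because only $k$ occupies a real object, and reaching $k$ already requires eviction rights over $a$.
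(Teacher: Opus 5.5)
Your proof is correct and uses the same two ingredients as the paper. Every member of $S$ must receive $a$, so indivisibility limits $S$. The eviction recursion at $\mu^a$ cannot grow, because only $k$ occupies a real object. The only difference is the order of the steps: you apply indivisibility first to reduce $S$ to a singleton $\{i\}$ and then check that $a\notin\Omega(\{i\}\mid\mu^a)$, whereas the paper first shows that $a\in\Omega(S\mid\mu^a)$ forces every co-owner of $a$ into $S$ and only then uses indivisibility to reach the contradiction.
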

\begin{proof}[\textbf{Proof}] Suppose, to the contrary, that a coalition $S\subseteq N$ exclusion-blocks $\mu^a$ via an allocation $\sigma$. Since each agent ranks only $a$ above $\varnothing$, each member of $S$ must receive $a$ under $\sigma$ in order to be strictly better off. In particular, the non-co-owner $k$, who already receives $a$ at $\mu^a$, cannot belong to $S$. Since some member of $S$ receives $a$ at $\sigma$, agent $k$ is worse off at $\sigma$. Hence, exclusion-enforcement requires $a=\mu^a(k)\in\Omega(S\mid\mu^a)$.
We show that this implies that every co-owner of $a$ belongs to $S$. At $\mu^a$, every co-owner of $a$ receives $\varnothing$. The recursion defining $\Omega(S\mid\mu^a)$ adds an agent only when his $\mu^a$-allotment belongs to the current direct eviction set, which is a subset of $O$ and therefore never contains $\varnothing$. Consequently, a co-owner of $a$ who does not belong to $S$ cannot enter any of the recursively generated agent sets $S^{\mu^a}_k$. Since $a$ belongs to the direct eviction set of an agent set only if that set contains all co-owners of $a$, the inclusion $a\in\Omega(S\mid\mu^a)$ therefore implies that every co-owner of $a$ already belongs to $S$.

Since $a$ is indivisible, at most one agent can receive $a$ under $\sigma$. As $a$ has at least two co-owners, not all members of $S$ can be made strictly better off, contradicting that $S$ exclusion-blocks $\mu^a$. Therefore, $\mu^a\in\EC(\omega,R)$.
\end{proof}

\begin{claim} \label{claim.munotinSTCC}
    $\mu^a \notin \STTC(\omega,R)$.
\end{claim}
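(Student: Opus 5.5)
We must show that no priority order $\rhd$ makes the extended $(\rhd,\mu^\varnothing)$-STTC algorithm output $\mu^a$. The plan is to fix an arbitrary $\rhd$ and trace Step~1 of the extended algorithm from the null allocation. From there we show that object $a$ always leaves with one of its co-owners, never with the non-co-owner $k$.

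\textbf{Step~1 edges.} Under $R$, every agent's most-preferred element of $O\cup\{\varnothing\}$ is $a$, so every agent points to $a$. At $\mu^\varnothing$, object $a$ is assigned to nobody. Since $a$ is properly co-owned and all its co-owners are present at Step~1, the extended edge rule makes $a$ point to the highest-priority co-owner under $\rhd$; call this agent $h$. Then $a\rightarrow h\rightarrow a$ is a cycle at Step~1. Because every agent points to $a$ and each vertex of an ordinary cycle has out-degree one, this is the only cycle through $a$.

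\textbf{Ruling out $k$.} No agent points to $\varnothing$, so no null self-cycles form at Step~1. Any other cycle would consist of agents pointing to objects other than $a$, which is impossible. Hence Step~1 executes exactly the cycle $a\rightarrow h\rightarrow a$, and $\STTC(\omega,R,\mu^\varnothing,\rhd)(h)=a$. Since $h$ is a co-owner of $a$ and $k$ is not, $h\neq k$. The final allocation therefore gives $a$ to $h$, whereas $\mu^a$ gives $a$ to $k$, so the two differ. As $\rhd$ was arbitrary, $\mu^a\notin\STTC(\omega,R)$.

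\textbf{Conclusion.} Combining this with Claim~\ref{claim.muinEC} and the inclusion $\STTC(\omega,R)\subseteq\EC(\omega,R)$ from Remark~\ref{remark2} gives the strict inclusion asserted in the theorem. The same argument works for any owner-respecting flexible TTC algorithm, since $a$ must point to some remaining co-owner at Step~1.

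\textbf{Main point to check.} The only real obstacle is confirming that the extended edge rule is the right one at Step~1. Object $a$ is unassigned at $\mu^\varnothing$, so the clause ``assigned at $\mu'$ to a remaining co-owner'' does not apply. The next clause, ``otherwise, if any co-owner remains, point to the highest-priority remaining co-owner'', then does apply, because at least two co-owners are present. Everything else in the argument is routine.
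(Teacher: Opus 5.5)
Your proposal is correct and follows essentially the same argument as the paper: fix an arbitrary $\rhd$ and look at Step~1 of the extended $(\rhd,\mu^\varnothing)$-STTC algorithm. Every agent points to $a$, and $a$ points to its highest-priority co-owner, so $a$ leaves in a self-cycle with a co-owner and can never be assigned to the non-co-owner $k$. Your extra checks, namely that no other cycle forms at Step~1 and that the ``highest-priority remaining co-owner'' clause of the extended edge rule is the one that applies, are accurate and simply make explicit what the paper's shorter argument leaves implicit.
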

\begin{proof}[\textbf{Proof}] Fix an arbitrary priority order $\rhd$ and consider the first step of the $(\rhd,\mu^\varnothing)$-STTC algorithm. Since each agent ranks $a$ above $\varnothing$, all agents initially point to $a$. Because $a$ has at least two co-owners, at the first step it points to its highest-priority co-owner under $\rhd$. Thus, $a$ and its highest-priority co-owner form a self-cycle, which assigns $a$ to that co-owner. Once this self-cycle is executed, $a$ is removed, so it cannot later be assigned to the non-co-owner $k$. Because $\rhd$ was arbitrary, no priority order assigns $a$ to $k$. Since $\mu^a(k)=a$, it follows that $\mu^a \notin \STTC(\omega,R)$.
\end{proof}

The same argument applies to every owner-respecting flexible TTC algorithm described in Remark~\ref{remark2}. Initially, all agents point to $a$, while $a$ points to one of its co-owners. Thus, the first cycle containing $a$ assigns it to a co-owner and removes it from the problem.

The same construction also shows that the fixed-point characterization in Theorem~\ref{theorem:coretest} fails under proper co-ownership. Initialize an owner-respecting flexible TTC algorithm at $\mu^a$. All agents point to $a$. Because $a$ is assigned at $\mu^a$ to the non-co-owner $k$, the owner-respecting rule makes $a$ point to one of its co-owners. Hence, $a$ and that co-owner form a self-cycle at the first step, so $a$ is reassigned to the co-owner and removed. The resulting allocation differs from $\mu^a$, although Claim~\ref{claim.muinEC} establishes that $\mu^a$ is exclusion core stable.

By Remark~\ref{remark2}, for every priority order $\rhd$, $\STTC(\omega,R,\mu^\varnothing,\rhd)\in\EC(\omega,R)$. Hence, $\STTC(\omega,R)\subseteq\EC(\omega,R)$. Thus, by Claims~\ref{claim.muinEC} and~\ref{claim.munotinSTCC}, $\STTC(\omega,R)\subsetneq\EC(\omega,R)$.
\end{proof}

\begin{proof}[\textbf{Proof of Corollary~\ref{cor:maxdomain}}]
If the endowment structure is a mixed-ownership structure, then each associated problem $R$ is a mixed-ownership problem and the coincidence follows from Proposition~\ref{prop.corecharacterization}. Conversely, if the endowment structure is not a mixed-ownership structure, then some object $a \in O$ is properly co-owned; that is, at least two but not all agents own $a$. By Theorem~\ref{thm.strictsuperset}, there then exists a preference profile $R$ with $\STTC(\omega,R) \neq \EC(\omega,R)$.
\end{proof}

\begin{proof}[\textbf{Proof of Theorem~\ref{theorem:coretest}}]
Let $R$ be a problem and let $\mu$ be an allocation. We prove $\textup{(i)}\Rightarrow\textup{(iii)}\Rightarrow\textup{(ii)}\Rightarrow\textup{(i)}$.

Assume \textup{(i)}. Then, for every priority order $\rhd$, Lemma~\ref{lemm.fixedpoint} in Appendix~\ref{appendix:sttc} implies that $\mu=\STTC(R,\mu,\rhd)$, which is \textup{(iii)}. That \textup{(iii)} implies \textup{(ii)} is immediate, since at least one priority order exists. Finally, assume \textup{(ii)} and let $\rhd$ be a priority order with $\mu=\STTC(R,\mu,\rhd)$. Then Proposition~\ref{prop.sttc-exclusion} implies that $\mu$ is exclusion core stable, which is \textup{(i)}.
\end{proof}

\begin{proof}[\textbf{Proof of Lemma~\ref{lemm.selfcontained}}]
Let $R$ be a problem, let $\mu$ and $\sigma$ be allocations, and let $S$ be a coalition that exclusion-blocks $\mu$ via $\sigma$. Since every member of $S$ is strictly better off at $\sigma$, $S\subseteq S^+(\mu,\sigma)$. Consider an agent $j\notin S^+(\mu,\sigma)$ such that $\mu(j)\mathbin{P_j}\sigma(j)$. Then, $j\notin S$. Since $S$ exclusion-blocks $\mu$ via $\sigma$, $\mu(j)\in\Omega(S\mid\mu)\subseteq\Omega(S^+(\mu,\sigma)\mid\mu)$, where the inclusion follows from Lemma~\ref{lemm.evictionproperties}(b). Hence, $S^+(\mu,\sigma)$ exclusion-blocks $\mu$ via $\sigma$.

It remains to show that this blocking is self-contained. For ease of notation, let $S^+:=S^+(\mu,\sigma)$ and $\widehat\sigma:=\widehat\sigma^{S^+,\mu}$. Each agent $i\in S^+$ receives $\widehat\sigma(i)=\sigma(i)$, which he strictly prefers to $\mu(i)$. Consider an agent $j\notin S^+$ such that $\mu(j)\mathbin{P_j}\widehat\sigma(j)$. By the definition of the post-blocking allocation, $\widehat\sigma(j)=\varnothing$ and $\mu(j)\in\sigma(S^+)\setminus\{\varnothing\}$. Hence, there exists an agent $i\in S^+$ such that $\sigma(i)=\mu(j)$. Since $j\notin S^+$, we have $i\neq j$; feasibility of $\sigma$ therefore implies that $\sigma(j)\neq\mu(j)$. Since preferences are strict and $j\notin S^+$, it follows that $\mu(j)\mathbin{P_j}\sigma(j)$. Because $S^+$ exclusion-blocks $\mu$ via $\sigma$, we obtain $\mu(j)\in\Omega(S^+\mid\mu)$. Therefore, $S^+$ also exclusion-blocks $\mu$ via $\widehat\sigma$, and the blocking is self-contained.
\end{proof}

\begin{proof}[\textbf{Proof of Theorems~\ref{theorem:vNMstability} and~\ref{theorem:pathtoexclusion}}]
Let $R$ be a problem. Internal stability in Theorem~\ref{theorem:vNMstability} follows directly from the definition of the exclusion core. For the remaining claims, let $\mu$ be an allocation. If $\mu\in\EC(R)$, the constant sequence consisting only of $\mu$ is a self-contained exclusion-blocking path of length zero. Thus, for the path result, it remains to consider $\mu\notin\EC(R)$; this is also precisely the case relevant for external stability.

Fix a priority order $\rhd$ and apply the $(\rhd,\mu)$-STTC algorithm. Set $\mu^\rhd:=\STTC(R,\mu,\rhd)$.
By Proposition~\ref{prop.sttc-exclusion}, $\mu^\rhd$ is exclusion core stable, and by Theorem~\ref{theorem:coretest}, $\mu^\rhd\neq\mu$. If every executed cycle were a $\mu$-cycle, every agent leaving in a cycle would receive his $\mu$-allotment. Moreover, any agent remaining when no real object is left must have $\mu(i)=\varnothing$; otherwise, his real $\mu$-allotment would have been removed in an earlier $\mu$-cycle, in which he himself would have participated. Hence, the algorithm would return $\mu$, contrary to $\mu^\rhd\neq\mu$. Thus, at least one non-$\mu$-cycle forms. Let $K$ be the number of steps at which at least one non-$\mu$-cycle forms, and enumerate these steps as $t^1<\cdots<t^K$. For each $\ell\in\{1,\ldots,K\}$, let $\mathscr{C}^\ell$ be the set of all non-$\mu$-cycles at Step~$t^\ell$, and define $S(\mathscr{C}^\ell):=\bigcup_{\mathcal C\in\mathscr{C}^\ell}S(\mathcal C)$.

Set $\mu^0:=\mu$. For each $\ell\in\{1,\ldots,K\}$, define $\mu^\ell$ recursively by\vspace{-0.2cm}
\[ \mu^\ell(i):=
\begin{cases}
\mu^\rhd(i),
&\text{if }i\in S(\mathscr{C}^\ell),\\
\varnothing,
&\text{if }i\notin S(\mathscr{C}^\ell)
 \text{ and there exists }h\in S(\mathscr{C}^\ell)
 \text{ such that }\mu^{\ell-1}(i)=\mu^\rhd(h),\\
\mu^{\ell-1}(i),
&\text{otherwise.}
\end{cases}\vspace{-0.2cm} \]

Call an agent outside $S(\mathscr{C}^\ell)$ an \textbf{outsider} at Step~$t^\ell$, and say that an outsider is \textbf{displaced} if some cycle agent receives his $\mu^{\ell-1}$-allotment. Since distinct cycles are disjoint in agents and in real objects, and every displaced outsider receives the null object, $\mu^\ell$ is a well-defined allocation.

We use the same STTC steps to construct two kinds of coalitions. Define $\widehat S^0:=\emptyset$. For each $\ell\in\{1,\ldots,K\}$, define
\begin{itemize}
    \item the \emph{path coalition}\vspace{-0.2cm}
    \[ \widehat{S}^\ell_{\mathrm{path}}
    :=
    \{i\in S(\mathscr{C}^\ell)\mid
       \mu^\rhd(i)\mathbin{P_i}\mu^{\ell-1}(i)\},\vspace{-0.2cm} \]
    which compares the final STTC allotment with the current path allotment; and
    \item the \emph{external-stability coalition}\vspace{-0.2cm}
    \[\widehat{S}^\ell
    :=
    \left\{i\in\bigcup_{k=1}^{\ell}S(\mathscr{C}^k)
       \mathrel{\Big|}
       \mu^\rhd(i)\mathbin{P_i}\mu(i)\right\},\vspace{-0.2cm} \]
    which compares the final STTC allotment with the original allotment.
\end{itemize}
Thus, the path coalition is local and myopic, whereas the external-stability coalition is cumulative and farsighted. By construction, for every $\ell\in\{1,\ldots,K\}$, $\widehat{S}^{\ell-1}\subseteq\widehat{S}^{\ell}$.

\paragraph{History and location property.}
At each of the allocations $\mu^0,\ldots,\mu^K$, every agent holds one of three allotments: his original allotment, his final STTC allotment, or the null object. More precisely, we record the following \emph{history and location property}: for every $\ell\in\{1,\ldots,K\}$ and every agent $i$,
\begin{enumerate}
\item[\textup{(i)}] if $i$ is present at Step~$t^\ell$, then $\mu^{\ell-1}(i)\in\{\mu(i),\varnothing\}$, and $\mu^{\ell-1}(i)$ is exactly his allotment in the reduced allocation at Step~$t^\ell$;
\item[\textup{(ii)}] if $i$ departed before Step~$t^\ell$, then $\mu^{\ell-1}(i)$ is his final STTC allotment, which is an already-removed object or the null object; and
\item[\textup{(iii)}] if $i\in S(\mathscr{C}^\ell)$, then $\mu^\rhd(i)\mathbin{R_i}\mu^{\ell-1}(i)$.
\end{enumerate}
To see this, note that once an agent participates in an STTC cycle he receives $\mu^\rhd(i)$ and keeps it at every later $\mu^\ell$, whereas an agent who has not yet participated either keeps $\mu(i)$ or has been displaced and receives the null object. Part~\textup{(iii)} follows because a cycle agent points to his most-preferred object in $O^{t^\ell}\cup\{\varnothing\}$, and by part~\textup{(i)} his current allotment lies in that set.

Two consequences follow. By part~\textup{(ii)}, any agent whose $\mu^{\ell-1}$-allotment is a real object still present at Step~$t^\ell$ is himself present at Step~$t^\ell$. By part~\textup{(iii)} and antisymmetry, every cycle agent outside $\widehat{S}^\ell_{\mathrm{path}}$ keeps his current allotment and so displaces no outsider; hence the outsiders displaced in the definition of $\mu^\ell$ are exactly those whose $\mu^{\ell-1}$-allotment is claimed by a member of $\widehat{S}^\ell_{\mathrm{path}}$. Hence, whenever $\widehat{S}^\ell_{\mathrm{path}}\neq\emptyset$, the post-blocking formula in Definition~\ref{defi.postblocking}, applied to $\widehat{S}^\ell_{\mathrm{path}}$ at $\mu^{\ell-1}$ with blocking allocation $\mu^\ell$, reproduces $\mu^\ell$. If $\widehat{S}^\ell_{\mathrm{path}}=\emptyset$, then $\mu^\ell=\mu^{\ell-1}$.

We also use Lemma~\ref{lemm.evictionproperties}(c): once an agent's original allotment has been reached, the exclusion rights associated with his endowment become available to the coalition.

\paragraph{A cyclewise argument based on chains of dependence.}
Fix $\ell\in\{1,\ldots,K\}$ and an ordinary cycle $\mathcal{C}\in\mathscr{C}^\ell$, and define its local improvers by\vspace{-0.2cm}
\[ \widehat S^\ell_{\mathrm{path}}(\mathcal{C}) := \widehat{S}^\ell_{\mathrm{path}}\cap S(\mathcal{C})
= T(\mathcal{C},\mu^{\ell-1}),\vspace{-0.2cm} \]
the last equality holding by the definitions of $\widehat{S}^\ell_{\mathrm{path}}$ and of $T(\mathcal{C},\nu)$ in Lemma~\ref{lemm.cyclechain}. Suppose that there exists an agent $j\notin S(\mathcal{C})$ such that $\mu^{\ell-1}(j)$ is a real object of $\mathcal{C}$. Thus, $j$ is displaced by this cycle. By parts~\textup{(i)} and~\textup{(ii)} of the history and location property, $j\in N^{t^\ell}\setminus S(\mathcal{C})$ and $\mu^{\ell-1}$ agrees with the reduced allocation on $N^{t^\ell}$, so Lemma~\ref{lemm.cyclechain} in Appendix~\ref{appendix:sttc} applies with $\nu:=\mu^{\ell-1}$. By that lemma, $\widehat S^\ell_{\mathrm{path}}(\mathcal{C})\neq\emptyset$. Let $m$ be the number of agents of $\mathcal{C}$ and enumerate $\mathcal{C}$ from $o^1:=\mu^{\ell-1}(j)$ as $o^1\rightarrow i^1\rightarrow\cdots\rightarrow o^m\rightarrow i^m\rightarrow o^1$, and let $q$ be the smallest index with $i^q\in\widehat S^\ell_{\mathrm{path}}(\mathcal{C})$. Then, for each $r\leq q$, $o^r\in\omega_{i^r}$, and, for each $r<q$, $\mu^{\ell-1}(i^r)=o^{r+1}$.\par\smallskip
These are the conditions of a chain of dependence at $\mu^{\ell-1}$, and $\widehat S^\ell_{\mathrm{path}}(\mathcal{C})$ reaches $i^q$ because $i^q$ belongs to it. Therefore, by Lemma~\ref{lemm.evictionproperties}(d), applied with $S:=\widehat S^\ell_{\mathrm{path}}(\mathcal{C})$ and the allocation $\mu^{\ell-1}$, $\mu^{\ell-1}(j)=o^1\in\Omega(\widehat S^\ell_{\mathrm{path}}(\mathcal{C})\mid\mu^{\ell-1})$. Since $\widehat S^\ell_{\mathrm{path}}(\mathcal{C})\subseteq\widehat S^\ell_{\mathrm{path}}$, by Lemma~\ref{lemm.evictionproperties}(b), $\mu^{\ell-1}(j)\in{\Omega(\widehat{S}^\ell_{\mathrm{path}}\mid\mu^{\ell-1})}$.

Cycles involving the null object displace no outsider, so this argument is not needed for them. Since the ordinary cycles are disjoint, the preceding argument applies independently to every outsider displaced by the simultaneous cycles in $\mathscr{C}^\ell$. Therefore, whenever $\widehat{S}^\ell_{\mathrm{path}}\neq\emptyset$, coalition $\widehat{S}^\ell_{\mathrm{path}}$ exclusion-blocks $\mu^{\ell-1}$ via $\mu^\ell$. Together with the post-blocking formula above, this blocking is self-contained and establishes the myopic transition $\mu^{\ell-1}\mapsto\mu^\ell$.
If $\widehat{S}^\ell_{\mathrm{path}}=\emptyset$, then $\mu^\ell=\mu^{\ell-1}$. When forming the final path, we delete this repeated allocation. This proves the path conclusion at every step; it remains to verify that the last path allocation is $\mu^\rhd$, which we do below.

We now prove by induction that, for every $\ell\in\{1,\ldots,K\}$, $\widehat{S}^\ell$ exclusion-blocks $\mu$ via $\mu^\ell$.
All cycles executed before Step~$t^1$ are $\mu$-cycles, so every agent who left earlier received his $\mu$-allotment. Consequently, the $\mu$-allotment of every agent present at Step~$t^1$ is still available. Indeed, a real object removed earlier was taken by its $\mu$-holder, who has departed, so the $\mu$-allotment of a present agent cannot have been removed. Hence, every cycle agent at Step~$t^1$ weakly prefers his final STTC allotment to his original allotment. Each cycle in $\mathscr{C}^1$ is a non-$\mu$-cycle, so it contains an agent $i$ with $\mu^\rhd(i)\neq\mu(i)$; together with the preceding weak preference and antisymmetry, $\mu^\rhd(i)\mathbin{P_i}\mu(i)$. Since $\mu^0=\mu$, we have $\widehat{S}^1=\widehat{S}^1_{\mathrm{path}}$, and this set is nonempty.
The cyclewise argument, now with $\mu^0=\mu$, shows that every outsider harmed at $\mu^1$ has his original allotment in $\Omega(\widehat{S}^1\mid\mu)$. Thus, $\widehat{S}^1$ exclusion-blocks $\mu$ via $\mu^1$.

Fix $\ell\in\{2,\ldots,K\}$, suppose that the claim holds at $\ell-1$, and consider Step~$t^\ell$. Every member of $\widehat{S}^\ell$ has already participated in one of the cycles in $\mathscr{C}^1,\ldots,\mathscr{C}^\ell$ and therefore receives his final STTC allotment at $\mu^\ell$. By definition, he strictly prefers this allotment to his original allotment. It remains to verify that $\widehat{S}^\ell$ exclusion-enforces $\mu^\ell$ at $\mu$. Let $j\notin\widehat{S}^\ell$ satisfy $\mu(j)\mathbin{P_j}\mu^\ell(j)$. If $\mu(j)\mathbin{P_j}\mu^{\ell-1}(j)$, then the induction hypothesis and Lemma~\ref{lemm.evictionproperties}(b) give $\mu(j)\in\Omega(\widehat{S}^{\ell-1}\mid\mu)\subseteq{\Omega(\widehat{S}^{\ell}\mid\mu)}$.

Suppose instead that $\mu(j)\mathbin{P_j}\mu^{\ell-1}(j)$ does not hold. Since preferences are strict, $\mu^{\ell-1}(j)\mathbin{R_j}\mu(j)$. Together with $\mu(j)\mathbin{P_j}\mu^\ell(j)$, we obtain $\mu^{\ell-1}(j)\mathbin{R_j}\mu(j)\mathbin{P_j}\mu^\ell(j)$.
Thus, $j$ becomes worse off relative to $\mu$ at the current step.
Agent $j$ cannot belong to $S(\mathscr{C}^\ell)$. Indeed, suppose that $j\in S(\mathscr{C}^\ell)$. Then $j$ receives his final STTC allotment at $\mu^\ell$, and, by part~\textup{(iii)} of the history and location property, $\mu^\ell(j)=\mu^\rhd(j)\mathbin{R_j}\mu^{\ell-1}(j)$. But $\mu^{\ell-1}(j)\mathbin{R_j}\mu(j)\mathbin{P_j}\mu^\ell(j)$ implies $\mu^{\ell-1}(j)\mathbin{P_j}\mu^\ell(j)$, which contradicts $\mu^\ell(j)\mathbin{R_j}\mu^{\ell-1}(j)$. Since $\mu^\ell$ differs from $\mu^{\ell-1}$ only for current cycle agents and displaced outsiders, $j$ must be a newly displaced outsider. Hence, $\mu^\ell(j)=\varnothing$, and his real allotment $\mu^{\ell-1}(j)$ is taken by one of the current cycles. Since $j$ is still present at Step~$t^\ell$, part~\textup{(i)} of the history and location property gives $\mu^{\ell-1}(j)\in\{\mu(j),\varnothing\}$.
Moreover, $\mu^{\ell-1}(j)\mathbin{R_j}\mu(j)\mathbin{P_j}\mu^\ell(j)=\varnothing$, which rules out $\mu^{\ell-1}(j)=\varnothing$. Hence, $\mu^{\ell-1}(j)=\mu(j)$.
Let $\mathcal{C}$ be the cycle that displaces $j$. We show that $\mu(j)\in\Omega(\widehat{S}^\ell\mid\mu)$, first establishing that every local improver in $\mathcal{C}$ is either a member of $\widehat{S}^\ell$ or has his endowment contained in $\Omega(\widehat{S}^\ell\mid\mu)$, and then applying Lemmas~\ref{lemm.cyclechain} and~\ref{lemm.evictionproperties}(d) to $\mathcal{C}$ with the ownership chain read relative to the original allocation~$\mu$.

For every local improver $i\in \widehat S^\ell_{\mathrm{path}}(\mathcal{C})$, one of the following two cases applies. If $i\in\widehat{S}^\ell$, then $i$ is already a member of the farsighted coalition. If $i\notin\widehat{S}^\ell$, then $\mu(i)\mathbin{R_i}\mu^\rhd(i)\mathbin{P_i}\mu^{\ell-1}(i)$. Since $i$ is still present at Step~$t^\ell$, part~\textup{(i)} of the history and location property gives $\mu^{\ell-1}(i)\in\{\mu(i),\varnothing\}$. The preference chain rules out $\mu^{\ell-1}(i)=\mu(i)$. Hence, $\mu^{\ell-1}(i)=\varnothing$, and therefore $\mu(i)\mathbin{P_i}\mu^{\ell-1}(i)$.

Moreover, since $i$ is still present at Step~$t^\ell$, he has not participated in any earlier cycle. Hence, $i\notin\widehat{S}^{\ell-1}$. By the induction hypothesis, $\widehat{S}^{\ell-1}$ exclusion-blocks $\mu$ via $\mu^{\ell-1}$. Since $i\notin\widehat{S}^{\ell-1}$ and $\mu(i)\mathbin{P_i}\mu^{\ell-1}(i)$, the exclusion-enforcement condition implies $\mu(i)\in\Omega(\widehat{S}^{\ell-1}\mid\mu)$. Because $\widehat{S}^{\ell-1}\subseteq\widehat{S}^{\ell}$, Lemma~\ref{lemm.evictionproperties}(b) gives $\mu(i)\in\Omega(\widehat{S}^{\ell-1}\mid\mu)\subseteq\Omega(\widehat{S}^{\ell}\mid\mu)$. Lemma~\ref{lemm.evictionproperties}(c) then implies that the exclusion rights associated with the objects in $\omega_i$ are available to coalition $\widehat{S}^{\ell}$.

Return to the enumeration of $\mathcal{C}$ from $o^1:=\mu^{\ell-1}(j)$, $o^1\to i^1\to o^2\to\cdots\to o^m\to i^m\to o^1$, supplied by Lemma~\ref{lemm.cyclechain} with $\nu:=\mu^{\ell-1}$, and let $q\in\{1,\ldots,m\}$ be the smallest index such that $i^q$ is a local improver. Since $\mu^{\ell-1}(j)=\mu(j)$, as shown above, $o^1=\mu(j)$. For every $r<q$, agent $i^r$ is not a local improver. Hence, by Lemma~\ref{lemm.cyclechain}(b), $\mu^{\ell-1}(i^r)=o^{r+1}$. Since $o^{r+1}$ is a real object, part~\textup{(i)} of the history and location property then implies $\mu(i^r)=\mu^{\ell-1}(i^r)$. Thus, for every $r<q$, $\mu(i^r)=\mu^{\ell-1}(i^r)=o^{r+1}$.
By Lemma~\ref{lemm.cyclechain}(a), for each $r\leq q$, $o^r\in\omega_{i^r}$; and $i^q$ is either a member of $\widehat{S}^\ell$ or satisfies $\omega_{i^q}\subseteq\Omega(\widehat{S}^\ell\mid\mu)$, as shown above. Therefore, by Lemma~\ref{lemm.evictionproperties}(d), applied with $S:=\widehat{S}^\ell$ and the original allocation $\mu$, $\mu(j)=o^1\in\Omega(\widehat{S}^\ell\mid\mu)$.

This verifies exclusion-enforcement for every outsider harmed at $\mu^\ell$. Therefore, $\widehat{S}^\ell$ exclusion-blocks $\mu$ via $\mu^\ell$, completing the induction.

It remains to show that the final path allocation is the STTC allocation, $\mu^K=\mu^\rhd$. Every agent who leaves in a $\mu$-cycle retains his original allotment, which equals his final STTC allotment. Every agent who leaves in a non-$\mu$-cycle receives his final STTC allotment at the corresponding path step or already holds it when that step leaves the path allocation unchanged. It remains only to consider an agent $i$ who never leaves in a cycle and is assigned $\mu^\rhd(i)=\varnothing$ when the algorithm stops because no real object remains. If $\mu(i)=\varnothing$, then, for every $\ell$, $\mu^\ell(i)=\varnothing$. If $\mu(i)$ is a real object, then it was removed at some step and assigned to an agent $h$ with $\mu^\rhd(h)=\mu(i)$. The cycle of $h$ cannot be a $\mu$-cycle: otherwise $\mu(h)=\mu^\rhd(h)=\mu(i)$, and since $\mu(i)$ is a real object and an allocation assigns a real object to at most one agent, this would imply $h=i$, contradicting that $i$ never leaves in a cycle. Hence, for some $\ell'$, $h\in S(\mathscr{C}^{\ell'})$, and $i$ is displaced at path step $\ell'$, so $\mu^{\ell'}(i)=\varnothing$. In both cases $\mu^K(i)=\varnothing=\mu^\rhd(i)$. Consequently, $\mu^K=\mu^\rhd$.
After deleting repeated allocations and relabeling the remaining sequence, the myopic transitions constructed above form a self-contained exclusion-blocking path from $\mu$ to the exclusion-core-stable allocation $\mu^\rhd$. This proves Theorem~\ref{theorem:pathtoexclusion}.

Finally, applying the farsighted induction at $\ell=K$, $\widehat{S}^K$ exclusion-blocks the original allocation $\mu$ via $\mu^K$. Since we have just shown that $\mu^K=\mu^\rhd$, and $\mu^\rhd$ belongs to the exclusion core, external stability follows. This proves Theorem~\ref{theorem:vNMstability}.
\end{proof}

\begin{proof}[\textbf{Proof of Corollary~\ref{coro:randompath}}]
Let $R$ be a problem and $\mu$ an allocation. Since the sets of agents and objects are finite, the set of allocations is finite. If every allocation is exclusion core stable, the conclusion is immediate. Suppose, therefore, that at least one allocation lies outside the exclusion core. By Theorem~\ref{theorem:pathtoexclusion}, for each such allocation $\nu$, choose one finite self-contained exclusion-blocking path from $\nu$ to an exclusion-core-stable allocation. Let $L_\nu$ denote its length and let $p_\nu>0$ denote the probability, conditional on starting from $\nu$, that the random process follows this path. Since the path is finite and every available transition is selected with positive probability, $p_\nu$ is positive. Since there are only finitely many allocations, $L:=\max_{\nu\notin\EC(R)}L_\nu$ and $p:=\min_{\nu\notin\EC(R)}p_\nu$ are well defined, and $p>0$. Because every exclusion-core-stable allocation is absorbing, from every allocation outside the exclusion core the probability of reaching the exclusion core within the next $L$ steps is at least $p$.

Conditional on the process not having reached the exclusion core after $rL$ steps, its current allocation lies outside the exclusion core. The probability that it also fails to reach the exclusion core during the next $L$ steps is therefore at most $1-p$. Induction on $r$ gives\vspace{-0.2cm}
\[ \Pr(\text{the exclusion core has not been reached after }rL\text{ steps})
\leq (1-p)^r.\vspace{-0.2cm} \]
Since $(1-p)^r$ converges to zero, with probability one the process reaches an exclusion-core-stable allocation after finitely many steps and remains there.
\end{proof}

\section{An illustration of the STTC algorithm and stability results}\label{appendix:STTCexample}
The following example illustrates the STTC algorithm, the exclusion-core characterization in Proposition~\ref{prop.corecharacterization}, the exclusion-core-stability test in Theorem~\ref{theorem:coretest}, the external-stability conclusion of Theorem~\ref{theorem:vNMstability}, and the path construction in Theorem~\ref{theorem:pathtoexclusion}.

\begin{exam}[\textbf{An illustration of the STTC algorithm and stability results}]\label{example:STTC}
Consider a set of agents $N = \{1,2,3,4,5,6,7\}$, a set of objects $O = \{a,b,c,d,e,f\}$, and endowments $\omega_1=\{a,b\}$, $\omega_2=\{c\}$, $\omega_3=\{d\}$, and $\omega_4=\omega_5=\omega_6=\omega_7=\emptyset$. Thus, objects $e$ and $f$ are public. Fix the priority order $\rhd:1,2,3,4,5,6,7$ and the initial allocation $\bm{\mu}:=(\bm{a},\bm{b},\bm{c},\bm{d},\bm{f},\bm{e},\bm{\varnothing})$.
   
Let $R$ be the preference profile displayed below, and define $\mu^\rhd:=\STTC(R,\mu,\rhd)$. In the preference profile, boldface indicates each agent's allotment at $\mu$, and underlining indicates his allotment at $\mu^\rhd$. An allotment shown in both formats is unchanged by the algorithm.\vspace{-0.2cm}
    
    \[ \begin{array}{ccccccc}
    R_{1}:& \underline{c} & b & \bm{a} & \varnothing & \ldots\\
    R_{2}:& \underline{\bm{b}} & \varnothing & \ldots\\
    R_{3}:& \bm{c} & \underline{d} & \varnothing & \ldots\\
    R_{4}:& \bm{d} & \underline{a} & \varnothing & \ldots\\
    R_{5}:& \underline{e} & d & \bm{f} & \varnothing & \ldots\\
    R_{6}:& \underline{f} & \bm{e} &\varnothing & \ldots\\
    R_{7}:& a & \underline{\bm{\varnothing}} & \ldots.
    \end{array}\vspace{-0.2cm}
    \]

Two kinds of superscripts are used below. A superscript $t$ on $N$, $O$, $\omega$, $R$, $\rhd$, $\mathcal{G}$, $\mathscr{C}$, or $\tilde{\mu}$ refers to Step~$t$ of the algorithm. Thus, the primed notation $N'$, $O'$, $\omega'$, $R'$, $\rhd'$, and $\mu'$ from Definition~\ref{defi.residualgraph} becomes $N^t$, $O^t$, $\omega^t$, $R^t$, $\rhd^t$, and $\tilde{\mu}^t$ here.

By contrast, a superscript $\ell$ on $\mu^\ell$ indexes the allocations on the blocking path, with $\mu^0=\mu$ and $\mu^K=\mu^\rhd$. In this example, every step produces a non-$\mu$-cycle, so the two indices happen to coincide; in general, they do not.

We now apply the STTC algorithm. At each step, we first describe the reduced STTC graph and the cycles that form. We then identify the corresponding myopic path coalition and cumulative external-stability coalition. At Step~1, we also illustrate the exclusion-core-stability test.

\paragraph{$\bm{(\rhd,\mu)}$-STTC algorithm:}

\paragraph{Step~1.} Let $N^1=N$ and $O^1=O$. Construct the STTC graph $\mathcal{G}^1=\mathcal{G}(N^1,O^1,\omega,R,\mu,\rhd)$ according to Definition~\ref{defi.graph}. All agents point to their most-preferred object, which yields the edges $(1,c)$, $(2,b)$, $(3,c)$, $(4,d)$, $(5,e)$, $(6,f)$, $(7,a)$. Since agent~$1$ owns objects $a$ and $b$, agent~$2$ owns object $c$, and agent $3$ owns object $d$, edges $(a,1)$, $(b,1)$, $(c,2)$, and $(d,3)$ form. In addition, since public object $f$ is assigned to agent~$5$ under $\mu$, edge $(f,5)$ forms, and since public object $e$ is assigned to agent~$6$ under $\mu$, edge $(e,6)$ forms. Since no agent points to the null object $\varnothing$, it has no outgoing edge. The resulting directed STTC graph $\mathcal{G}^1$ is depicted in Figure~\ref{fig.G1}.\vspace{-0.2cm}

    \begin{figure}[!htbp]
    \centering
    \begin{tikzpicture}[
        every node/.style={inner sep=2pt}
    ]


    \foreach \i in {1,...,7} \node (\i) at ({2*(\i-1)},0) {$\i$};


    \node (a) at (-0.6,-1.6) {$a$};
    \node (b) at (0.6,-1.6) {$b$};
    \node (c) at (3.0,-1.6) {$c$};
    \node (d) at (5.0,-1.6) {$d$};
    \node (e) at (8.0,-1.6) {$e$};
    \node (f) at (10.0,-1.6) {$f$};


    \node (null) at ($(1)!0.5!(7)+(0,0.95cm)$) {$\varnothing$};


    \node[left=0.5cm of 1] {$\mathcal{G}^1=$};


    \draw[->, thick] (1) -- (c);
    \draw[->, thick] (2) -- (b);
    \draw[->, thick] (3) -- (c);
    \draw[->, thick] (4) -- (d);
    \draw[->, thick] (5) -- (e);
    \draw[->, thick] (6) -- (f);
    \draw[->, thick] (7) .. controls (12,-3.0) and (-0.6,-3.0) .. (a);


    \draw[->, thick] (a) -- (1);
    \draw[->, thick] (b) -- (1);
    \draw[->, thick] (c) -- (2);
    \draw[->, thick] (d) -- (3);
    \draw[->, thick] (f) -- (5);
    \draw[->, thick] (e) -- (6);

    \end{tikzpicture}
    \caption{The STTC graph $\mathcal{G}^1$ constructed at Step~1.}
    \label{fig.G1}
    \end{figure}\vspace{-0.2cm}

In STTC graph $\mathcal{G}^1$, two cycles form: $1 \rightarrow c \rightarrow 2 \rightarrow b \rightarrow 1$ and $5 \rightarrow e \rightarrow 6 \rightarrow f \rightarrow 5$. These are earliest non-$\mu$-cycles. We denote the set of all non-$\mu$-cycles at this step by $\mathscr{C}^1$ and define\vspace{-0.2cm}
\[ S(\mathscr{C}^1):= \bigcup_{\mathcal C\in\mathscr C^1}S(\mathcal C)=\{1,2,5,6\}.\vspace{-0.2cm} \]
Execute these cycles by giving agent~$1$ object $c$, agent $2$ object $b$, agent $5$ object $e$, and agent $6$ object $f$.

\subparagraph{Exclusion-core-stability test.} The first non-$\mu$-cycle $1 \rightarrow c \rightarrow 2 \rightarrow b \rightarrow 1$ identifies the following exclusion-blocking. Let $\widehat{S}=\{1\}$, and let the blocking allocation be $\widehat{\sigma}=(c,b,\varnothing,d,f,e,\varnothing)$.

Agent~$1 \in \widehat{S}$ receives $c$, while agent~$3 \notin \widehat{S}$ (with $\mu(3) = c = \widehat{\sigma}(1)$) is left with $\varnothing$. Other agents receive their assignment at $\mu$. To see that coalition $\widehat{S}$ exclusion-blocks $\mu$ via $\widehat{\sigma}$, notice that agent~$1$ is strictly better off receiving object $c$, which is assigned to agent $3$ under $\mu$. Agent~$3$, in turn, is worse off because his most-preferred object, $c$, is now assigned to agent~$1$. However, $b\in\omega_1$ and agent~$2$ holds $b$ at $\mu$; hence agent~$2$ becomes reachable from $\widehat{S}=\{1\}$. Since $c\in\omega_2$, it follows that $c\in\Omega(\widehat{S}\mid\mu)$. Thus, object $c$ belongs to agent $1$'s extended eviction set. By Theorem~\ref{theorem:coretest}, $\mu$ is not exclusion core stable.

We now construct the coalitions and allocations used for the path and external-stability results. Set $\mu^0:=\mu$.

\subparagraph{Path transition.} Define the path coalition\vspace{-0.2cm}
\[\widehat{S}^1_{\mathrm{path}} = \{i \in S(\mathscr{C}^1) \mid \mu^{\rhd}(i) \mathbin{P_i} \mu^0(i)\} = \{1, 5, 6\}.\vspace{-0.2cm}\]
Note that agent $2 \notin \widehat{S}^1_{\mathrm{path}}$ because $\mu^{\rhd}(2) = b = \mu^0(2)$. Define allocation \(\mu^1=(c,b,\varnothing,d,e,f,\varnothing)\).
Here, agents in $S(\mathscr{C}^1)$ receive their STTC allocations: $\mu^1(1) = \mu^\rhd(1) = c$, $\mu^1(2) = \mu^\rhd(2) = b$, $\mu^1(5) = \mu^\rhd(5) = e$, and $\mu^1(6) = \mu^\rhd(6) = f$. Agent~$3$ receives $\varnothing$ because $\mu^0(3) = c = \mu^\rhd(1)$: his current allotment is taken by agent~$1$ in the cycle. All other agents keep their $\mu^0$-allotments. Every member of $\widehat{S}^1_{\mathrm{path}}$ is strictly better off at $\mu^1$. Agent~$3$ is the only outsider who is worse off, and $\mu^0(3)= c\in\Omega(\widehat{S}^1_{\mathrm{path}} \mid\mu^0)$, because $c$ is owned by agent~$2$, who receives $b\in\omega_1$, and agent~$1$ belongs to $\widehat{S}^1_{\mathrm{path}}$. Hence, $\widehat{S}^1_{\mathrm{path}}$ exclusion-blocks $\mu^0$ via $\mu^1$. Applying Definition~\ref{defi.postblocking} to $\widehat{S}^1_{\mathrm{path}}$ at $\mu^0$ with blocking allocation $\mu^1$ reproduces $\mu^1$. Thus, the blocking is self-contained and establishes the transition $\mu^0 \mapsto \mu^1$.

\subparagraph{External-stability coalition.} Define the external-stability coalition\vspace{-0.2cm}
\[\widehat{S}^1 = \{i \in S(\mathscr{C}^1) \mid \mu^{\rhd}(i) \mathbin{P_i} \mu(i)\} = \{1, 5, 6\}.\vspace{-0.2cm}\]
At this step, $\widehat{S}^1 = \widehat{S}^1_{\mathrm{path}}$ because both coalitions are defined relative to the original allocation $\mu = \mu^0$. Therefore, the preceding verification also shows that $\widehat{S}^1$ exclusion-blocks $\mu$ via $\mu^1$.

Let $N^2 = N^1\setminus \{1,2,5,6\} = \{3,4,7\}$, $O^2 = O^1\setminus \{b,c,e,f\} = \{a,d\}$.

\paragraph{Step~2.} We derive the reduced problem for $N^2 = \{3,4,7\}$ and $O^2 = \{a, d\}$. The reduced endowment structure is given by $\omega^2_3 = \{d\}$ and $\omega^2_4 = \omega^2_7 = \emptyset$. Since $\omega_1 = \{a,b\}$, agent $1 \notin N^2$ and $a\in O^2$, object $a$ turns into an unassigned public object in the reduced problem (see Definition~\ref{defi.residualgraph}). The reduced preference profile is $R^2_{3}: d, \varnothing$, $R^2_{4}: d, a, \varnothing$, and $R^2_{7}: a, \varnothing$.
    
The reduced allocation, listed for agents $3$, $4$, and $7$ in that order, is $\tilde{\mu}^2 =(\varnothing,d,\varnothing)$. The reduced priority order is $\rhd^2: 3, 4, 7$.

Agents~$3$, $4$, and $7$ point to $d$, $d$, and $a$, respectively. Object~$d$ points to its owner, agent~$3$. Object~$a$ has no remaining owner and is unassigned in the reduced allocation, so it points to the highest-priority remaining agent, also agent~$3$. Since no agent points to the null object, it is isolated. The resulting graph is shown in Figure~\ref{fig.G2}.\vspace{-0.2cm}

    \begin{figure}[!htbp]
    \centering
    \begin{tikzpicture}[
        node distance=2cm,
        every node/.style={inner sep=2pt}
    ]

    \node (3) {$3$};
    \node[right=2cm of 3] (4) {$4$};
    \node[right=2cm of 4] (7) {$7$};

    \node[below=1.5cm of 3] (d) {$d$};
    \node[below=1.5cm of 7] (a) {$a$};

    \node (null) at ($(3)!0.5!(7)+(0,0.95cm)$) {$\varnothing$};

    \node[left=0.5cm of 3] {$\mathcal{G}^2=$};

    \draw[->, thick] (3) to[bend left=25] (d);
    \draw[->, thick] (4) -- (d);
    \draw[->, thick] (7) -- (a);

    \draw[->, thick] (d) to[bend left=25] (3);
    \draw[->, thick] (a) -- (3);

    \end{tikzpicture}
    \caption{The STTC graph $\mathcal{G}^2$ constructed at Step~2.}
    \label{fig.G2}
\end{figure}\vspace{-0.2cm}

In the STTC graph $\mathcal{G}^2$, only one cycle forms, $3 \rightarrow d \rightarrow 3$. Let $\mathscr{C}^2$ be the set containing this non-$\mu$-cycle. Then $S(\mathscr{C}^2)=\{3\}$. We execute the cycle and assign object~$d$ to agent~$3$.

\subparagraph{Path transition.} Define the path coalition\vspace{-0.2cm}
\[\widehat{S}^2_{\mathrm{path}} = \{i \in S(\mathscr{C}^2) \mid \mu^{\rhd}(i) \mathbin{P_i} \mu^1(i)\} = \{3\}.\vspace{-0.2cm}\]
Agent $3$ strictly improves from $\mu^1(3) = \varnothing$ to $\mu^\rhd(3) = d$. Define allocation $\mu^2=(c,b,d,\varnothing,e,f,\varnothing)$.
Here, agent~$3\in S(\mathscr{C}^2)$ receives $\mu^2(3)=\mu^\rhd(3)=d$. Agent~$4$ receives $\varnothing$ because $\mu^1(4)=d=\mu^\rhd(3)$: his current allotment is taken by agent~$3$. All other agents keep their $\mu^1$-allotments. Agent~$3$ is strictly better off, and agent~$4$ is the only outsider who is worse off. Since $d\in\omega_3\subseteq\Omega(\{3\}\mid\mu^1)$, coalition $\widehat{S}^2_{\mathrm{path}}=\{3\}$ exclusion-blocks $\mu^1$ via $\mu^2$. Applying Definition~\ref{defi.postblocking} to $\widehat{S}^2_{\mathrm{path}}$ at $\mu^1$ with blocking allocation $\mu^2$ reproduces $\mu^2$. Thus, the blocking is self-contained and establishes the transition $\mu^1 \mapsto \mu^2$. Hence, we obtain the self-contained exclusion-blocking path $\mu \mapsto \mu^1 \mapsto \mu^2$, of length two, from $\mu$ to $\mu^2$.

\subparagraph{External-stability coalition.}
Define the external-stability coalition\vspace{-0.2cm}
\[\widehat{S}^2 = \{i \in \bigcup_{k=1}^{2}S(\mathscr{C}^k) \mid \mu^{\rhd}(i) \mathbin{P_i} \mu(i)\} = \{1, 5, 6\}.\vspace{-0.2cm}\]

Note that $\widehat{S}^2 \neq \widehat{S}^2_{\mathrm{path}}$, illustrating the key difference between path coalitions and external stability coalitions. Specifically, agent $3 \notin \widehat{S}^2$ because $\mu(3)=c$, $\mu^\rhd(3)=d$, and he strictly prefers $c$ to $d$ (agent $3$ is worse off at the final allocation than at $\mu$). However, $\widehat{S}^2$ still exclusion-blocks $\mu$ via $\mu^2$: all agents in $\widehat{S}^2$ are strictly better off, and agents $3$ and $4$ are worse off but their objects under $\mu$ are in the extended eviction set. Specifically, $\mu(3) = c\in\Omega(\widehat{S}^2\mid\mu)$, as shown at Step~1. Therefore, by Lemma~\ref{lemm.evictionproperties}(c), $\omega_3 = \{d\} \subseteq\Omega(\widehat{S}^2\mid\mu)$, and hence $\mu(4) = d\in\Omega(\widehat{S}^2\mid\mu)$.

Let $N^3 = N^2\setminus\{3\}= \{4,7\}$, and $O^3 = O^2\setminus\{d\}=\{a\}$.

\paragraph{Step~3.} We derive the reduced problem for $N^3 = \{4,7\}$ and $O^3 = \{a\}$. The reduced endowment structure is given by $\omega^3_4 = \omega^3_7 = \emptyset$. The reduced preference profile is $R^3_{4}: a, \varnothing$ and $R^3_{7}: a, \varnothing$.

The reduced allocation, listed for agents $4$ and $7$ in that order, is \(\tilde{\mu}^3=(\varnothing,\varnothing)\). The reduced priority order is $\rhd^3: 4, 7$.

Agents~$4$ and $7$ both point to $a$. Since $a$ is an unassigned public object, it points to the highest-priority remaining agent, agent~$4$. The null object is isolated. The resulting graph is shown in Figure~\ref{fig.G3}.\vspace{-0.2cm}

\begin{figure}[!htbp]
        \centering
        \begin{tikzpicture}[
            node distance=2cm,
            every node/.style={inner sep=2pt}
        ]


        \node (4) {$4$};
        \node[right=2cm of 4] (7) {$7$};


        \node[below=1.5cm of 4] (a) {$a$};


        \node (null) at ($(4)!0.5!(7)+(0,0.95cm)$) {$\varnothing$};


        \node[left=0.5cm of 4] {$\mathcal{G}^3=$};


        \draw[->, thick] (4) to[bend left=25] (a);
        \draw[->, thick] (7) -- (a);


        \draw[->, thick] (a) to[bend left=25] (4);

        \end{tikzpicture}
        \caption{The STTC graph $\mathcal{G}^3$ constructed at Step~3.}
        \label{fig.G3}
    \end{figure}\vspace{-0.2cm}

In the STTC graph $\mathcal{G}^3$, only one cycle forms, $4 \rightarrow a \rightarrow 4$. Let $\mathscr{C}^3$ be the set containing this non-$\mu$-cycle. Then $S(\mathscr{C}^3)=\{4\}$. We execute the cycle and assign object~$a$ to agent~$4$.

\subparagraph{Path transition.}
Define the path coalition\vspace{-0.2cm}
\[\widehat{S}^3_{\mathrm{path}} = \{i \in S(\mathscr{C}^3) \mid \mu^{\rhd}(i) \mathbin{P_i} \mu^2(i)\} = \{4\}.\vspace{-0.2cm}\]
Agent $4$ strictly improves from $\mu^2(4) = \varnothing$ to $\mu^\rhd(4) = a$. Define allocation $\mu^3=(c,b,d,a,e,f,\varnothing)=\mu^\rhd$.
Here, agent~$4\in S(\mathscr{C}^3)$ receives $\mu^3(4)=\mu^\rhd(4)=a$. All other agents keep their $\mu^2$-allotments. Agent~$4$ is strictly better off. Since object~$a$ is unassigned at $\mu^2$, no outsider is worse off at $\mu^3$, so the exclusion-enforcement condition is vacuously satisfied. Hence, $\widehat{S}^3_{\mathrm{path}}=\{4\}$ exclusion-blocks $\mu^2$ via $\mu^3$. Applying Definition~\ref{defi.postblocking} to $\widehat{S}^3_{\mathrm{path}}$ at $\mu^2$ with blocking allocation $\mu^3$ reproduces $\mu^3$. Thus, the blocking is self-contained and establishes the transition $\mu^2 \mapsto \mu^3$. By Proposition~\ref{prop.sttc-exclusion}, $\mu^3=\mu^\rhd$ is exclusion core stable. Therefore, $\mu=\mu^0\mapsto\mu^1\mapsto\mu^2\mapsto\mu^3=\mu^\rhd$ is a self-contained exclusion-blocking path from $\mu$ to the exclusion core. Its path coalitions are $\widehat{S}^1_{\mathrm{path}}=\{1,5,6\}$, $\widehat{S}^2_{\mathrm{path}}=\{3\}$, and $\widehat{S}^3_{\mathrm{path}}=\{4\}$. These are the myopic blocking coalitions associated with the three successive transitions.

\subparagraph{External-stability coalition.}
Define the external-stability coalition\vspace{-0.2cm}
\[\widehat{S}^3 = \{i \in \bigcup_{k=1}^{3}S(\mathscr{C}^k) \mid \mu^{\rhd}(i) \mathbin{P_i} \mu(i)\} = \{1, 5, 6\} = \widehat{S}^2.\vspace{-0.2cm}\]
The external-stability coalition does not expand at Step~3. Agent~$4$ is a local improver because $\mu^2(4)=\varnothing$, $\mu^\rhd(4)=a$, and he strictly prefers $a$ to $\varnothing$; therefore, $4\in\widehat{S}^3_{\mathrm{path}}$. However, he is not a global improver because $\mu(4)=d$, $\mu^\rhd(4)=a$, and he strictly prefers $d$ to $a$; therefore, $4\notin\widehat{S}^3$. This is precisely the distinction captured by the joint proof. Agent~$4$ was already displaced at Step~2, and $\mu(4)=d$ already belongs to $\Omega(\widehat{S}^2\mid\mu)$. At Step~3, he improves locally by receiving the unassigned public object~$a$, so no additional outsider is displaced and no additional exclusion rights are needed for the global blocking. Hence, the same coalition $\widehat{S}^3=\widehat{S}^2=\{1,5,6\}$ exclusion-blocks $\mu$ via the exclusion-core-stable allocation $\mu^3=\mu^\rhd$.

\paragraph{Termination.} Let $N^4 = N^3\setminus\{4\}= \{7\}$ and $O^4 = O^3\setminus\{a\}=\emptyset$. Since $N^4\neq \emptyset$ and $O^4=\emptyset$, we assign the null object to every agent in $N^4$, and the algorithm terminates after Step~$3$ with the final STTC allocation \(\underline{\mu^\rhd}= (\underline{c},\underline{b}, \underline{d},\underline{a},\underline{e}, \underline{f},\underline{\varnothing})\).

\paragraph{Exclusion-core characterization.}
We finally show that for problem $R$, the exclusion core consists of exactly two allocations: $\mu^{\rhd}$, in which agent~$4$ receives $a$, and \(\mu^\star=(c,b,d,\varnothing,e,f,a)\), in which agent~$7$ receives $a$.
By Proposition~\ref{prop.corecharacterization}, every exclusion-core-stable allocation can be obtained by running the STTC algorithm initialized at the null allocation $\mu^{\varnothing}$ with some priority order. Fix an arbitrary priority order $\rhd'$. We now go through the $(\rhd',\mu^\varnothing)$-STTC algorithm and show that only two outcomes are possible.

At Step~$1$ of the $(\rhd',\mu^{\varnothing})$-STTC algorithm, agent $1$ points to object $c$, object $c$ points to its owner, agent $2$, agent $2$ points to object $b$, and object $b$ points to its owner, agent $1$. Hence, under every priority order, the non-$\mu^{\varnothing}$-cycle $1 \rightarrow c \rightarrow 2 \rightarrow b \rightarrow 1$ forms, assigning object $c$ to agent $1$ and object $b$ to agent $2$.

At Step~$2$, object $c$ has been removed, so agent $3$ points to his own object $d$ and, under every priority order, forms the non-$\mu^{\varnothing}$-cycle $3 \rightarrow d \rightarrow 3$, receiving object $d$.

Next, consider the public objects $e$ and $f$. As long as both objects remain, they point to the same agent, namely the highest-priority remaining agent. More generally, each remaining unassigned public object points to that agent. Only agent~$5$ points to $e$, and only agent~$6$ points to $f$. Hence, a cycle through $e$ (resp.\ $f$) forms exactly when agent $5$ (resp.\ agent $6$) is the highest-priority remaining agent, in which case it is the self-cycle $5 \rightarrow e \rightarrow 5$ (resp.\ $6 \rightarrow f \rightarrow 6$). Since no object points to agent $5$ or agent $6$ before that moment, neither agent can leave the algorithm earlier. Consequently, under every priority order, agent $5$ receives object $e$ and agent $6$ receives object $f$, at steps that depend on the priority order.

Finally, consider agents $4$ and $7$ and object $a$. Once agent $1$ has left at Step~$1$, object $a$ has no remaining owner, becomes public in the reduced problem, and, being unassigned at $\mu^{\varnothing}$, points to the highest-priority remaining agent. Agent $7$ points to object $a$ from Step~$1$ on, and agent $4$ points to object $a$ from Step~$3$ on, once object $d$ has been removed at Step~$2$. No other agent ever points to object $a$. Moreover, agents $4$ and $7$ point only to object $a$ (or, for agent $4$, to object $d$ while it is present) or to the null object, so neither agent can receive any object other than $a$. Hence, object $a$ is assigned in a self-cycle with whichever of agents $4$ and $7$ has the higher priority, once all higher-priority remaining agents have left, and the other agent receives $\varnothing$ when the algorithm terminates. In particular, if $4 \rhd' 7$, then agent~$4$ receives~$a$ and agent~$7$ receives $\varnothing$, yielding $\mu^{\rhd}$. If $7 \rhd' 4$, then agent~$7$ receives~$a$ and agent~$4$ receives~$\varnothing$, yielding $\mu^\star$. Since one of these two cases applies to every priority order, the exclusion core consists of exactly $\mu^{\rhd}$ and $\mu^\star$.\hfill~$\diamond$
\end{exam}

\section{Acyclic endowment structures}\label{appendix:acyclic}

Theorems~\ref{theorem:vNMstability} and~\ref{theorem:pathtoexclusion} are established for mixed-ownership problems. This appendix examines how far they reach. We first characterize the acyclic endowment structures, a class that strictly contains the mixed-ownership structures and splits into two cases: the mixed-ownership structures themselves, and the structures under which a single group co-owns every object. The first case is covered by Theorems~\ref{theorem:vNMstability} and~\ref{theorem:pathtoexclusion}. For the second, we establish von Neumann--Morgenstern stability in Theorem~\ref{thm.uniformvNM} and the existence of blocking paths in Theorem~\ref{thm.uniformpath}. Corollaries~\ref{cor.acyclicvNM} and~\ref{cor.acyclicpath} then combine the two cases. We then use two examples to show that acyclicity is not the boundary for external stability: outside this class, external stability may fail, but it may also hold at every preference profile.
In both examples a self-contained exclusion-blocking path to the exclusion core nevertheless exists.

\paragraph{Acyclicity.} \citet{balbuzanov2019endowments} obtain positive results for relational economies whose priority structure is acyclic. Transcribing their condition into our setting yields the following restriction on endowment structures.

\begin{defi}[\textbf{Acyclic endowment structures}]\label{defi.acyclic}
An endowment structure is \textbf{acyclic} if, for all distinct objects $o,o'\in O$ and all distinct agents $i,j,k\in N$ with $i,k\in S^o$ and $j\notin S^o$, we have $k\in S^{o'}$ and $j\notin S^{o'}$.\footnote{Condition~(1) of \citet{balbuzanov2019endowments} states that a priority structure $\rhd$ is acyclic if for each $o'\neq o$, $i\rhd_o j$ and not $i\rhd_o k$ together imply $k\rhd_{o'}j$. Their Section~4.4 calls $\rhd$ a representation of an endowment structure if $i\rhd_o j$ exactly when $i\in S^o$ and $j\notin S^o$. Rewriting Condition~(1) in terms of the sets $S^o$ gives: if $i,k\in S^o$ and $j\notin S^o$, then $k\in S^{o'}$ and $j\notin S^{o'}$. \citet[][Footnote~11]{zhang2026consistent} transcribes the same condition into the complex-endowment setting.}
\end{defi}

We thank Jacob Coreno for suggesting the following transparent description of acyclic endowment structures.

\begin{prop}[\textbf{A characterization of acyclic endowment structures}]\label{prop.acyclicchar}
An endowment structure is acyclic if and only if
\begin{enumerate}
\item[\textup{(a)}] no object is properly co-owned, i.e., the structure is a
    mixed-ownership structure; or
\item[\textup{(b)}] there is a coalition $T\subseteq N$ such that, for each object
    $o\in O$, $S^o=T$.
\end{enumerate}
\end{prop}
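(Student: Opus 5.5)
We prove the two directions separately, working throughout with the owner sets $S^o$ and the condition of Definition~\ref{defi.acyclic}: whenever $o\neq o'$, $i,k\in S^o$ are distinct and $j\notin S^o$, we need $k\in S^{o'}$ and $j\notin S^{o'}$.

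\emph{Sufficiency.} Suppose first that (a) holds. Every $S^o$ is then either a singleton or $N$. A singleton contains no two distinct agents $i,k$, and $S^o=N$ contains no outsider $j$. Hence the hypothesis of the acyclicity condition is never met, and the condition holds vacuously. Now suppose (b) holds, so $S^{o'}=S^o=T$ for all objects $o,o'$. Then $k\in S^o=T=S^{o'}$ and $j\notin S^o=S^{o'}$, so the conclusion of the condition holds directly.

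\emph{Necessity.} Assume the structure is acyclic and that (a) fails. Then some object $a$ is properly co-owned: there are distinct $i,k\in S^a$ and some $j\notin S^a$. We will show that $S^{o'}=S^a$ for every object $o'\neq a$, which gives (b) with $T:=S^a$.

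Fix $o'\neq a$. First we show $S^a\subseteq S^{o'}$. Take any $k'\in S^a$. Because $|S^a|\ge2$, we can choose $i'\in S^a$ with $i'\neq k'$. Applying acyclicity to $(o,o')=(a,o')$ with $i',k'\in S^a$ and the outsider $j$ gives $k'\in S^{o'}$. Next we show $S^{o'}\subseteq S^a$. Take any $j'\notin S^a$. Applying acyclicity with the distinct owners $i,k$ and the outsider $j'$ gives $j'\notin S^{o'}$. Together these give $S^{o'}=S^a$. If $a$ is the only object, (b) holds trivially with $T=S^a$.

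The main obstacle is bookkeeping rather than depth. The quantifiers require $i,j,k$ to be distinct and $o\neq o'$. The distinctness of $i'$ and $k'$ is secured by $|S^a|\ge2$, while $j'\notin S^a$ is automatically distinct from $i$ and $k$. One must also treat $S^o=N$ for public objects through the convention defining $S^o$, but this does not affect either direction.
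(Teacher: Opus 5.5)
Your proposal is correct and follows essentially the same route as the paper: (a) gives acyclicity vacuously and (b) gives it directly. For the converse, both arguments take a properly co-owned object, prove $S^a\subseteq S^{o'}$ and $S^{o'}\subseteq S^a$ with two applications of acyclicity, and so set $T:=S^a$. The only cosmetic difference is that the paper disposes of the case $|O|\leq 1$ up front, whereas you treat the single-object case at the end.
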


\begin{proof}[\textbf{Proof of Proposition~\ref{prop.acyclicchar}}]
Fix an endowment structure $\omega$. We first show that~(a) implies acyclicity. Suppose that~(a) holds and fix an object $o$. If $o$ is public, then $S^o=N$, so no agent lies outside $S^o$. If $o$ is individually owned by an agent $h$, then $S^o=\{h\}$, so no two distinct agents both lie in $S^o$. In either case there are no distinct agents $i,j,k$ with $i,k\in S^o$ and $j\notin S^o$. Acyclicity requires something only of such agents, and $o$ was arbitrary, so the structure is acyclic.

We next show that~(b) implies acyclicity. Suppose that~(b) holds with owning coalition $T$. Let $o$ and $o'$ be distinct objects and $i,j,k$ distinct agents with $i,k\in S^o$ and $j\notin S^o$. Since $S^{o'}=T=S^o$, we have $k\in S^{o'}$ and $j\notin S^{o'}$. As the objects and agents were arbitrary, the structure is acyclic.

Conversely, suppose that the structure is acyclic; we show that~(a) or~(b) holds. If $|O|\leq1$, then~(b) holds: take $T:=S^o$ for the unique object $o$ when $|O|=1$, and $T:=N$ when $O=\emptyset$. Hence, suppose that $|O|\geq2$ and that~(a) fails, so that some object $o$ satisfies $2\leq|S^o|<|N|$.

Fix any object $o'\neq o$. To show $S^o\subseteq S^{o'}$, let $k\in S^o$ be arbitrary. Since $|S^o|\geq2$, choose an agent $i\in S^o$ with $i\neq k$, and since $S^o\neq N$, choose an agent $j\notin S^o$. Then $i,j,k$ are distinct, so acyclicity implies $k\in S^{o'}$. Hence, $S^o\subseteq S^{o'}$.
To show the reverse inclusion, let $j\notin S^o$ be arbitrary. Since $|S^o|\geq2$, choose distinct agents $i,k\in S^o$. Then acyclicity implies $j\notin S^{o'}$. Thus, every agent outside $S^o$ also lies outside $S^{o'}$, which is equivalent to $S^{o'}\subseteq S^o$. Hence, $S^{o'}=S^o$. Since $o'\neq o$ was arbitrary, every object has the same set of owners, which is~(b).
\end{proof}

In case~(b) with $|T|=1$ a single agent owns every object, and with $T=N$ every object is public; both are mixed-ownership structures and are already covered by~(a). The genuinely new case has $2\leq|T|<|N|$ and, for each object $o\in O$, $S^o=T$; in this case one group co-owns everything and the remaining agents own nothing. We refer to it as \textbf{uniform co-ownership}. It is the situation of the ``diarchy'' of \citet{balbuzanov2019endowments}, where two co-monarchs jointly own all houses and the remaining agents do not own any real object.

\paragraph{Uniform co-ownership.} Throughout the remainder of this paragraph the endowment structure is uniformly co-owned with owning coalition $T$, where $2\leq|T|<|N|$; in particular, $T$ is nonempty and may therefore serve as a blocking coalition. The key simplification is that exclusion rights are all-or-nothing.

\begin{lemm}[\textbf{Collapse of the extended eviction sets}]\label{lemm.collapse}
Under uniform co-ownership, for each coalition $S$ and each allocation $\mu$,\vspace{-0.2cm}
\[ \Omega(S\mid\mu)=
\begin{cases}
O, &\text{ if } T\subseteq S,\\
\emptyset, &\text{ otherwise.}
\end{cases} \vspace{-0.2cm}\]
\end{lemm}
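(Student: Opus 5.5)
The plan is to compute the direct eviction sets explicitly and then run the recursion defining $\Omega(S\mid\mu)$. Under uniform co-ownership every object has owner set $S^o=T$. The direct eviction set is $\omega(S)=\{o\in O\mid S^o\subseteq S\}$, so for every agent set $S'$,
\[\omega(S')=\begin{cases} O, & \text{if } T\subseteq S',\\ \emptyset, & \text{otherwise.}\end{cases}\]
This all-or-nothing form of $\omega$ is the only structural input. Everything else follows from how the recursion $S^\mu_{k+1}=S^\mu_k\cup\mu^{-1}(\omega(S^\mu_k))$ propagates.

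First, suppose $T\subseteq S$. Then $\omega(S^\mu_0)=\omega(S)=O$. Since the recursion is increasing, $T\subseteq S^\mu_k$ for every $k$, and in particular for the terminal set $S^\mu_\ell$. Hence $\Omega(S\mid\mu)=\omega(S^\mu_\ell)=O$. Alternatively, Lemma~\ref{lemm.evictionproperties}(a) gives $O=\omega(S)\subseteq\Omega(S\mid\mu)\subseteq O$ directly.

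Second, suppose $T\not\subseteq S$. Then $\omega(S^\mu_0)=\emptyset$, so $\mu^{-1}(\emptyset)=\emptyset$ and $S^\mu_1=S^\mu_0$. The recursion is therefore already at its fixed point with $\ell=0$, and $\Omega(S\mid\mu)=\omega(S)=\emptyset$. The only care needed is that $\mu^{-1}$ of the empty set is empty, which holds because the direct eviction set never contains the null object. Thus no agent is added, even though agents holding $\varnothing$ exist.

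There is no real obstacle here. The one point to check is the definition of $\omega(\cdot)$ in the complex-endowment setting, namely the footnote convention that $S^o=N$ only when no individual owns $o$. Under uniform co-ownership with $2\leq|T|<|N|$, every object has owner set exactly $T$, so the displayed formula for $\omega(S')$ holds. This includes the grand coalition: $\omega(N)=O$ is consistent because $T\subseteq N$.
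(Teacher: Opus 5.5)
Your proof is correct and matches the paper's argument: the direct eviction set is all-or-nothing; when $T\not\subseteq S$ the recursion stalls at $S^\mu_1=S$, so $\Omega(S\mid\mu)=\emptyset$; and when $T\subseteq S$ we have $T\subseteq S^\mu_k$ for every $k$, so $\omega(S^\mu_k)=O$. Your alternative appeal to Lemma~\ref{lemm.evictionproperties}(a) also works, but that lemma is stated for mixed-ownership problems, so the direct recursion argument is the self-contained one here.
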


\begin{proof}[\textbf{Proof of Lemma~\ref{lemm.collapse}}]
Let $T$ be the owning coalition, so that $S^o=T$ for each object $o\in O$, and let $S$ be a coalition and $\mu$ an allocation. By definition, $\omega(S)=\{o\in O\mid S^o\subseteq S\}$. If $T\not\subseteq S$, then $\omega(S)=\emptyset$, so $\mu^{-1}(\omega(S))=\emptyset$ and $S^\mu_1=S$; the recursion is already at a fixed point and $\Omega(S\mid\mu)=\omega(S)=\emptyset$. If $T\subseteq S$, then, for each $k$, $S\subseteq S^\mu_k$, so $T\subseteq S^\mu_k$ and hence $\omega(S^\mu_k)=O$; therefore $\Omega(S\mid\mu)=O$.
\end{proof}

For an allocation $\mu$, let $A(\mu):=\{\nu\mid \text{for each }i\in T,\ \nu(i)\mathbin{P_i}\mu(i)\}$ be the set of allocations at which every member of the owning coalition is strictly better off than at $\mu$.

\begin{lemm}[\textbf{The exclusion core under uniform co-ownership}]\label{lemm.uniformEC}
Under uniform co-ownership, an allocation $\mu$ is exclusion core stable if and only if $\mu$ is Pareto efficient and $A(\mu)=\emptyset$.
\end{lemm}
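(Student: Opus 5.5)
We prove the two directions separately, relying throughout on Lemma~\ref{lemm.collapse}. Under uniform co-ownership it reduces exclusion-enforcement to a simple dichotomy. If $T\subseteq S$, then $\Omega(S\mid\mu)=O$. Every agent harmed at $\sigma$ who holds a real object at $\mu$ is then evictable, and an agent holding $\varnothing$ at $\mu$ cannot be strictly worse off, because $\varnothing\in\Omega$ fails but $\mu(j)\mathbin{P_j}\sigma(j)$ with $\mu(j)=\varnothing$ would still require eviction. This last point needs care, so we treat it as the main obstacle below. If $T\not\subseteq S$, then $\Omega(S\mid\mu)=\emptyset$, and $S$ exclusion-enforces $\sigma$ only if no outsider is worse off at $\sigma$.

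\emph{Necessity.} Suppose $\mu\in\EC$. Pareto efficiency follows from Lemma~\ref{lemm.paretoexclusion}. Now suppose, to the contrary, that some $\nu\in A(\mu)$ exists. We take $S:=T$. Every member of $T$ is strictly better off at $\nu$, and $\Omega(T\mid\mu)=O$. Any outsider $j$ with $\mu(j)\mathbin{P_j}\nu(j)$ must hold a real object or $\varnothing$ at $\mu$. If $\mu(j)$ is real, it lies in $O=\Omega(T\mid\mu)$. The problematic case is $\mu(j)=\varnothing$ with $\nu(j)$ ranked below $\varnothing$. We avoid it by modifying $\nu$: give $\varnothing$ to every agent outside $T$ who is worse off at $\nu$ and holds $\varnothing$ at $\mu$. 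Members of $T$ keep their $\nu$-allotments, and the result is still an allocation. After this change, every harmed outsider holds a real $\mu$-object, so $T$ exclusion-blocks $\mu$, a contradiction. Hence $A(\mu)=\emptyset$.

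\emph{Sufficiency.} Suppose $\mu$ is Pareto efficient, $A(\mu)=\emptyset$, and $S$ exclusion-blocks $\mu$ via $\sigma$. If $T\subseteq S$, then every member of $T$ strictly improves at $\sigma$, so $\sigma\in A(\mu)$, a contradiction. If $T\not\subseteq S$, then $\Omega(S\mid\mu)=\emptyset$, so no outsider is worse off at $\sigma$. Every member of $S$ strictly improves, so $\sigma$ Pareto dominates $\mu$, contradicting Pareto efficiency.

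The only delicate point is the handling of outsiders who hold $\varnothing$ at $\mu$ in the necessity direction. We must check that the definition of exclusion-enforcement really requires $\mu(j)\in\Omega$ for such agents, which it does, and that the fix by reassignment to $\varnothing$ preserves feasibility and keeps $T$'s allotments unchanged. Both hold because $\varnothing$ has unlimited capacity. No other step needs more than Lemma~\ref{lemm.collapse} and Lemma~\ref{lemm.paretoexclusion}.
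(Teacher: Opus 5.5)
Your proof is correct and follows essentially the paper's route. Sufficiency is identical, and for necessity the paper likewise lets $T$ block via a modification of $\nu\in A(\mu)$, except that it sends every outsider to $\varnothing$. Then any harmed outsider must satisfy $\mu(j)\mathbin{P_j}\varnothing$ and so holds a real object, which removes your case distinction. It also avoids the muddled remark in your opening paragraph that an agent holding $\varnothing$ cannot be worse off; that is false in general, though you repair it correctly in the necessity step. Note also that Lemma~\ref{lemm.paretoexclusion} is stated for mixed ownership. Its argument does not use the ownership structure, which is why the paper re-derives Pareto efficiency inline here.
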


\begin{proof}[\textbf{Proof of Lemma~\ref{lemm.uniformEC}}]
Suppose that $\mu$ is Pareto efficient with $A(\mu)=\emptyset$, and suppose that some coalition $S$ exclusion-blocks $\mu$ via $\sigma$. If $T\subseteq S$, then each member of $T$ is strictly better off at $\sigma$, so $\sigma\in A(\mu)$, a contradiction. If $T\not\subseteq S$, then $\Omega(S\mid\mu)=\emptyset$ by Lemma~\ref{lemm.collapse}, so no agent outside $S$ is worse off at $\sigma$; since $S\neq\emptyset$ and every member of $S$ is strictly better off, $\sigma$ Pareto dominates $\mu$, contradicting Pareto efficiency.

Conversely, suppose that $\mu$ is exclusion core stable. If some allocation $\sigma$ Pareto dominates $\mu$, let $S:=\{i\mid\sigma(i)\mathbin{P_i}\mu(i)\}$, which is nonempty. No agent outside $S$ is worse off at $\sigma$, so the exclusion-enforcement condition is vacuously satisfied and $S$ exclusion-blocks $\mu$ via $\sigma$, a contradiction. Hence $\mu$ is Pareto efficient. If $\sigma\in A(\mu)$, define $\sigma'$ such that for each $i\in T$, $\sigma'(i)=\sigma(i)$, and for each $j\notin T$, $\sigma'(j)=\varnothing$.
This is an allocation, and each member of $T$ is strictly better off at $\sigma'$. An agent $j\notin T$ is worse off only if $\mu(j)\mathbin{P_j}\varnothing$, which requires $\mu(j)\in O=\Omega(T\mid\mu)$. Hence $T$ exclusion-blocks $\mu$ via $\sigma'$, a contradiction. Therefore $A(\mu)=\emptyset$.
\end{proof}

\begin{theorem}[\textbf{Von Neumann--Morgenstern stability under uniform co-ownership}]\label{thm.uniformvNM}
Under uniform co-ownership, the exclusion core is a von Neumann--Morgenstern stable set with respect to exclusion-domination.
\end{theorem}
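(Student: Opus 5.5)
Internal stability is immediate from the definition of the exclusion core, so the plan concerns external stability: given $\mu\notin\EC(R)$, we construct $\nu\in\EC(R)$ that exclusion-dominates $\mu$. By Lemma~\ref{lemm.uniformEC}, either $\mu$ is not Pareto efficient or $A(\mu)\neq\emptyset$. We handle the two cases separately, in each case choosing $\nu$ by a two-stage optimization. First we secure what the relevant coalition needs; then we complete to a Pareto efficient allocation with $A(\nu)=\emptyset$.

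\emph{Case 1: $A(\mu)\neq\emptyset$.} Here $T$ has full exclusion power, since $\Omega(T\mid\mu)=O$ by Lemma~\ref{lemm.collapse}. Any $\nu$ at which every member of $T$ is strictly better off than at $\mu$ is therefore exclusion-enforced by $T$, whatever it does to outsiders. We choose $\nu\in A(\mu)$ to be maximal in the following sense. Among allocations in $A(\mu)$, pick one such that no $\nu'\in A(\nu)$ exists, i.e.\ no allocation makes all of $T$ strictly better off again. Such a choice exists by finiteness: start at some $\nu^1\in A(\mu)$ and, while $A(\nu^k)\neq\emptyset$, pick $\nu^{k+1}\in A(\nu^k)$. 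Strict improvement for every member of $T$ cannot continue forever, and transitivity keeps each iterate in $A(\mu)$. Then replace the final iterate by a Pareto-efficient allocation $\nu^*$ that Pareto dominates it or equals it. Weak Pareto improvement preserves membership in $A(\mu)$. It also preserves $A(\nu^*)=\emptyset$, because $A(\nu^*)\subseteq A(\nu^k)$ whenever $\nu^*$ weakly improves every agent. Lemma~\ref{lemm.uniformEC} then gives $\nu^*\in\EC(R)$, and $T$ exclusion-blocks $\mu$ via $\nu^*$.

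\emph{Case 2: $A(\mu)=\emptyset$ but $\mu$ is Pareto dominated.} We look for $\nu\in\EC(R)$ that Pareto dominates $\mu$. Then Lemma~\ref{lemm.paretoexclusion} gives exclusion-domination by the set of strict improvers. Take $\nu$ to be a Pareto-efficient allocation that Pareto dominates $\mu$; it exists by finiteness. Since every agent weakly improves, $A(\nu)\subseteq A(\mu)=\emptyset$. So $\nu$ is Pareto efficient with $A(\nu)=\emptyset$, and $\nu\in\EC(R)$ by Lemma~\ref{lemm.uniformEC}.

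The only delicate point is the monotonicity fact used in both cases: if $\nu'(i)\mathbin{R_i}\nu(i)$ for all $i\in T$, then $A(\nu')\subseteq A(\nu)$. This follows from transitivity. With it, the termination argument in Case~1 is clean, because each iteration strictly improves a finite lexicographic measure for $T$. Hence there is no real obstacle. I expect the main care to lie in confirming that Pareto completion never destroys the property $A=\emptyset$, and that every step applies Lemma~\ref{lemm.collapse}'s all-or-nothing eviction correctly.
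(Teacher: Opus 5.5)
Your proof is correct and takes a tidier route than the paper. The paper first reduces every $\mu\notin\EC(R)$ to the case $A(\mu)\neq\emptyset$: if a Pareto efficient dominator $\nu$ of $\mu$ is not in $\EC(R)$, it lifts some $\nu'\in A(\nu)$ into $A(\mu)$. It then builds a single sequence inside $A(\mu)$ that interleaves Pareto improvements with $A$-steps. Because a Pareto step may be followed by further $A$-steps, it needs a termination argument that cuts the sequence into at most $|O|+1$ blocks of Pareto steps.

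Your observation that $A(\cdot)$ is antitone under weak improvements for the members of $T$ makes both detours unnecessary. In Case~2 it shows at once that any Pareto efficient dominator $\nu$ of $\mu$ already lies in $\EC(R)$, since $A(\nu)\subseteq A(\mu)=\emptyset$. In Case~1 it lets you exhaust the $A$-steps first and then finish with a single Pareto completion that cannot reopen $A$. Termination then needs only the bound on strict improvements of one member of $T$ and the finiteness of Pareto chains. The paper's interleaved scheme tolerates any order of steps, but here that flexibility buys nothing.

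One sentence in Case~1 is false as stated and should be repaired. It is not true that every $\nu\in A(\mu)$ is exclusion-enforced by $T$ ``whatever it does to outsiders.'' Lemma~\ref{lemm.collapse} gives $\Omega(T\mid\mu)=O$, which contains only real objects. So an outsider $j$ with $\mu(j)=\varnothing$ who is handed a real object he ranks below $\varnothing$ is harmed without being covered.

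Your conclusion survives because the final $\nu^*$ is Pareto efficient. It therefore assigns no agent an object ranked below $\varnothing$, since otherwise sending that agent to $\varnothing$ would be a Pareto improvement. Hence any outsider harmed at $\nu^*$ has $\mu(j)\neq\varnothing$, so $\mu(j)\in O=\Omega(T\mid\mu)$. The paper makes exactly this step explicit, and your write-up should as well.
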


\begin{proof}[\textbf{Proof of Theorem~\ref{thm.uniformvNM}}]
Let $R$ be a problem. Internal stability is immediate from the definition of the exclusion core: no allocation in $\EC(R)$ is exclusion-blocked, hence none is exclusion-dominated by any allocation.

For external stability, let $\mu\notin\EC(R)$. By Lemma~\ref{lemm.uniformEC}, either $A(\mu)\neq\emptyset$ or $\mu$ is not Pareto efficient. Suppose that $\mu$ is not Pareto efficient, and choose a Pareto efficient allocation $\nu$ that Pareto dominates $\mu$; such a $\nu$ exists because there are finitely many allocations. If $\nu\in\EC(R)$, then no agent outside $S:=\{i\mid\nu(i)\mathbin{P_i}\mu(i)\}$ is worse off at $\nu$, so this nonempty coalition exclusion-blocks $\mu$ via $\nu$; thus $\mu$ is exclusion-dominated by an allocation in $\EC(R)$, as required. If $\nu\notin\EC(R)$, then, since $\nu$ is Pareto efficient, Lemma~\ref{lemm.uniformEC} gives $A(\nu)\neq\emptyset$; choose $\nu'\in A(\nu)$. For each $i\in T$, $\nu'(i)\mathbin{P_i}\nu(i)\mathbin{R_i}\mu(i)$, so $\nu'\in A(\mu)$. In every case that remains, $A(\mu)\neq\emptyset$.

We claim that $\EC(R)\cap A(\mu)\neq\emptyset$. Choose $\nu^0\in A(\mu)$ and construct a sequence as follows. Given $\nu^k$, stop if $\nu^k\in\EC(R)$. Otherwise, by Lemma~\ref{lemm.uniformEC}, either $\nu^k$ is not Pareto efficient, in which case let $\nu^{k+1}$ Pareto dominate $\nu^k$; we call this a \emph{step of the first kind}. Or $A(\nu^k)\neq\emptyset$, in which case let $\nu^{k+1}\in A(\nu^k)$; we call this a \emph{step of the second kind}. In both cases, $\nu^{k+1}\in A(\mu)$: by induction, each $i\in T$ satisfies $\nu^k(i)\mathbin{P_i}\mu(i)$; a step of the first kind gives $\nu^{k+1}(i)\mathbin{R_i}\nu^k(i)$ and a step of the second kind gives $\nu^{k+1}(i)\mathbin{P_i}\nu^k(i)$, so in either case $\nu^{k+1}(i)\mathbin{P_i}\mu(i)$ by transitivity.

The construction terminates. Fix an agent $i_0\in T$. Every step leaves $i_0$'s allotment weakly better and every step of the second kind makes it strictly better, so there are at most $|O|$ steps of the second kind. These steps divide the sequence into at most $|O|+1$ blocks, and every step within a block is of the first kind. At such a step the new allocation Pareto dominates the previous one, so within a block each allocation is Pareto dominated by the next. Since there are finitely many allocations and Pareto domination is transitive and irreflexive, each block is finite. As there are finitely many blocks, the construction terminates. Let $\nu^*$ be the terminal allocation. The construction stops only at an allocation in $\EC(R)$, so $\nu^*\in\EC(R)$; and every allocation in the sequence lies in $A(\mu)$, so $\nu^*\in A(\mu)$. Hence $\nu^*\in\EC(R)\cap A(\mu)$.

Finally, $T$ exclusion-blocks $\mu$ via $\nu^*$. Since $\nu^*\in A(\mu)$, each member of $T$ is strictly better off. Let $j\notin T$ be worse off at $\nu^*$. As $\nu^*$ is Pareto efficient, no agent is assigned an object ranked below the null object, for otherwise reassigning that agent to $\varnothing$ would be a Pareto improvement; hence $\nu^*(j)\mathbin{R_j}\varnothing$ and therefore $\mu(j)\neq\varnothing$. Thus $\mu(j)\in O=\Omega(T\mid\mu)$ by Lemma~\ref{lemm.collapse}. Hence $\mu$ is exclusion-dominated by $\nu^*\in\EC(R)$.
\end{proof}

\begin{theorem}[\textbf{Paths to exclusion core stability under uniform co-ownership}]\label{thm.uniformpath}
Under uniform co-ownership, for each allocation $\mu$ there is a self-contained exclusion-blocking path from $\mu$ to the exclusion core.
\end{theorem}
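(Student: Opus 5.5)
The proof mirrors the terminating construction in the proof of Theorem~\ref{thm.uniformvNM}, but now every step has to be a self-contained exclusion-blocking rather than an arbitrary move. Let $T$ be the owning coalition, with $2\leq|T|<|N|$. Start from $\mu^0:=\mu$. If $\mu^k\in\EC(R)$, stop. Otherwise, Lemma~\ref{lemm.uniformEC} gives two cases: either $\mu^k$ is not Pareto efficient, or $A(\mu^k)\neq\emptyset$. In each case I define a successor $\mu^{k+1}$ that is the post-blocking allocation of a self-contained exclusion-blocking of $\mu^k$.

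\emph{Step of the first kind.} Suppose $\mu^k$ is not Pareto efficient. Choose $\sigma$ that Pareto dominates $\mu^k$, and let $S^+:=\{i\mid\sigma(i)\mathbin{P_i}\mu^k(i)\}\neq\emptyset$. By Lemma~\ref{lemm.paretoexclusion} and Lemma~\ref{lemm.selfcontained}, $S^+$ exclusion-blocks $\mu^k$ via $\sigma$ in a self-contained way. I then check that the post-blocking allocation equals $\sigma$. Every $j\notin S^+$ is weakly better off, so $\sigma(j)\mathbin{R_j}\mu^k(j)$ with no strict gain; by antisymmetry $\sigma(j)=\mu^k(j)$. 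Since $\sigma$ is an allocation, $\mu^k(j)\notin\sigma(S^+)\setminus\{\varnothing\}$, so no outsider is displaced. Set $\mu^{k+1}:=\sigma$. Every agent, in particular every member of $T$, is weakly better off.

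\emph{Step of the second kind.} Suppose $\mu^k$ is Pareto efficient and $A(\mu^k)\neq\emptyset$. Choose $\nu\in A(\mu^k)$, so every member of $T$ strictly prefers $\nu$ to $\mu^k$. Define $\widehat\nu$ by the post-blocking formula of Definition~\ref{defi.postblocking} with coalition $T$ and target $\nu$:
\begin{itemize}
\item members of $T$ receive $\nu(i)$;
\item outsiders whose $\mu^k$-allotment lies in $\nu(T)\setminus\{\varnothing\}$ receive $\varnothing$;
\item all other agents keep $\mu^k(j)$.
\end{itemize}
I claim $T$ exclusion-blocks $\mu^k$ via $\widehat\nu$. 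Each member of $T$ strictly improves. The only outsiders whose allotment changes are the displaced ones, and each of them held a real object at $\mu^k$. By Lemma~\ref{lemm.collapse}, $\Omega(T\mid\mu^k)=O$, so each displaced outsider's $\mu^k$-allotment lies in the extended eviction set, and exclusion-enforcement holds.

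Because blocking is checked directly against $\widehat\nu$, which is already in post-blocking form, the post-blocking allocation of $(T,\widehat\nu)$ at $\mu^k$ is $\widehat\nu$ itself. This is the one point to verify carefully: I take the blocking allocation to be $\widehat\nu$ rather than $\nu$, precisely to avoid outsiders who hold $\varnothing$ at $\mu^k$ and would be made worse off by $\nu$. Hence the blocking is self-contained. Set $\mu^{k+1}:=\widehat\nu$. Every member of $T$ is strictly better off.

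\emph{Termination.} Fix $i_0\in T$. Along the sequence, $i_0$'s allotment weakly improves at steps of the first kind and strictly improves at steps of the second kind. So there are at most $|O|+1$ steps of the second kind, and they split the sequence into at most $|O|+2$ blocks consisting only of steps of the first kind. Within a block, each allocation Pareto dominates its predecessor. Pareto domination is transitive and irreflexive and there are finitely many allocations, so each block is finite. Hence the process stops, and by construction it stops only at some $\mu^K\in\EC(R)$. The resulting sequence $\mu^0\mapsto\cdots\mapsto\mu^K$ is the required self-contained exclusion-blocking path.
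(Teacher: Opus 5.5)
Your proposal is correct and matches the paper's proof essentially step for step: the same two kinds of steps (a Pareto improvement whose post-blocking allocation is the improvement itself, and a blocking by $T$ via the post-blocking-form allocation built from some $\nu\in A(\mu^k)$, with enforcement from Lemma~\ref{lemm.collapse}), followed by the same block-counting termination argument via a fixed $i_0\in T$. Two minor points do not affect correctness. Your bound of $|O|+1$ steps of the second kind is looser than needed; the paper's $|O|$ suffices, since $i_0$'s allotment can strictly improve at most $|O|$ times within $O\cup\{\varnothing\}$. Your appeal to Lemmas~\ref{lemm.paretoexclusion} and~\ref{lemm.selfcontained}, which are stated for mixed ownership, is unnecessary because your direct verification already covers that step.
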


\begin{proof}[\textbf{Proof of Theorem~\ref{thm.uniformpath}}]
Let $R$ be a problem and let $\mu$ be an allocation. If $\mu\in\EC(R)$, the constant sequence consisting only of $\mu$ is a self-contained exclusion-blocking path of length zero. Hence, suppose that $\mu\notin\EC(R)$ and set $\mu^0:=\mu$. We construct a sequence recursively. Given $\mu^k$, stop if $\mu^k\in\EC(R)$. Otherwise, proceed as follows.

First, suppose that $\mu^k$ is not Pareto efficient. Let $\mu^{k+1}$ Pareto dominate $\mu^k$ and set $S:=\{i\mid\mu^{k+1}(i)\mathbin{P_i}\mu^k(i)\}$, which is nonempty. No agent is worse off at $\mu^{k+1}$, so the exclusion-enforcement condition is vacuously satisfied and $S$ exclusion-blocks $\mu^k$ via $\mu^{k+1}$. Let $j\notin S$. Since $\mu^{k+1}$ Pareto dominates $\mu^k$, we have $\mu^{k+1}(j)\mathbin{R_j}\mu^k(j)$, and $j\notin S$ rules out strict preference, so $\mu^{k+1}(j)=\mu^k(j)$. Since $\mu^{k+1}(j)=\mu^k(j)$ and $\mu^{k+1}$ assigns each real object to at most one agent, $\mu^k(j)\notin\mu^{k+1}(S)\setminus\{\varnothing\}$. Therefore $\mu^{k+1}$ equals the post-blocking allocation induced by $(S,\mu^{k+1})$ at $\mu^k$, so the exclusion-blocking is self-contained. We call this a \emph{step of the first kind}.

Second, suppose that $\mu^k$ is Pareto efficient. Since $\mu^k\notin\EC(R)$, Lemma~\ref{lemm.uniformEC} implies that $A(\mu^k)\neq\emptyset$. Fix $\nu\in A(\mu^k)$ and define\vspace{-0.2cm}
\[ \mu^{k+1}(i):=
\begin{cases}
\nu(i) &\text{ if } i\in T,\\
\varnothing &\text{ if } i\notin T \text{ and } \mu^k(i)\in\nu(T)\setminus\{\varnothing\},\\
\mu^k(i) &\text{ otherwise.}
\end{cases} \vspace{-0.2cm}\]

This is an allocation: $\nu$ assigns each real object to at most one member of $T$, and an agent outside $T$ retains $\mu^k(i)$ only if no member of $T$ receives it. Each $i\in T$ satisfies $\mu^{k+1}(i)=\nu(i)\mathbin{P_i}\mu^k(i)$. An agent $j\notin T$ is worse off only if $\mu^{k+1}(j)=\varnothing\neq\mu^k(j)$, which by Lemma~\ref{lemm.collapse} occurs only if $\mu^k(j)\in\nu(T)\setminus\{\varnothing\}\subseteq O=\Omega(T\mid\mu^k)$. Hence $T$ exclusion-blocks $\mu^k$ via $\mu^{k+1}$.
Since $\mu^{k+1}(T)\setminus\{\varnothing\}=\nu(T)\setminus\{\varnothing\}$, the displayed map is precisely the post-blocking allocation induced by $(T,\mu^{k+1})$ at $\mu^k$, so the exclusion-blocking is self-contained. Note also that $\mu^{k+1}\in A(\mu^k)$. We call this a \emph{step of the second kind}.

\paragraph{Termination.} Fix an agent $i_0\in T$. Every step leaves $i_0$'s allotment weakly better and every step of the second kind makes it strictly better, so there are at most $|O|$ steps of the second kind. These steps divide the sequence into at most $|O|+1$ blocks, and every step within a block is of the first kind. At such a step the new allocation Pareto dominates the previous one, so within a block each allocation is Pareto dominated by the next. Since there are finitely many allocations and Pareto domination is transitive and irreflexive, each block is finite. As there are finitely many blocks, the construction terminates. By Lemma~\ref{lemm.uniformEC}, the terminal allocation is exclusion core stable. Since every transition in the constructed sequence is a self-contained exclusion-blocking, the sequence is a self-contained exclusion-blocking path from $\mu$ to the exclusion core.
\end{proof}

Combining Proposition~\ref{prop.acyclicchar} with Theorems~\ref{theorem:vNMstability} and~\ref{thm.uniformvNM} yields the following.

\begin{corollary}[\textbf{Von Neumann--Morgenstern stability under acyclic endowment structures}]\label{cor.acyclicvNM} Suppose that the endowment structure is acyclic. Then the exclusion core is a von Neumann--Morgenstern stable set with respect to exclusion-domination.
\end{corollary}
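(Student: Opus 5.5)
The plan is a case split supplied by Proposition~\ref{prop.acyclicchar}. Fix an acyclic endowment structure $\omega$ and a preference profile $R$. By Proposition~\ref{prop.acyclicchar}, either (a) no object is properly co-owned, or (b) there is a coalition $T$ with $S^o=T$ for every object $o$. Internal stability needs no case distinction: for any endowment structure it follows directly from the definition of the exclusion core, since no allocation exclusion-blocks, and hence exclusion-dominates, an exclusion-core-stable allocation. So only external stability has to be established in each case.

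In case (a), $\omega$ is a mixed-ownership structure. The problem is then a mixed-ownership object allocation problem in the sense of Section~\ref{subsec.model}. Here I will check that the general definition of the direct eviction set, $\omega(S)=\{o\mid S^o\subseteq S\}$, coincides with the mixed-ownership definition (see the footnote in Section~\ref{subsec.maximaldomain}). Consequently the extended eviction sets, exclusion-blocking, and the exclusion core are the same objects as in Section~\ref{subsec.core}, and Theorem~\ref{theorem:vNMstability} applies verbatim. It even gives the stronger statement that $\STTC(R,\mu,\rhd)$ exclusion-dominates each $\mu\notin\EC(R)$.

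In case (b), I split according to the size of $T$. If $|T|=1$, a single agent owns every object, and if $T=N$, every object is public. Both are mixed-ownership structures and reduce to case (a). The remaining subcase is $2\le|T|<|N|$, which is precisely uniform co-ownership, and Theorem~\ref{thm.uniformvNM} delivers von Neumann--Morgenstern stability there.

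The only point requiring care is this bookkeeping: the degenerate instances of (b) must be routed to (a), and the identification of the two eviction-set definitions in case (a) must be made explicit, so that Theorem~\ref{theorem:vNMstability}, stated for the mixed-ownership model, legitimately applies. I do not expect any substantive obstacle beyond that.
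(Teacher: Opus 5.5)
Your proposal is correct and follows the paper's own argument. The corollary is obtained by combining Proposition~\ref{prop.acyclicchar} with two theorems: Theorem~\ref{theorem:vNMstability} covers the mixed-ownership case, which also absorbs the degenerate instances $|T|=1$ and $T=N$ of case~(b), as the paper notes right after Proposition~\ref{prop.acyclicchar}; Theorem~\ref{thm.uniformvNM} covers uniform co-ownership with $2\leq|T|<|N|$, and your explicit identification of the two direct-eviction-set definitions is the one the paper records in its footnote in Section~\ref{subsec.maximaldomain}.
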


Similarly, combining Proposition~\ref{prop.acyclicchar} with Theorems~\ref{theorem:pathtoexclusion} and~\ref{thm.uniformpath} yields the following.

\begin{corollary}[\textbf{Paths to exclusion core stability under acyclic endowment structures}]\label{cor.acyclicpath}
Suppose that the endowment structure is acyclic. Then, from each allocation, there is a self-contained exclusion-blocking path to the exclusion core.
\end{corollary}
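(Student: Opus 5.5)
This corollary follows by a case split supplied by Proposition~\ref{prop.acyclicchar}. Fix an acyclic endowment structure $\omega$ and a problem $R$, and let $\mu$ be an arbitrary allocation. By Proposition~\ref{prop.acyclicchar}, either (a) no object is properly co-owned, or (b) there is a coalition $T$ with $S^o=T$ for every object $o\in O$. I will treat these two cases separately, noting first that the notions involved (direct and extended eviction sets, exclusion-blocking, post-blocking allocations, self-contained paths) are defined uniformly for complex endowments in Section~\ref{subsec.maximaldomain}. Moreover, when no object is properly co-owned, these notions reduce to those of Section~\ref{subsec.core}, as recorded in the footnote there.

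In case~(a), $\omega$ is a mixed-ownership structure, so $(N,O,\omega,R)$ is a mixed-ownership problem. Theorem~\ref{theorem:pathtoexclusion} then directly yields a self-contained exclusion-blocking path from $\mu$ to an exclusion-core-stable allocation; concretely, the path ends at $\STTC(R,\mu,\rhd)$ for any priority order $\rhd$.

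In case~(b), I split further by the size of $T$. If $|T|=1$, a single agent individually owns every object. If $T=N$, every object is public. In both subcases the structure is again mixed-ownership, and we are back in case~(a). The remaining subcase $2\le|T|<|N|$ is exactly uniform co-ownership, and Theorem~\ref{thm.uniformpath} supplies the required path from $\mu$.

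Since every acyclic structure falls into one of these cases, the conclusion holds for every allocation. The only point requiring care is this bookkeeping: the degenerate values $|T|=1$ and $T=N$ in case~(b) must be routed back to Theorem~\ref{theorem:pathtoexclusion}, because Theorem~\ref{thm.uniformpath} assumes $2\le|T|<|N|$. I do not expect any step beyond this to be an obstacle.
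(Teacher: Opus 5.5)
Your proposal is correct and follows the paper's own route. The paper obtains the corollary by combining Proposition~\ref{prop.acyclicchar} with Theorem~\ref{theorem:pathtoexclusion} for mixed ownership and Theorem~\ref{thm.uniformpath} for uniform co-ownership, and it handles the degenerate cases $|T|=1$ and $T=N$ of~(b) the way you do, routing them back to the mixed-ownership case (as it remarks right after Proposition~\ref{prop.acyclicchar}).
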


Corollary~\ref{cor.acyclicvNM} for the exclusion core of an object allocation problem with an acyclic endowment structure is the counterpart of Proposition~4 of \citet{balbuzanov2019endowments}, which establishes von Neumann--Morgenstern stability of the strong and weak exclusion cores of a relational economy with an acyclic priority structure. The same acyclicity condition supports both results, although the two solution concepts differ in general and coincide only under mixed ownership.

\paragraph{Acyclicity and the exclusion core characterization.} Corollaries~\ref{cor.acyclicvNM} and~\ref{cor.acyclicpath} do not weaken Corollary~\ref{cor:maxdomain}. Acyclicity is a domain for both conclusions but not for the characterization of the exclusion core by the STTC algorithms.
Under uniform co-ownership with $2\leq|T|<|N|$ every object is properly co-owned, so the construction proving Theorem~\ref{thm.strictsuperset} applies: it requires only a properly co-owned object and an agent who is not one of its co-owners, both of which are present.
Mixed ownership therefore remains the maximal domain on which the STTC algorithms characterize the exclusion core, while it is not maximal for von Neumann--Morgenstern stability or for the existence of blocking paths.

\paragraph{Beyond acyclicity.} Acyclicity is sufficient for the conclusions of Corollaries~\ref{cor.acyclicvNM} and~\ref{cor.acyclicpath}. Both examples below have non-acyclic endowment structures: in the first the exclusion core is not externally stable, in the second it is externally stable at every preference profile. Acyclicity is therefore not necessary for external stability. The examples also show that self-contained exclusion-blocking paths may exist under non-acyclic endowment structures; in both, such a path exists. Whether such a path always exists under proper co-ownership remains open. Although Lemma~\ref{lemm.paretoexclusion} is stated for mixed-ownership problems, its argument does not use the ownership structure: if $\sigma$ Pareto dominates $\mu$, no agent outside the coalition of agents who are strictly better off at $\sigma$ is worse off, so that nonempty coalition exclusion-blocks $\mu$ via $\sigma$. Hence, in both examples below, every exclusion-core-stable allocation is Pareto efficient.

\begin{exam}[\textbf{Failure of external stability under a non-acyclic endowment structure}]\label{exam:noexstable}
Consider a problem with three agents $N=\{1,2,3\}$ and two objects $O=\{a,b\}$. Agent~$1$ owns object $a$, while agents~$2$ and~$3$ co-own object $b$; let $\omega$ denote this endowment structure. Since $2\leq|S^b|<|N|$, object $b$ is properly co-owned, so $\omega$ is not a mixed-ownership structure, and $S^a\neq S^b$, so the two objects do not share a common set of owners. By Proposition~\ref{prop.acyclicchar}, the endowment structure $\omega$ is therefore not acyclic. Let $R$ be given by $R_{1}: b, a, \varnothing$, $R_{2}: a, b, \varnothing$, and $R_{3}: b, a, \varnothing$.
Every agent ranks both real objects above the null object. Hence, if a real object were unassigned, it could be given to an agent receiving the null object, making that agent strictly better off without making anyone worse off. Therefore, every Pareto efficient allocation assigns both objects. There are six ways to assign $a$ and $b$ to two distinct agents. Two of them are Pareto dominated: $(a,b,\varnothing)$ by $(b,a,\varnothing)$, and $(\varnothing,b,a)$ by $(\varnothing,a,b)$. The remaining four are Pareto efficient:
\begin{enumerate}
\item $\mu^1=(b,a,\varnothing)$. At $\mu^1$, agents~$1$ and~$2$ receive their most-preferred objects, so any blocking coalition must be $\{3\}$. Agent~$3$ can improve only by receiving $a$ or $b$, both of which are assigned to agents outside the coalition. Since $\Omega(\{3\}\mid\mu^1)=\emptyset$, coalition $\{3\}$ has exclusion rights over neither object. Hence, $\mu^1$ is not exclusion-blocked.
\item $\mu^2=(a,\varnothing,b)$. At $\mu^2$, agent~$3$ receives his most-preferred object and therefore cannot belong to a blocking coalition. Any blocking coalition must consequently be contained in $\{1,2\}$. If it contains agent~$1$, then agent~$1$ must receive $b$ to be strictly better off. However, for every $S\subseteq\{1,2\}$, $b\notin\Omega(S\mid\mu^2)$, because agent~$3$, who co-owns $b$, does not belong to $S$. The only remaining possible blocking coalition is $\{2\}$, but $\Omega(\{2\}\mid\mu^2)=\emptyset$.
    Hence, $\mu^2$ is not exclusion-blocked.
\item $\mu^3=(\varnothing,a,b)$ is exclusion-blocked by $S=\{1\}$ via $\mu^2$. Agent~$1$ is strictly better off. Agent~$2$ is the only outsider who is worse off: he loses object $a$. Since $a\in\Omega(\{1\}\mid\mu^3)$, coalition $\{1\}$ exclusion-enforces $\mu^2$ at $\mu^3$.
\item $\mu^4=(b,\varnothing,a)$ is exclusion-blocked by $S=\{2,3\}$ via $\mu^3$. Both coalition members are strictly better off. Agent~$1$ is the only outsider who is worse off: he loses object $b$. Since $b\in\Omega(\{2,3\}\mid\mu^4)$, coalition $\{2,3\}$ exclusion-enforces $\mu^3$ at $\mu^4$.
\end{enumerate}

The preceding list contains all Pareto efficient allocations. Allocations $\mu^1$ and $\mu^2$ are not exclusion-blocked, whereas $\mu^3$ and $\mu^4$ are. Since every exclusion-core-stable allocation is Pareto efficient, it follows that $\EC(\omega,R)=\{\mu^1,\mu^2\}$.
Allocation $\mu^4$ lies outside the exclusion core, but no exclusion-core-stable allocation exclusion-dominates it. Relative to $\mu^4$, only agent~$2$ is strictly better off at $\mu^1$, so the only possible blocking coalition is $\{2\}$. Yet $\Omega(\{2\}\mid\mu^4)=\emptyset$, while agent~$3$ is worse off at $\mu^1$. Thus, $\{2\}$ cannot exclusion-block $\mu^4$ via $\mu^1$. Similarly, only agent~$3$ is strictly better off at $\mu^2$, but $\Omega(\{3\}\mid\mu^4)=\emptyset$, while agent~$1$ is worse off. Thus, $\{3\}$ cannot exclusion-block $\mu^4$ via $\mu^2$.

The exclusion core is therefore not externally stable. The two exclusion-blockings above are self-contained: in each case, the blocking allocation coincides with the induced post-blocking allocation. Hence, $\mu^4\mapsto\mu^3\mapsto\mu^2$ is a self-contained exclusion-blocking path of length two to the exclusion core. Thus, in this example, external stability fails even though $\mu^4$ can reach the exclusion core through a self-contained exclusion-blocking path.\hfill~$\diamond$
\end{exam}

External stability may nevertheless hold outside the acyclic class, and may hold there at every preference profile, as the next example shows.

\begin{exam}[\textbf{External stability beyond acyclicity}]\label{exam:beyondacyclic}
Consider three agents $N=\{1,2,3\}$ and two objects $O=\{a,b\}$. Agents 1 and 2 co-own object $a$, while object $b$ is public; let $\omega$ denote this endowment structure. Since $2\leq|S^a|<|N|$, object $a$ is properly co-owned, so $\omega$ is not a mixed-ownership structure, and $S^a\neq S^b$, so the two objects do not share a common set of owners. By Proposition~\ref{prop.acyclicchar}, the endowment structure $\omega$ is therefore not acyclic.
We show that the exclusion core is nevertheless externally stable for \emph{every} preference profile.

Because $\omega(\{1,2\})=\{a\}$, $\omega(N)=O$, and, for every coalition $S\notin\{\{1,2\},N\}$, $\omega(S)=\emptyset$, the extended eviction sets are\vspace{-0.2cm}
\[ \Omega(\{1,2\}\mid\mu)=
\begin{cases}
\{a,b\} &\text{ if } \mu(3)=a,\\
\{a\} &\text{ otherwise,}
\end{cases}
\qquad
\Omega(N\mid\mu)=\{a,b\},\vspace{-0.2cm} \]
and, for every other coalition $S$, $\Omega(S\mid\mu)=\emptyset$. Indeed, if $\mu(3)=a$, then agent~3 occupies an object in $\omega(\{1,2\})$, so the recursion adds him and $\omega(\{1,2,3\})=O$.

For a preference profile $R$ and an allocation $\mu$, let $A(\mu)$ denote the set of allocations at which agents~1 and~2 are both strictly better off than at $\mu$. Then, \vspace{-0.2cm}
\begin{equation}\label{eq.beyondchar}
\mu\in\EC(\omega,R) \text{ if and only if }
\mu \text{ is Pareto efficient, and if } \mu(3)=a, \text{ then } A(\mu)=\emptyset.\vspace{-0.2cm}
\end{equation}

To verify~\eqref{eq.beyondchar}, suppose first that the right-hand side holds and that some coalition $S$ exclusion-blocks $\mu$ via $\sigma$. If $S=N$, then every agent is strictly better off at $\sigma$, so $\sigma$ Pareto dominates $\mu$, contradicting Pareto efficiency.
If $\{1,2\}\not\subseteq S$, then $\Omega(S\mid\mu)=\emptyset$, so no agent outside $S$ is worse off at $\sigma$; as every member of $S$ is strictly better off, $\sigma$ again Pareto dominates $\mu$. There remains $S=\{1,2\}$. If $\mu(3)\neq a$, then $\Omega(\{1,2\}\mid\mu)=\{a\}$, so agent~3 cannot be worse off and $\sigma$ Pareto dominates $\mu$ once more. If $\mu(3)=a$, then agents~1 and~2 are both strictly better off at $\sigma$, so $\sigma\in A(\mu)$, contradicting $A(\mu)=\emptyset$.

Conversely, suppose that $\mu\in\EC(\omega,R)$. If some allocation $\sigma$ Pareto dominates $\mu$, then no agent outside $S:=\{i\mid\sigma(i)\mathbin{P_i}\mu(i)\}$ is worse off, so this nonempty coalition exclusion-blocks $\mu$ via $\sigma$; hence $\mu$ is Pareto efficient. Suppose that $\mu(3)=a$ and that $\sigma\in A(\mu)$. Define the allocation $\widehat\sigma$ by\vspace{-0.2cm}
\[ \widehat\sigma(i) :=
\begin{cases}
\sigma(i) &\text{ if } i\in\{1,2\},\\
\varnothing &\text{ if } i=3 \text{ and } a\in\sigma(\{1,2\}),\\
a &\text{ if } i=3 \text{ and } a\notin\sigma(\{1,2\}).
\end{cases} \vspace{-0.2cm}\] 

Agents~1 and~2 are strictly better off at $\widehat\sigma$, and agent~3 is worse off only when $\widehat\sigma(3)=\varnothing\neq\mu(3)$, in which case $\mu(3)=a\in\Omega(\{1,2\}\mid\mu)=\{a,b\}$. Thus $\{1,2\}$ exclusion-blocks $\mu$ via $\widehat{\sigma}$, contradicting $\mu\in\EC(\omega,R)$. Therefore no $\sigma\in A(\mu)$ exists, so $A(\mu)=\emptyset$.

Given~\eqref{eq.beyondchar}, external stability can be verified by exhaustive enumeration. There are thirteen allocations, and each agent has six strict preference orderings over $O\cup\{\varnothing\}$, yielding $6^{3}=216$ preference profiles. For each profile, we compute the exclusion core directly from Definition~\ref{defi.exclusioncore}, confirm that it agrees with \eqref{eq.beyondchar}, and check that every allocation outside the exclusion core is exclusion-dominated by an allocation inside it. The verification succeeds for all 216 profiles.\footnote{The Python script used for this exhaustive verification is available from the authors upon request.}

The following profile illustrates the conclusion. Let $R$ be given by $R_{1}: a, b, \varnothing$, $R_{2}: a, b, \varnothing$, and $R_{3}: a, b, \varnothing$. Then, an allocation is Pareto efficient exactly when both objects are assigned, and among these it is exclusion core stable exactly when agent~3 does not hold $a$: if $\mu(3)=a$, then agents~1 and~2 hold $b$ and $\varnothing$ in some order, and giving $a$ to the one holding $b$ and $b$ to the one holding $\varnothing$ places an allocation in $A(\mu)$. Hence, by \eqref{eq.beyondchar}, $\EC(\omega,R)=\bigl\{(a,b,\varnothing),\ (b,a,\varnothing),\ (a,\varnothing,b),\ (\varnothing,a,b)\bigr\}$, and each of the nine remaining allocations is exclusion-dominated by a core allocation, as the following table records.

\begin{center}
\begin{tabular}{lllll}
$\mu$ & $S$ & $\nu$ & agents worse off & $\Omega(S\mid\mu)$\\[2pt]
$(\varnothing,\varnothing,\varnothing)$ & $\{1,2\}$ & $(a,b,\varnothing)$ & none & $\{a\}$\\
$(a,\varnothing,\varnothing)$ & $\{2\}$ & $(a,b,\varnothing)$ & none & $\emptyset$\\
$(\varnothing,a,\varnothing)$ & $\{1\}$ & $(b,a,\varnothing)$ & none & $\emptyset$\\
$(b,\varnothing,\varnothing)$ & $\{2\}$ & $(b,a,\varnothing)$ & none & $\emptyset$\\
$(\varnothing,b,\varnothing)$ & $\{1\}$ & $(a,b,\varnothing)$ & none & $\emptyset$\\
$(\varnothing,\varnothing,b)$ & $\{1\}$ & $(a,\varnothing,b)$ & none & $\emptyset$\\
$(\varnothing,\varnothing,a)$ & $\{1,2\}$ & $(a,b,\varnothing)$ & agent~3 & $\{a,b\}$\\
$(b,\varnothing,a)$ & $\{1,2\}$ & $(a,b,\varnothing)$ & agent~3 & $\{a,b\}$\\
$(\varnothing,b,a)$ & $\{1,2\}$ & $(b,a,\varnothing)$ & agent~3 & $\{a,b\}$\\
\end{tabular}\vspace{-0.2cm}
\end{center}

In each of the first six rows every member of $S$ is strictly better off at $\nu$ and no agent is worse off, so the exclusion-enforcement condition is vacuously satisfied. In each of the last three rows agents~1 and~2 are both strictly better off, and the only agent harmed is agent~3, whose allotment $\mu(3)=a$ lies in $\Omega(\{1,2\}\mid\mu)=\{a,b\}$. Moreover, in every row the blocking allocation $\nu$ coincides with the post-blocking allocation induced by $(S,\nu)$ at $\mu$, so each exclusion-blocking is self-contained and each of the nine allocations reaches the exclusion core by a self-contained exclusion-blocking path of length one, whereas in Example~\ref{exam:noexstable} external stability fails and $\mu^4$ needs a path of length two.\hfill~$\diamond$
\end{exam}

Together with Corollary~\ref{cor.acyclicvNM}, Examples~\ref{exam:noexstable} and~\ref{exam:beyondacyclic} show that acyclicity is sufficient but not necessary for external stability, and that non-acyclicity alone does not determine whether external stability holds. Example~\ref{exam:noexstable} is also instructive for Theorem~\ref{theorem:pathtoexclusion}: there the exclusion core cannot be reached from $\mu^4$ in one step, yet the self-contained exclusion-blocking path $\mu^4\mapsto\mu^3\mapsto\mu^2$ does reach it. A self-contained exclusion-blocking path to the exclusion core may therefore exist where external stability fails.

\end{document}